\newcommand\ceil[1]{\lceil#1\rceil}
\declaretheorem[name=Theorem]{thm}
\declaretheoremstyle[style=claim,qed=$\Diamond$]{claim}
\declaretheoremstyle[style=plain,qed=$\square$]{theorem}
\theoremstyle{plain}
\newtheorem{lemma}[thm]{Lemma}
\newtheorem{conjecture}{Conjecture}
\newtheorem{problem}{Open Problem}
\newtheorem{observation}[thm]{Observation}
\newtheorem{proposition}[thm]{Proposition}
\newtheorem{definition}{Definition}
\newtheorem{claim}{Claim}
\numberwithin{equation}{section}
\numberwithin{thm}{section}
\numberwithin{lemma}{section}
\numberwithin{corollary}{section}
\numberwithin{observation}{section}
\numberwithin{proposition}{section}
\numberwithin{example}{section}
\numberwithin{figure}{section}
\DeclareMathOperator{\subt}{{SEP}}
\DeclareMathOperator{\tsp}{{{TSP}}}
\DeclareMathOperator{\2ec}{{{2ECM}}}
\newcommand{\vtree}{\mathrm{\textsc{-Tree}}}
\newcommand{\join}{\mathrm{\textsc{-{Join}}}}
\newcommand{\TT}{\mathcal{T}}
\newcommand{\N}{N_2} 
\newcommand{\fu}{{f_u}}
\newcommand{\gu}{{g_u}}
\newcommand{\oU}{\overline{U}}
\newcommand{\Gux}{{G_x^{U}}}
\newcommand{\Goux}{{G_x^{\oU}}}
\newenvironment{cproof}
{\begin{proof}
 [Proof.]
 \vspace{-1.5\parsep}
}
{ \end{proof}}
\newcommand{\arash}[1]{\noindent{\bf {\color{blue!60!black}{\sc Arash:}  #1}}}
\newcommand{\alantha}[1]{\noindent{\bf {\color{magenta!80!	black}{\sc Alantha:}  #1}}}
\begin{document}{\bibliographystyle{alpha}}


\title{Towards Improving Christofides Algorithm on Fundamental
Classes by Gluing Convex Combinations of Tours\footnote{A preliminary version of these results was
  published in the Proceedings of the 27th Annual European Symposium
  on Algorithms (ESA) 2019~\cite{HNESA}.}}

\author{\textsc{Arash Haddadan\thanks{Carnegie Mellon University,
			Pittsburgh PA 15213,
			USA. {\tt{ahaddada@andrew.cmu.edu}}.}} \and \textsc{Alantha
		Newman\thanks{CNRS and Universit\'e Grenoble Alpes, 38000, Grenoble,
			France. {\tt{alantha.newman@grenoble-inp.fr}}.}}}

\maketitle

\begin{abstract}
We present a new approach for gluing tours over certain tight, 3-edge
cuts.  Gluing over 3-edge cuts has been used in algorithms for finding
Hamilton cycles in special graph classes and in proving bounds for
2-edge-connected subgraph problem, but not much was known in this
direction for gluing connected multigraphs.  We apply this approach to
the traveling salesman problem (TSP) in the case when the objective
function of the subtour elimination relaxation is minimized by a {\em
  $\theta$-cyclic point}: $x_e \in \{0,\theta, 1-\theta, 1\}$, where
the support graph is subcubic and each vertex is incident to at least
one edge with $x$-value 1.  Such points are sufficient to resolve TSP
in general.  For these points, we construct a convex combination of
tours in which we can reduce the usage of edges with $x$-value 1 from
the $\frac{3}{2}$ of Christofides algorithm to
$\frac{3}{2}-\frac{\theta}{10}$ while keeping the usage of edges with
fractional $x$-value the same as Christofides algorithm.  A direct
consequence of this result is for the Uniform Cover Problem for TSP:
In the case when the objective function of the subtour elimination
relaxation is minimized by a {\em $\frac{2}{3}$-uniform point}: $x_e
\in \{0, \frac{2}{3}\}$, we give a $\frac{17}{12}$-approximation
algorithm for TSP.  For such points, this lands us halfway between the
approximation ratios $\frac{3}{2}$ of Christofides algorithm and
$\frac{4}{3}$ implied by the famous ``four-thirds conjecture''.
\end{abstract}

\section{Introduction}\label{tsp}
In the \textsc{Traveling Salesperson Problem (TSP)} we are given an
integer $n\geq 3$ as the number of vertices and a non-negative cost
vector $c$ defined on the edges of the complete graph
$K_n=(V_n=\{1,\ldots,n\},E_n = {\{1,\ldots,n\}\choose 2})$. We wish to find
the minimum cost Hamilton cycle in the graph $K_n$ with respect to
costs $c$. This problem is NP-hard and it is even NP-hard to
approximate within any constant factor \cite{davids}. A natural
assumption is that the cost vector $c$ is metric: $c_{ij}+c_{jk}\geq
c_{ik}$ for $i,j,k\in V_n$.  This special case of TSP is called metric
TSP.  Metric TSP is NP-hard \cite{GareyJohnson}. In fact, metric TSP
is APX-hard and NP-hard to approximate with a ratio better than
220/219 \cite{apx-hardness}.  Since we never deal with non-metric TSP
in this paper, we henceforth refer to metric TSP by TSP.

The following linear programming relaxation for the TSP is known as
the subtour elimination relaxation.
\begin{equation*}
\min \{cx:\; \sum_{j\in V_n\setminus \{i\}} x_{ij} = 2  \text{ for } i\in V_n, \; \sum_{i\in U, j\notin U} x_{ij} \geq 2 \; \text{ for } \emptyset \subset U\subset V_n,\;  x\in [0,1]^{E_n}\}.
\end{equation*}
We let $\subt(K_n)$ denote the feasible region of this linear
programming relaxation.  Since vector $c$ is metric, any spanning, connected
Eulerian multi-subgraph of $K_n$ (henceforth a \textit{tour of $K_n$})
can be used to find a Hamilton cycle of $K_n$ of no greater
cost.\footnote{For a graph $G = (V_n, E)$, we define a \textit{tour of
$G$} to be a tour of $K_n$ that uses only edges in $E$.  Notice that
the incidence vector for such a tour lives in $\mathbb{R}^E$.} We
define $\tsp(K_n)$ to be the convex hull of incidence vectors of tours
of $K_n$.  The integrality gap of the subtour elimination relaxation
for the TSP is\footnote{ We use
$\mathbb{R}^{p}_{\geq 0}$ to denote $\{x\in\mathbb{R}^{p}, x_i \geq
0 \text{ for } i \in \{1, \dots, p\}, x\neq 0\}$.} 
\begin{equation}
g(\tsp) = \max_{n\geq 3,c\in \mathbb{R}^n_{\geq 0}} \frac{\min_{x\in \tsp(K_n)} cx}{\min_{x\in \subt(K_n)} cx}.
\end{equation}
By the characterization of the integrality gap by
Goemans~\cite{goemansblocking} (see also \cite{Carr2004}), $g(\tsp)$
can also be defined as
\begin{equation}\label{eq:IGtsp2}
g(\tsp) =\min\{\alpha: \alpha\cdot x \in \tsp(K_n) \; \text{for} \; n\geq 3, \; x\in \subt(K_n)\}.
\end{equation}

It is well-known that $g(\tsp)\geq \frac{4}{3}$.  Based on the
definition of $g(\tsp)$ in (\ref{eq:IGtsp2}), we can interpret this
lower bound as follows: for any $\epsilon>0$, there is a point $x$
such that $x\in \subt(K_n)$ and
$(\frac{4}{3}-\epsilon)x\notin \tsp(K_n)$. As for upper bounds, a
polyhedral analysis of the classical algorithm of Christofides proves
$g(\tsp)\leq \frac{3}{2}$, as well as providing a
$\frac{3}{2}$-approximation algorithm for the TSP \cite{chris,wolsey}.
 
\begin{restatable}[Polyhedral proof of Christofides
algorithm \cite{chris, wolsey}]{thm}{christofides}
		\label{3/2}
		If $x\in \subt(K_n)$, then $\frac{3}{2}x\in \tsp(K_n)$.  
\end{restatable}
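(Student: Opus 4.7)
The plan is to mimic Christofides' algorithm polyhedrally: first build a convex combination of tours that sits componentwise at most $\tfrac{3}{2}x$, and then use upward-closure of $\tsp(K_n)$ to reach $\tfrac{3}{2}x$ itself.

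First I would show that $y := \tfrac{n-1}{n}x$ lies in Edmonds' spanning tree polytope of $K_n$. Summing the degree equalities over any $U \subsetneq V_n$ with $U \neq \emptyset$ gives $\sum_{e \in E(U)} x_e = |U| - \tfrac{1}{2}\sum_{e \in \delta(U)} x_e \leq |U| - 1$ via the subtour constraint $\sum_{e \in \delta(U)} x_e \geq 2$, while $\sum_{e \in E_n} x_e = n$. Scaling by $(n-1)/n$ yields $\sum_{e \in E(U)} y_e \leq |U| - 1$ and $\sum_e y_e = n - 1$, so by Edmonds' theorem I can write $y = \sum_T \lambda_T \chi^T$ as a convex combination of incidence vectors of spanning trees of $K_n$.

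Second, for each spanning tree $T$ with odd-degree vertex set $Q_T$, I would verify that $\tfrac{1}{2}x$ lies in the $Q_T$-join dominant polyhedron. By the Edmonds--Johnson theorem, this dominant is described by $y \geq 0$ together with $\sum_{e \in \delta(U)} y_e \geq 1$ for every $U \subseteq V_n$ with $|U \cap Q_T|$ odd; since $|Q_T|$ is even, any such $U$ is a non-trivial proper subset of $V_n$, and the subtour constraint gives $\sum_{e \in \delta(U)} \tfrac{1}{2}x_e \geq 1$. Hence $\tfrac{1}{2}x = \sum_j \mu^T_j \chi^{J^T_j} + z^T$ for $Q_T$-joins $J^T_j$, non-negative coefficients $\mu^T_j$ summing to $1$, and some $z^T \geq 0$.

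Third, I would glue the pieces: $T \cup J^T_j$ is spanning, connected, and has all-even degrees, hence is a tour. Therefore the convex combination $\sum_{T,j} \lambda_T \mu^T_j \chi^{T \cup J^T_j}$ lies in $\tsp(K_n)$ and equals $\tfrac{n-1}{n}x + \tfrac{1}{2}x - \sum_T \lambda_T z^T$, which is componentwise at most $\tfrac{3}{2}x$. Finally, to lift this dominated point to $\tfrac{3}{2}x$ exactly, I would note that doubling any edge of any tour yields another tour, so the recession cone of $\tsp(K_n)$ contains $\{2\chi^e\}_{e \in E_n}$ and hence all of $\mathbb{R}^{E_n}_{\geq 0}$; adding the non-negative slack places $\tfrac{3}{2}x$ in $\tsp(K_n)$. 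The step that most deserves care is this last upward-closure argument, which converts the approximation-style domination into genuine polyhedral membership.
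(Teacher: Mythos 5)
Your proof is correct, and it is the classical Wolsey-style argument; the paper's proof takes a close but not identical route. The paper decomposes $x$ itself (no scaling) using the $v$-tree polytope (Observation \ref{obs:subtinvtree}): since a $v$-tree has exactly $n$ edges and $x(E_n)=n$, one gets an exact convex combination of connected spanning subgraphs, and it then uses the exact $O$-join polytope description \eqref{o-join-exact} to write $\frac{x}{2}$ exactly as a convex combination of $O_T$-joins. The net effect is that $\frac{3}{2}x$ is an exact convex combination of tours and no slack ever appears. Your version instead scales to $\frac{n-1}{n}x$ to enter the spanning tree polytope and uses the join \emph{dominant}, so you must finish with the recession-cone argument; that step is legitimate (adding two parallel copies of an edge to a tour yields a tour, and the cone generated by $\{2\chi^e\}_{e\in E_n}$ is all of $\mathbb{R}^{E_n}_{\geq 0}$), and it is consistent with how the paper itself speaks of vectors ``dominating'' convex combinations of tours, but it is an extra moving part the paper avoids. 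The paper's choice of $v$-trees is also not incidental: the rest of the paper needs $v$-trees (and rainbow $v$-trees, Theorem \ref{boyd-sebo-rainbow}) to control the degree of a distinguished vertex, which plain spanning trees do not provide. One small notational caution: you should write the glued object as the multigraph sum $T+J^T_j$ (so $\chi^{T+J^T_j}=\chi^{T}+\chi^{J^T_j}$), not the set union $T\cup J^T_j$; if $T$ and $J^T_j$ share an edge, the set union need not have all degrees even, and your accounting identity for the convex combination only holds with multiplicities.
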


For more than four decades, there was no result that showed for all
$x\in \subt(K_n)$, the vector $(\frac{3}{2}-\epsilon)x \in \tsp(K_n)$
for some constant $\epsilon >0$.  After we submitted
this journal paper, this was resolved for a tiny
value of $\epsilon$~\cite{karlin2020slightly,karlin2021slightlyArxiv}.
Motivated by the lower bound of
$\frac{4}{3}$ on $g(\tsp)$, the following has been conjectured and is
wide open.
\begin{conjecture}[The four-thirds conjecture]\label{conj:4/3} If
$x\in \subt(K_n)$, then $\frac{4}{3}x\in \tsp(K_n)$.
\end{conjecture}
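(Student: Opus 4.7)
The plan is to establish that $\tfrac{4}{3}x$ lies in $\tsp(K_n)$ by exhibiting it as an explicit convex combination of incidence vectors of tours, using the Goemans/Carr--Vempala characterization that underlies the reformulation (\ref{eq:IGtsp2}). Since the conjecture has resisted direct attack for over four decades, any plausible attempt must proceed by first reducing to a small set of ``fundamental'' extreme points of $\subt(K_n)$, and only then building the distribution.

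First, reduce the conjecture to structured instances. It suffices to prove the bound at extreme points of $\subt(K_n)$. Standard splitting-off and support-graph manipulations let us assume the support graph $G_x$ is subcubic, 3-edge-connected, and has no proper tight 3-edge cuts; further reductions push us into half-integer or $\theta$-cyclic support as in the main body of the paper. On each such fundamental class the task becomes combinatorial: exhibit a probability distribution $\lambda$ over tours $H$ of $G_x$ with $\mathbb{E}_{H\sim\lambda}[\chi^H]\le \tfrac{4}{3}x$ edge-wise.

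Second, construct the distribution. The natural framework, following the polyhedral proof of Theorem \ref{3/2}, is to build each tour as a spanning connected subgraph plus a parity-correcting $T$-join on its odd vertices. The $\tfrac{3}{2}$ bound decomposes as $1 + \tfrac{1}{2}$; to reach $\tfrac{4}{3}=1+\tfrac{1}{3}$ we must correlate the two pieces so that the odd-vertex set $T(H)$ typically admits a $T$-join that is dominated fractionally by $\tfrac{1}{3}x$ rather than $\tfrac{1}{2}x$. Concretely, one would sample a spanning connected multi-subgraph from a carefully chosen $x$-respecting distribution (maximum-entropy or $\lambda$-uniform trees are the standard candidates), then certify that every odd cut $\delta(U)$ satisfies $\mathbb{E}[\,|\delta(U)\cap T(H)|\,]\cdot$(something)$\le \tfrac{1}{3}x(\delta(U))$ via the $T$-join polytope. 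Finally, one glues distributions across tight 3-edge cuts using the apparatus this paper develops, so the bound on fundamental classes propagates to all of $\subt(K_n)$ by induction on the 3-edge-cut decomposition of $G_x$.

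The main obstacle is intrinsic: $\tfrac{4}{3}$ matches the best known integrality-gap lower bound, so the argument must be essentially extremal, with no slack to discard. In the tree-plus-$T$-join paradigm this manifests as two genuinely open sub-problems — (a) showing that some $x$-respecting distribution on connected spanning subgraphs produces odd-vertex sets whose cheapest $T$-join is in expectation at most $\tfrac{1}{3}x$, and (b) verifying that gluing over a tight 3-edge cut does not force parity corrections that consume the savings obtained locally. Indeed, even on the $\tfrac{2}{3}$-uniform fundamental class that is the focus of this paper, the methods here give only $\tfrac{17}{12}$ rather than $\tfrac{4}{3}$, exactly because the correlation between the connected-subgraph distribution and the induced $T$-join demands is not yet tight enough. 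A complete proof of Conjecture \ref{conj:4/3} would either sharpen this correlation to optimality on every fundamental class, or sidestep the decomposition entirely via a direct polyhedral inclusion certifying that $\tfrac{4}{3}x$ lies in the relevant perfect-matching polytope for every $x\in\subt(K_n)$.
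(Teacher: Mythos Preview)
The statement you are attempting to prove is Conjecture~\ref{conj:4/3}, the four-thirds conjecture. The paper does not prove this statement; it is stated as an open conjecture and remains open. There is therefore no ``paper's own proof'' to compare your proposal against.

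Your proposal is not a proof either, and you acknowledge this explicitly: you write that the correlation argument needed in step~(a) is ``genuinely open,'' that the gluing in step~(b) is not known to preserve the required savings, and that even on the $\tfrac{2}{3}$-uniform class the paper's methods reach only $\tfrac{17}{12}$ rather than $\tfrac{4}{3}$. What you have written is a reasonable roadmap of how one might try to attack the conjecture via the tree-plus-$T$-join paradigm and fundamental classes, together with an honest assessment of where the obstructions lie. But a roadmap with named gaps is not a proof. In particular, the central assertion---that some $x$-respecting distribution on spanning connected subgraphs yields odd-vertex sets whose cheapest $T$-join is dominated in expectation by $\tfrac{1}{3}x$---is precisely the content of the conjecture restated in the Christofides framework, and you offer no mechanism to establish it.

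If your intent was to summarize the state of the art and the natural line of attack, the summary is accurate. If your intent was to supply a proof, there is none here: the proposal reduces the conjecture to sub-problems that are themselves open and at least as hard as the original statement.
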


Despite the lack of progress towards resolution of
Conjecture \ref{conj:4/3}, there has been great success in providing
new bounds on $g(\tsp)$ for special cases in the past
decade~\cite{oss,momkesvensson,sebovygen}.  Next we present two
equivalent formulations of Conjecture \ref{conj:4/3} that are relevant
for our results.

\subsection{Fundamental Classes for TSP}\label{intro:fundtsp}
One approach to the four-thirds conjecture is to consider fundamental
classes for TSP. Fundamental classes of points were introduced by Carr
and Ravi \cite{carrravi} and further developed by Boyd and
Carr \cite{boydcarr} and Carr and Vempala \cite{Carr2004}. A set of
vectors $\mathcal{X}$ is a \textit{fundamental class for TSP} if (i)
for every $x\in \mathcal{X}$ we have $x\in \subt(K_n)$ and (ii)
proving $\alpha\cdot x\in \tsp(K_n)$ for all $x\in \mathcal{X}$
implies $g(\tsp)\leq \alpha$.

\subsubsection{Cyclic Points}
  For $x \in \subt(K_n)$, we define $G_x = (V_n, E_x)$ to be the
subgraph of $K_n$ whose edge set corresponds to the support of $x$
(i.e., $E_x = \{e \;:\; x_e >0 \}$).\footnote{We sometimes abuse notation and treat $x$ as a vector in
$\mathbb{R}^{E_x}$.}
The set of \textit{cyclic
points} form a fundamental class for TSP with a very simple structure~\cite{Carr2004}.
\begin{definition}\label{def:cyclicpts}
	A point $x$ is called a $\theta$-cyclic point for some $0<\theta\leq\frac{1}{2}$ if:
	\begin{itemize}
		\item Vector $x$ is in $ \subt(K_n)\cap \{0,\theta,1-\theta,1\}^{E_n}$.
		\item The support graph of $x$, $G_x=(V_n,E_x)$, is subcubic.
		\item For each $v\in V_n$ there is at least one edge $e\in \delta(v)$ with $x_e = 1$.
	\end{itemize}
\end{definition}
Observe that for a $\theta$-cyclic point $x$ we have: (i) the set of
1-edges in $G_x$, $W_x=\{e: x_e = 1\}$, forms vertex-disjoint paths of
$G_x$, (ii) the fractional edges in $G_x$, $H_x=\{e:x_e<1\}$, form
vertex-disjoint
cycles of $G_x$.  It is easy to see that if $\theta < \frac{1}{2}$
all the cycles in $H_x$ have even length since $x\in \subt(K_n)$.
Conjecture \ref{conj:4/3} can be restated as follows.

\begin{conjecture}\label{half-int-tsp}
For any $\theta \in (0,1)$ and any $\theta$-cyclic point $x\in
\mathbb{R}^{E_n}$, we have $\frac{4}{3}x\in \tsp(K_n)$.
\end{conjecture}

Similarly, Theorem \ref{3/2} can also be restated as follows.
\begin{thm}
For any $\theta \in (0,1)$ and any $\theta$-cyclic point
$x\in \mathbb{R}^{E_n}$, we have $\frac{3}{2}x\in \tsp(K_n)$.  \end{thm}
The main result of this paper is to show that we can save
on the 1-edges of $\theta$-cyclic points.

\vspace*{5pt}
\noindent\fbox{%
	\parbox{\textwidth}{%
		\vspace*{1pt}
	\begin{restatable}{thm}{introSavingOneEdge2}
	\label{main}
	Let $x\in \mathbb{R}^{E_n}$ be a $\theta$-cyclic point. Define vector $y$ as
	follows: $ y_e = \frac{3}{2}-\frac{\theta}{10}$ for $e \in
	W_x$, $y_e = \frac{3}{2}x_e$ for $e\in H_x$ and $y_e = 0$ for
	$e \notin E_x$.  Then $y\in \tsp(K_n)$.
\end{restatable}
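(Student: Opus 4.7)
The plan is to exhibit $y$ as a convex combination of tours of $K_n$ via a gluing construction over tight $3$-edge cuts. Theorem \ref{3/2} already gives $\frac{3}{2}x \in \tsp(K_n)$, so the content of Theorem \ref{main} is that one can save a $\theta/10$ fraction on each $1$-edge relative to the Christofides combination without adding anything on the fractional edges.

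First, I would record the rigid structure of $G_x$. From the $\theta$-cyclic definition, every vertex has degree $2$ or $3$; a degree-$2$ vertex must have both edges equal to $1$ (no pair of values in $\{\theta,1-\theta,1\}$ with $\theta<\tfrac12$ other than $\{1,1\}$ sums to $2$), so internal vertices of $W_x$-paths are degree $2$ and the endpoints sit as degree-$3$ vertices on fractional cycles. At a degree-$3$ vertex the three $x$-values must sum to $2$, forcing exactly one of each of $\theta$, $1-\theta$ and $1$. Consequently, the fractional cycles in $H_x$ alternate $\theta$ and $1-\theta$ edges (hence have even length), and each cycle vertex carries exactly one pendant $1$-path. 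Contracting each maximal $1$-path to a single edge reduces to a cubic graph consisting of even fractional cycles linked by single $1$-edges, on which the convex-combination problem is equivalent.

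Second, I would build local convex combinations piece-by-piece. The tight $3$-cuts of interest (of value $\theta+(1-\theta)+1=2$) arise around sets $S$ whose boundary consists of one $1$-edge, one $\theta$-edge and one $(1-\theta)$-edge; paradigmatic examples are arcs of a fractional cycle together with their pendant $1$-edges. For each such piece I would write down a distribution over partial tours (Eulerian multi-subgraphs with prescribed parity at the boundary) whose edge marginals match the target $y$-values inside the piece, exploiting the two complementary near-perfect matchings given by the alternating $\theta$- and $(1-\theta)$-edges of the even cycle to shift a $\theta/10$ share of weight off each $1$-edge and onto the slack already present in $\frac{3}{2}x$ on the fractional cycle.

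Finally, the local distributions are glued together across shared tight $3$-cuts: at each cut the two adjacent pieces must produce the same joint distribution over the finitely many crossing patterns (which of the three boundary edges are used and with what multiplicity), and identifying these marginals yields a global convex combination of connected Eulerian multigraphs, i.e.\ of tours. I expect the gluing step to be the main obstacle, since it requires simultaneously enforcing compatible crossing-pattern marginals at every tight $3$-cut while guaranteeing connectivity of the resulting multigraphs; the explicit constant $\theta/10$ should emerge as the extremal slack that can be pushed to $1$-edges uniformly while keeping all these local and cross-cut constraints feasible.
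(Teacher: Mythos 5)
Your high-level skeleton---glue local distributions over tight $3$-edge cuts after matching crossing-pattern marginals, and reduce to pieces without such cuts---is indeed the strategy of the paper, but the proposal stops exactly where the proof has to start, and the two steps you yourself flag as obstacles are the entire content of the theorem. Most importantly, you give no mechanism that produces the saving of $\frac{\theta}{10}$, and the one you gesture at (shifting weight off the $1$-edges ``onto the slack already present'' on the fractional cycles) is not available, since the theorem demands exactly $\frac{3}{2}x_e$ on fractional edges. The actual source of the constant is a parity-correction argument: one partitions $W_x$ into \emph{five} induced matchings $M_1,\dots,M_5$ with controlled intersection with small cuts (via Brooks' theorem on the $4$-regular graph obtained by contracting the $1$-edges, plus its own gluing induction), writes $x$ as a convex combination of connectors in which both endpoints of every edge of $M_i$ have degree two (rainbow $v$-trees), and then proves that the vector with value $\frac{1-\theta}{2}$ on $M_i$, $\frac{1}{2}$ on the remaining $1$-edges and $\frac{x_e}{2}$ on fractional edges lies in the $O$-join polytope for the odd-degree sets of these connectors. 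The delicate cases are precisely the tight $3$-cuts crossed an odd number of times, which is why critical cuts must be glued away first and why the matchings need the extra cut conditions. Averaging over the five matchings gives $\frac{1}{5}\cdot\frac{\theta}{2}=\frac{\theta}{10}$; in your proposal the constant is simply asserted to ``emerge as the extremal slack.''

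The gluing step is likewise not established. You assert that adjacent pieces must produce the same distribution over crossing patterns, but with doubled edges allowed there are up to $13$ patterns at a degree-$3$ vertex, and enforcing a matching $13$-dimensional profile at every tight cut simultaneously is not obviously feasible. The paper makes this work by (i) restricting to tours in which every vertex keeps at least one copy of its $1$-edge and has degree $2$ or $4$, cutting the patterns to eight; (ii) observing that once the doubled-edge frequencies are pinned down ($\frac{1}{2}-\frac{\theta}{10}$ on $1$-edges, $\frac{x_e^2}{2}$ on fractional edges), the eight pattern frequencies at a vertex satisfy a linear system with a unique solution parametrized by the single frequency of the pattern $\{2e_u\}$; and (iii) showing that this one remaining degree of freedom at the pseudovertex can be set to any prescribed value by varying the fraction of connectors in which that vertex is a leaf. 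Connectivity after gluing is secured by rooting the $v$-trees at the pseudovertex so that each tour restricted to one side of the cut is already connected. Without these ingredients the proposal is a plan rather than a proof.
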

	}%
}
\vspace*{3pt}

In fact a bound on $g(\tsp)$ restricted to $\frac{1}{2}$-cyclic points
would also provide a bound on $g(\tsp)$ when restricted to all
half-integral points of the subtour elimination
relaxation~\cite{Carr2004}, a special case that has received some
attention~\cite{carrravi,boydsebo,karlin2019improved}, and was highlighted
in a recent conjecture of Schalekamp, Williamson and van Zuylen
stating that the maximum on $g(\tsp)$ is achieved for half-integral
points of the subtour elimination polytope~\cite{swz}.  For
$\frac{1}{2}$-cyclic points Theorem \ref{main} implies the following.
\begin{restatable}{corollary}{introSavingOneEdge}
	\label{maincorollary} Let $x$ be a $\frac{1}{2}$-cyclic
	point. Define vector $y$ as follows: $ y_e
	= \frac{3}{2}-\frac{1}{20}$ for $e \in W_x$, $y_e
	= \frac{3}{4}$ for $e\in H_x$ and $y_e = 0$ for $e \notin E_x$. Then $y\in \tsp(K_n)$.
\end{restatable}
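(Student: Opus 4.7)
The plan is to derive this corollary as an immediate specialization of Theorem \ref{main} to the case $\theta = \tfrac{1}{2}$. The first step is to observe that Definition \ref{def:cyclicpts} with $\theta = \tfrac{1}{2}$ forces $x_e \in \{0,\tfrac{1}{2},1\}$, since the two nominal fractional values $\theta$ and $1-\theta$ coincide. Consequently every edge $e \in H_x$ satisfies $x_e = \tfrac{1}{2}$, and the support structure promised by the definition (vertex-disjoint paths of $1$-edges joined through vertex-disjoint fractional cycles) is already guaranteed.

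The second step is to substitute $\theta = \tfrac{1}{2}$ into the vector $y$ produced by Theorem \ref{main}. For $e \in W_x$ this gives $y_e = \tfrac{3}{2} - \tfrac{1/2}{10} = \tfrac{3}{2} - \tfrac{1}{20}$; for $e \in H_x$ this gives $y_e = \tfrac{3}{2}\, x_e = \tfrac{3}{2}\cdot\tfrac{1}{2} = \tfrac{3}{4}$; and $y_e = 0$ outside $E_x$. These are exactly the coordinates of the vector $y$ in the corollary statement, so the conclusion $y \in \tsp(K_n)$ follows directly from Theorem \ref{main}.

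There is no real obstacle at this level: the corollary is a pure instantiation and all of the difficulty is hidden inside the proof of Theorem \ref{main}, which handles every $0 < \theta \le \tfrac{1}{2}$ uniformly. The only small thing to verify is that the boundary case $\theta = \tfrac{1}{2}$ (where the even-length argument for cycles in $H_x$ sketched after Definition \ref{def:cyclicpts} becomes degenerate, since fractional cycles of any length are now feasible in $\subt(K_n)$) is still a legitimate input to Theorem \ref{main}; it is, because the theorem's hypothesis explicitly admits $\theta \le \tfrac{1}{2}$ rather than $\theta < \tfrac{1}{2}$.
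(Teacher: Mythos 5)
Your proposal is correct and matches the paper exactly: Corollary \ref{maincorollary} is obtained by instantiating Theorem \ref{main} at $\theta = \tfrac{1}{2}$, which gives $\tfrac{3}{2}-\tfrac{1}{20}$ on $W_x$ and $\tfrac{3}{2}\cdot\tfrac{1}{2}=\tfrac{3}{4}$ on $H_x$, and the theorem's hypothesis indeed admits $\theta=\tfrac{1}{2}$. Your additional check that the boundary case is a legitimate input is a reasonable sanity verification but nothing more is needed.
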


Theorem \ref{main} gives an improved bound for recently
studied special case of {\em uniform points}, which form another
fundamental class for TSP.

\subsubsection{Uniform Points}\label{sec:uniformpts}

A point $x\in \subt(K_n)$ is a {\em $\frac{2}{k}$-uniform point}
for some $k\in \mathbb{Z}_{\geq 3}$ if for each edge $e
  \in E_n$, either $x_e$ is a multiple of $\frac{2}{k}$ or $x_e =
  0$. It is clear that $g(\tsp)\leq \alpha$ if and only if for all
$k\in \mathbb{Z}_{\geq 3}$ and all $\frac{2}{k}$-uniform points $x$ we
have $\alpha\cdot x \in \tsp(K_n)$. Therefore, $\frac{2}{k}$-uniform
points form a fundamental class for TSP. We can restate the
four-thirds conjecture as follows.
\begin{restatable}{conjecture}{fourthirdseq}\label{conj:4/3eq}
	For any integer $k\geq 3$ and $\frac{2}{k}$-uniform point $x\in\mathbb{R}^{E_n}$ we have $\frac{4}{3} x\in \tsp(K_n)$.
\end{restatable}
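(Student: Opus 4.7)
The plan is to prove Conjecture \ref{conj:4/3eq} constructively: for every $\frac{2}{k}$-uniform $x\in\subt(K_n)$, exhibit a distribution over tours of $K_n$ whose average incidence vector equals $\frac{4}{3}x$. The first step is a structural reduction that turns the LP statement into a purely combinatorial one. Writing $x_e = 2a_e/k$ with $a_e\in \mathbb{Z}_{\geq 0}$, the constraints $x(\delta(v))=2$ and $x(\delta(U))\geq 2$ become $\sum_{e\in\delta(v)} a_e = k$ and $\sum_{e\in\delta(U)} a_e \geq k$; hence the multigraph $M_x$ that places $a_e$ parallel copies of each $e\in E_x$ is a $k$-regular, $k$-edge-connected multigraph on $V_n$. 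This rigid structure is exactly the kind of object on which the $3$-edge-cut gluing machinery developed in the present paper can be leveraged: one may decompose $M_x$ along its (tight) $3$-edge cuts into essentially $4$-edge-connected bricks, solve the problem on each brick, and glue the local convex combinations of tours.

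The second step is to build the desired distribution in the Christofides ``spanning tree $+$ parity correction'' template, but with a carefully chosen random tree instead of an arbitrary one. Concretely, I would sample a spanning tree $T$ of $M_x$ from a maximum-entropy or $\lambda$-thin distribution whose edge marginals coincide with a scaled copy of $x$, and then add a minimum $T_{\text{odd}}$-join $J(T)$ to turn $T$ into a tour. By linearity of expectation, the mean incidence vector of $T\cup J(T)$ is $x + \Ex[\chi_{J(T)}]$, so the conjecture follows provided one can show $\Ex[\chi_{J(T)}] \leq \tfrac{1}{3}\,x$ edgewise. By LP duality for the $T$-join polyhedron this is equivalent to exhibiting, for every sampled $T$, a fractional packing of $T_{\text{odd}}$-cuts whose total $x$-mass is at most $\tfrac{1}{3}$; the uniform structure and the high connectivity of $M_x$ should give strong control over which cuts can be tight under a max-entropy tree, and the brick-level decomposition from step one should let this analysis be done inductively.

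The hard part---and the reason Conjecture \ref{conj:4/3} has been open for over four decades---is precisely the $\tfrac{1}{3}$-bound on the parity correction. The classical analysis of Christofides yields only $\tfrac{1}{2}\,x$ (giving Theorem \ref{3/2}), and recent breakthroughs on half-integral and on general subtour points have shaved only tiny constants off $\tfrac{1}{2}$; getting all the way down to $\tfrac{1}{3}$ would require a genuinely new structural understanding of the tight odd cuts of a random spanning tree in a $k$-regular, $k$-edge-connected multigraph. A reasonable intermediate milestone is the case $k=3$, where $M_x$ is a cubic, $3$-edge-connected multigraph: combining the $3$-edge-cut gluing framework of the present paper with a hierarchical decomposition of such multigraphs could plausibly push the $\tfrac{17}{12}$-bound implicit in Theorem \ref{main} and Corollary \ref{maincorollary} below $\tfrac{4}{3}$, which would already constitute the first concrete evidence for Conjecture \ref{conj:4/3eq} on any fundamental class.
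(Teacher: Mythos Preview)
The statement you are attempting to prove is a \emph{conjecture}, equivalent to the four-thirds conjecture (Conjecture~\ref{conj:4/3}); the paper does not prove it, and indeed it has been open for over four decades. So there is no ``paper's own proof'' to compare against.

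Your proposal is not a proof but a research programme, and you identify the gap yourself: the entire difficulty lies in establishing $\Ex[\chi_{J(T)}]\le \tfrac{1}{3}x$ edgewise, or equivalently in bounding the parity-correction cost by $\tfrac{1}{3}x$ rather than the $\tfrac{1}{2}x$ that the classical $O$-join argument gives. Everything else in your outline---the reduction to $k$-regular $k$-edge-connected multigraphs (which is exactly Proposition~\ref{k-ec,k-reg}), the Christofides template of tree plus $T$-join, the appeal to max-entropy trees and to LP duality for the $T$-join polyhedron---is standard scaffolding that has been understood for years and does not touch the actual obstacle. Saying that ``the uniform structure and the high connectivity of $M_x$ should give strong control over which cuts can be tight'' is a hope, not an argument; the known max-entropy analyses (e.g.\ \cite{karlin2019improved}) extract only a tiny $\epsilon$ below $\tfrac{1}{2}$, and nothing in your plan indicates how to bridge the gap from $\tfrac{1}{2}-\epsilon$ to $\tfrac{1}{3}$. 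Your closing paragraph concedes this explicitly.

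One further remark: your suggestion that the $3$-edge-cut gluing machinery of this paper could push the $k=3$ case below $\tfrac{4}{3}$ misreads what the gluing buys. The gluing reduces to base cases without critical cuts, but on those base cases the paper still pays $\tfrac{3}{2}x_e$ on fractional edges; the saving is only on the $1$-edges. For $\tfrac{2}{3}$-uniform points there are no $1$-edges in the sense of $\theta$-cyclic points, and the $\tfrac{17}{12}$ bound of Theorem~\ref{uc-improved} comes from combining the cyclic-point result with an entirely separate $2$-factor argument (Theorem~\ref{bit13}), not from gluing alone.
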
 

Seb\H{o} et al. considered a weakening of Conjecture \ref{conj:4/3eq}
in the case when $k=3$~\cite{sebouniformcover}.  For a
$\frac{2}{3}$-uniform point $x\in \mathbb{R}^{E_n}$, we have
$\frac{3}{2}x\in \tsp(K_n)$ by Theorem \ref{3/2}.  Thus, they asked if
there is constant $\epsilon>0$ such that the vector
$(\frac{3}{2}-\epsilon)\cdot x\in \tsp(K_n)$.  Of course, the
four-thirds conjecture itself implies that the value of $\epsilon$ is
at least $\frac{1}{6}$.

\begin{restatable}{conjecture}{seboconj}
	\label{conj:uc-cubic}
	If $x\in \mathbb{R}^{E_n}$ is a $\frac{2}{3}$-uniform point, then $\frac{4}{3} x \in \tsp(K_n)$. 
\end{restatable}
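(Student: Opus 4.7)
Since $x$ is $\frac{2}{3}$-uniform and each vertex satisfies $\sum_{e \in \delta(v)} x_e = 2$, every vertex of $G_x$ has exactly three incident edges, so $G_x$ is cubic, and the subtour cut constraints further force $G_x$ to be essentially 3-edge-connected. Multiplying through by $\frac{4}{3}$, the conjecture is equivalent to producing a convex combination of tours of $G_x$ in which every edge is used with average multiplicity $\frac{4}{3}\cdot\frac{2}{3} = \frac{8}{9}$, which strictly strengthens the multiplicity of $1$ per edge that Theorem \ref{3/2} already delivers.

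The natural plan is to route the argument through Theorem \ref{main}. By Petersen's theorem the cubic graph $G_x$ has a perfect matching $M$, and the complement $F = E(G_x) \setminus M$ is a vertex-disjoint union of cycles. Promoting $M$ to value $1$ (subdividing $M$-edges if necessary so that the 1-edges form vertex-disjoint paths as required by Definition \ref{def:cyclicpts}) and $F$ to value $\frac{1}{2}$ yields a $\frac{1}{2}$-cyclic point $x'$ on a related graph, so Corollary \ref{maincorollary} bounds the usage of each $M$-edge by $\frac{3}{2} - \frac{1}{20}$ and each $F$-edge by $\frac{3}{4}$. Because the all-$\frac{1}{3}$ vector on $E(G_x)$ lies in the perfect-matching polytope of any 3-edge-connected cubic graph (Edmonds' theorem), one can average this construction over a distribution on $M$ in which every edge appears as a 1-edge with probability $\frac{1}{3}$, spreading the saving uniformly across $E(G_x)$. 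The tight 3-edge-cut gluing machinery developed in the paper would enter recursively: split $G_x$ along each tight 3-edge cut (every 3-edge cut of a cubic graph is tight for the uniform point), solve on each essentially 4-edge-connected piece, and glue the resulting tour distributions back together along the cuts.

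The main obstacle is that the $\frac{\theta}{10} = \frac{1}{20}$ saving per 1-edge provided by Corollary \ref{maincorollary} is far too small for the four-thirds target. After averaging with matching probability $\frac{1}{3}$, the effective improvement over Christofides amounts to only about $\frac{1}{60}$ per edge, yielding a ratio no better than $\frac{3}{2} - \frac{1}{60}$; this is even weaker than the $\frac{17}{12}$ halfway bound announced in the abstract, which must therefore be obtained by a more direct, problem-specific construction tailored to $\frac{2}{3}$-uniform points rather than through a reduction to cyclic points. Two secondary difficulties also arise: the cycles in $F$ can have odd length and so need to be handled either by a careful choice of $M$ or through the gluing step, and the subdivisions required to meet Definition \ref{def:cyclicpts} must be undone in a cost-preserving way. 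A proof of Conjecture \ref{conj:uc-cubic} would therefore require a qualitatively stronger uniform-saving lemma than Theorem \ref{main}, or an entirely new idea, and the conjecture seems to remain genuinely open after the techniques presented here.
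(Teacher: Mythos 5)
You have correctly identified the situation: Conjecture \ref{conj:uc-cubic} is stated in the paper as an \emph{open conjecture} (a weakening of the four-thirds conjecture due to Seb\H{o} et al.), and the paper offers no proof of it. The strongest result the paper actually establishes for $\frac{2}{3}$-uniform points is Theorem \ref{uc-improved}, the $\frac{17}{12}$ bound, so your conclusion that the statement remains genuinely open after the techniques presented here is exactly right, and your diagnosis of why the reduction through Corollary \ref{maincorollary} falls short (the per-1-edge saving of $\frac{1}{20}$ is far too small to reach $\frac{4}{3}$) is sound. A few corrections to your supporting details. First, the arithmetic: averaging over matchings gives a per-edge usage of $1-\frac{1}{60}$ against $x_e=\frac{2}{3}$, i.e.\ a ratio of $\frac{3}{2}-\frac{1}{40}$, not $\frac{3}{2}-\frac{1}{60}$; this is exactly the bound the paper records as a direct consequence of Theorem \ref{main}. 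Second, the $\frac{17}{12}$ bound is \emph{not} obtained by abandoning the reduction to cyclic points: the paper's Theorem \ref{application-uniformcover} combines the cyclic-point vector $y^2$ (from Theorem \ref{main} applied to the $\frac{1}{2}$-cyclic point induced by a 2-factor covering all 3- and 4-edge cuts, via Theorem \ref{bit13}) with a second vector $y^1$ built from a $\frac{2}{5}$-fractional $r$-tree argument on the 5-edge-connected contraction $G_x/\mathcal{C}$; the convex combination $\frac{7}{9}y^1+\frac{2}{9}y^2$ gives $\frac{17}{12}$. Third, your two ``secondary difficulties'' are non-issues: for $\theta=\frac{1}{2}$ the fractional cycles are permitted to be odd (the parity restriction in the paper applies only when $\theta<\frac{1}{2}$), and no subdivision is needed since the complement of a 2-factor in a cubic graph is a perfect matching, so the 1-edges are already vertex-disjoint single-edge paths.
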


One application of Theorem \ref{main} is to show that the value of
$\epsilon$ is at least $\frac{1}{12}$, which brings us ``halfway''
towards resolving Conjecture \ref{conj:uc-cubic}.  The proof of
Theorem \ref{uc-improved} can be found in Section
\ref{sec:thm1point5}.

\vspace*{5pt}
\noindent\fbox{%
	\parbox{\textwidth}{%
		\vspace*{2pt}
	\begin{thm}\label{uc-improved}
	Let $x\in \mathbb{R}^{E_n}$ be a $\frac{2}{3}$-uniform point. Then $\frac{17}{12} x\in \tsp(K_n)$. If $G_x$ is Hamiltonian, then $\frac{87}{68} x \in \tsp(K_n)$. 
\end{thm}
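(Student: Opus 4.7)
The plan is to bootstrap Corollary~\ref{maincorollary} (the $\tfrac{1}{2}$-cyclic savings) into a statement about the $\tfrac{2}{3}$-uniform point $x$. The key opening observation is that $x_e=\tfrac{2}{3}$ on every $e\in E_x$, so $G_x$ is $3$-regular; moreover $x(\delta(U))\geq 2$ for every cut forces $|\delta(U)\cap E_x|\geq 3$, so $G_x$ is $3$-edge-connected. Both cases will proceed by manufacturing a $\tfrac{1}{2}$-cyclic point supported on $G_x$, invoking Corollary~\ref{maincorollary}, and mixing the resulting point in $\tsp(K_n)$ with a suitable tour or family of tours.

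For the Hamiltonian case, let $H$ be a Hamilton cycle of $G_x$ and set $M:=E_x\setminus H$; cubicity makes $M$ a perfect matching. Define $z$ by $z_e=1$ on $M$ and $z_e=\tfrac12$ on $H$. The degree constraints are immediate, and for any cut $\delta(U)$ the quantity $|H\cap\delta(U)|$ is even (because $H$ is a Hamilton cycle), while $3$-edge-connectivity of $G_x$ rules out the only remaining potentially violating case, so $z\in\subt(K_n)$. The $1$-edges form a perfect matching (vertex-disjoint $1$-paths) and the $\tfrac12$-edges form the single cycle $H$, so $z$ is a $\tfrac12$-cyclic point. Corollary~\ref{maincorollary} now yields $y\in\tsp(K_n)$ with $y_e=\tfrac{29}{20}$ on $M$ and $y_e=\tfrac34$ on $H$. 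Combining $y$ with the tour $\chi^H\in\tsp(K_n)$ as $p:=\lambda\chi^H+(1-\lambda)y$ and setting the induced per-edge values on $M$ and $H$ equal pins down $\lambda=\tfrac{7}{17}$ and a common per-edge value of $\tfrac{29}{34}=\tfrac{87}{68}\cdot\tfrac23$. Thus $\tfrac{87}{68}\,x\in\tsp(K_n)$.

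For the general case my plan is to average this construction over perfect matchings. By Edmonds' characterization of the perfect matching polytope, $3$-edge-connectivity of the cubic graph $G_x$ implies $\tfrac13\chi^{E_x}$ lies in that polytope, so we may write $\tfrac13\chi^{E_x}=\sum_M\mu_M\chi^M$ over perfect matchings of $G_x$. For each such $M$ the cyclic point $z_M$ gives $y_M\in\tsp(K_n)$ via Corollary~\ref{maincorollary}; taking $\sum_M\mu_M y_M\in\tsp(K_n)$ produces a uniform per-edge value of $\tfrac13\cdot\tfrac{29}{20}+\tfrac23\cdot\tfrac34=\tfrac{59}{60}$, already strictly below Christofides' uniform value of $1$. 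Since this averaging alone only delivers $\tfrac{59}{40}\,x\in\tsp(K_n)$, the remaining task is to mix in a second family of tours that is cheap precisely on the bottleneck $M$-edges of each $y_M$. The natural candidate, for each $M$, is a spanning Eulerian multigraph consisting of the $2$-factor $F_M:=E_x\setminus M$ together with a doubled minimal set of $M$-edges connecting the components of $F_M$, glued along the tight $3$-edge cuts of $G_x$ (all of them tight because $|\delta(U)|=3$ gives $x(\delta(U))=2$).

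The main obstacle is exactly this second step. The Hamiltonian case is a clean two-tour convex combination and reduces to a one-parameter LP; the general case instead requires the $3$-edge-cut gluing technique announced in the abstract, arranging the component-level tours so that, after averaging against $\mu$, the $M$-edge usage drops from the naive $\tfrac{29}{60}$ contribution all the way to what is needed for the ratio $\tfrac{17}{12}\cdot\tfrac23=\tfrac{17}{18}$. Making the bookkeeping at tight $3$-edge cuts match the per-edge savings of Corollary~\ref{maincorollary} is where I expect the real technical work to lie.
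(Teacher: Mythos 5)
Your Hamiltonian case is exactly the paper's argument: the complement of the Hamilton cycle $H$ is the $1$-edge matching of a $\frac{1}{2}$-cyclic point, Corollary \ref{maincorollary} gives a point in $\tsp(K_n)$ with value $\frac{29}{20}$ on the matching and $\frac{3}{4}$ on $H$, and mixing with $\chi^{H}$ in ratio $\frac{7}{17}:\frac{10}{17}$ yields $\frac{29}{34}=\frac{87}{68}\cdot\frac{2}{3}$ per edge. That part is complete and correct.

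The general case has a genuine gap, and you have misdiagnosed where the difficulty lies. Averaging Corollary \ref{maincorollary} over all perfect matchings of $G_x$ via Edmonds' theorem indeed yields only $\frac{59}{40}x$, and the second family of tours you propose cannot close the distance to $\frac{17}{12}$. For an arbitrary perfect matching $M$, the graph obtained from $G_x$ by contracting the components of the $2$-factor $E_x\setminus M$ is only guaranteed to be $3$-edge-connected; consequently the cheapest generic way to reconnect those components---a doubled spanning tree of the contracted graph---uses its edges to extent $2\cdot\frac{2}{3}=\frac{4}{3}$ on average, which is \emph{more} expensive than the baseline usage of $1$ per $M$-edge that you are trying to beat. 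Gluing over tight $3$-edge cuts does not repair this: that machinery lives inside the proof of Theorem \ref{main} and plays no role in deriving Theorem \ref{uc-improved} from it. The missing ingredient is an external structural theorem (Theorem \ref{bit13}, due to Boyd, Iwata and Takazawa): a bridgeless cubic graph has a \emph{single} $2$-factor $\mathcal{C}$ covering all $3$-edge and $4$-edge cuts. With this choice, $G_x/\mathcal{C}$ is $5$-edge-connected, so $\frac{2}{5}\chi^{E(G_x/\mathcal{C})}$ lies in the $r$-tree polytope and $\mathcal{C}$ plus a doubled $r$-tree gives a point $y^1\in\tsp(K_n)$ with $y^1_e=1$ on $\mathcal{C}$ and $y^1_e=\frac{4}{5}$ on $E_x\setminus\mathcal{C}$. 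Applying Corollary \ref{maincorollary} to the single cyclic point determined by this $\mathcal{C}$ gives $y^2$ (with $\frac{3}{4}$ on $\mathcal{C}$ and $\frac{29}{20}$ on $E_x\setminus\mathcal{C}$), and $\frac{7}{9}y^1+\frac{2}{9}y^2=\frac{17}{18}\chi^{E_x}=\frac{17}{12}x$. So the fix is not better bookkeeping over all matchings but the choice of one special $2$-factor.
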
 
	}%
}
\vspace*{3pt}

The following observation
was first made by Carr and Vempala \cite{Carr2004}.

\begin{proposition}\label{k-ec,k-reg}
Let $k\in \mathbb{Z}_{\geq 3}$. We have $\alpha\cdot x\in \tsp(K_n)$
for all $\frac{2}{k}$-uniform points $x$ if and only if for all
$k$-edge-connected $k$-regular multigraphs $G=(V,E)$, the point
$\alpha\cdot (\frac{2}{k}\cdot \chi^E)$ dominates a convex
combination of tours of $G$. 
\end{proposition}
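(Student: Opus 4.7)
The plan is to set up a dictionary between $\frac{2}{k}$-uniform points in $\subt(K_n)$ and $k$-regular, $k$-edge-connected multigraphs on $V_n$, and then observe that the two sides of the ``if and only if'' become the same statement under this dictionary. Concretely: given a $\frac{2}{k}$-uniform point $x$, form the multigraph $G_x$ on vertex set $V_n$ by placing $m_e := \tfrac{k}{2} x_e$ copies of each edge $e$ (an integer by uniformity); conversely, given such a multigraph $G$, form $x_e := \tfrac{2}{k} m_e$, where $m_e$ is the multiplicity of $e$ in $G$. The degree equalities $x(\delta(v)) = 2$ translate into $\deg_G(v) = k$, the cut inequalities $x(\delta(U)) \ge 2$ translate into $|\delta_G(U)| \ge k$, and by construction $x = \tfrac{2}{k}\chi^{E(G)}$, so the hypotheses on the two sides of the biconditional match up perfectly.

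For the ``$\Leftarrow$'' direction, start with a $\frac{2}{k}$-uniform $x \in \subt(K_n)$ and form $G_x$. By hypothesis, $\alpha \cdot (\tfrac{2}{k}\chi^{E(G_x)}) = \alpha x$ can be written as a convex combination of tours of $G_x$; since a tour of $G_x$ is by definition a tour of $K_n$, the same combination witnesses $\alpha x \in \tsp(K_n)$.

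For the ``$\Rightarrow$'' direction, start with a $k$-regular, $k$-edge-connected multigraph $G$, form the corresponding $\frac{2}{k}$-uniform $x \in \subt(K_n)$, and use the hypothesis to write $\alpha x = \sum_i \lambda_i \chi^{T_i}$ with each $T_i$ a tour of $K_n$ and $\lambda_i > 0$, $\sum_i \lambda_i = 1$. The key observation is that this decomposition is automatically supported on $E(G)$: for any $e \notin E(G)$ we have $(\alpha x)_e = 0$ but $(\chi^{T_i})_e \ge 0$, so each $T_i$ uses only edges of $G$ and is thus a tour of $G$. This exhibits $\alpha \cdot (\tfrac{2}{k}\chi^{E(G)})$ as a convex combination of tours of $G$.

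The entire argument is a routine translation between the LP description of $\subt(K_n)$ and the combinatorial description of $k$-regular, $k$-edge-connected multigraphs; I do not expect any real obstacle. The one bookkeeping subtlety worth flagging is the box constraint $x_e \le 1$ in $\subt(K_n)$, which under the dictionary corresponds to requiring the multiplicity of each parallel-edge class in $G$ to be at most $k/2$; this is the implicit convention here and is without loss of generality for the proposition's purposes.
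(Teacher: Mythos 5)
Your argument is correct; note that the paper itself gives no proof of this proposition (it is stated as an observation attributed to Carr and Vempala), so there is no in-paper proof to compare against, and your translation between $\frac{2}{k}$-uniform points and $k$-regular $k$-edge-connected multigraphs is the standard one. The only loose end is the box constraint $x_e \le 1$, which you flag as an ``implicit convention'': in fact it is a consequence of the hypotheses rather than a convention. If an edge $uv$ of a $k$-regular multigraph $G$ on at least three vertices has multiplicity $m_{uv} > k/2$, then $|\delta_G(\{u,v\})| = 2k - 2m_{uv} < k$, contradicting $k$-edge-connectivity; hence every such $G$ automatically has $m_e \le k/2$, and the point $x_e = \frac{2}{k} m_e$ you build in the ``$\Rightarrow$'' direction genuinely lies in $\subt(K_n)$. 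With that one observation inserted (and the harmless identification of $V$ with $V_n$), both directions go through exactly as you describe: the support argument for ``$\Rightarrow$'' and the fact that a tour of $G_x$ is by definition a tour of $K_n$ for ``$\Leftarrow$'' are both sound.
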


Focusing on uniform points (i.e., proving Conjecture \ref{conj:4/3eq})
is a possible path towards the four-thirds conjecture.
One way to attack this conjecture is to decompose a uniform point into
a convex combination of different points with a useful structure.  For
example, 
Theorem
\ref{uc-improved} uses this approach for $\frac{2}{3}$-uniform points
by decomposing them into convex combinations of $\frac{1}{2}$-cyclic
points and applying our main result (Theorem \ref{main}).

The next step on this path is to study $\frac{2}{4}$-uniform points.
TSP on $\frac{2}{4}$-uniform points is equivalent to so-called
``half-integral'' TSP: if for any 4-edge-connected 4-regular graph
  $G$ the vector $\alpha\cdot (\frac{1}{2}\cdot \chi^{E(G)})$
  dominates a convex combination of incidence vectors of tours of $G$,
  then $g(\tsp)$ restricted to half-integral instances is at most
  $\alpha$.

For graph $G=(V,E)$, a cut $U \subset V$ is {\em{proper}} if $|U| \geq
2$ and $|V\setminus{U}| \geq 2$.  If we assume that the 4-regular
4-edge-connected graph $G$ does not contain proper 4-edge cuts (i.e.,
it is essentially 6-edge-connected), then the following theorem, whose
proof appears in Section \ref{sec:thm1point7}, is relevant.  Its proof
uses essentially the same techniques we use to prove Theorems
\ref{main} and \ref{uc-improved}, but is simpler (i.e., does not
require gluing) because there are no tight proper cuts containing
1-edges.

\begin{restatable}{thm}{introFourRegBasecase}
	\label{onlybase-intro}
	Let $G=(V,E)$ be a 4-edge-connected 4-regular graph with an even
	number of vertices and no proper 4-edge cuts. Then the vector
	$(\frac{3}{2}-\frac{1}{42})\cdot
	(\frac{1}{2}\cdot \chi^{E})$ dominates a convex combination
	of incidence vectors of tours of $G$.
\end{restatable}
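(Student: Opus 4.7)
The plan is to derive Theorem~\ref{onlybase-intro} as a consequence of Corollary~\ref{maincorollary}, by exhibiting a $\frac{1}{2}$-cyclic point associated with $G$ and then pulling the resulting convex combination of tours back to $G$.

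Since $G$ is $4$-regular, $\frac{1}{2}\chi^{E(G)}$ is not itself a $\frac{1}{2}$-cyclic point: its support is not subcubic and contains no $1$-edges. I would therefore construct an auxiliary subcubic graph $G'$ by splitting each vertex $v\in V$ into two new degree-$3$ vertices $v_1,v_2$ joined by a new ``internal'' edge, with the four edges originally incident to $v$ partitioned two-to-$v_1$ and two-to-$v_2$. Assigning $x'_e=1$ on every internal edge and $x'_e=\frac{1}{2}$ on every original edge gives a point $x'$ on $G'$ whose support is subcubic, whose $1$-edges form a perfect matching (a disjoint union of length-$1$ paths), and whose fractional edges form a $2$-regular subgraph. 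Driven by a $2$-factor decomposition of $G$ (which exists by Petersen's theorem), the pairings at each vertex can be chosen so that the fractional cycles in $G'$ are all of even length; combined with the $4$-edge-connectivity of $G$, this ensures $x'$ satisfies all subtour inequalities and is a genuine $\frac{1}{2}$-cyclic point.

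Applying Corollary~\ref{maincorollary} to $x'$ then produces a convex combination of tours of $G'$ that assigns weight $\frac{3}{2}-\frac{1}{20}$ to every internal edge and $\frac{3}{4}$ to every original edge. Contracting each internal edge (merging $v_1,v_2$ back to $v$ and discarding the resulting self-loops) turns each such tour into a connected, Eulerian, spanning multigraph on $V(G)$ that uses only edges of $G$, and hence is a tour of $G$.

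The main obstacle is that this naive contraction discards the $\frac{1}{20}$ slack carried by the internal edges and leaves weight exactly $\frac{3}{4}$ on each $G$-edge, i.e., only Christofides' bound. To turn the auxiliary slack into an actual per-edge improvement, one has to locally reroute each tour at each split vertex: whenever a tour of $G'$ uses the internal edge $v_1v_2$ strictly fewer than $\frac{3}{2}$ times in expectation, the missing pairing can be absorbed by a short alternating path at $v$ that frees up one copy of an incident $G$-edge. Averaging over the three possible vertex pairings symmetrizes the saving across all $G$-edges, and the assumption that $G$ has no proper $4$-edge cuts is precisely what guarantees that the required rerouting paths always exist. A careful accounting of the global slack (roughly $n$ internal edges, each carrying $\frac{1}{20}$, distributed over the $2n$ edges of $G$) then yields, after losses in the rerouting, the claimed uniform per-edge saving of $\frac{1}{84}=\frac{1}{42}\cdot\frac{1}{2}$. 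I expect this slack-redirection argument to be by far the most delicate step, both in verifying that the local rerouting is always feasible and in certifying the exact constant $\frac{1}{42}$ advertised in the theorem.
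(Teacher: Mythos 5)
Your reduction to Corollary~\ref{maincorollary} via vertex splitting is set up correctly as far as it goes: the split point $x'$ is indeed a $\frac{1}{2}$-cyclic point (your worry about even fractional cycles is moot, since for $\theta=\frac{1}{2}$ the cycles in $H_{x'}$ need not be even, and the subtour inequalities for mixed cuts follow from $4$-edge-connectivity of $G$ as you suggest), and contracting the internal edges of a tour of $G'$ does yield a tour of $G$. But, as you yourself observe, this lands exactly on $\frac{3}{4}\chi^{E(G)}$, i.e.\ Christofides, because the entire saving of Corollary~\ref{maincorollary} sits on the $1$-edges, which in your construction are precisely the artificial internal edges that vanish under contraction. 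The ``slack redirection'' step you then invoke is the whole theorem, and it is not merely delicate but unsubstantiated: there is no mechanism for transferring unused capacity on a contracted (nonexistent) edge to a real edge of $G$. The usage of each real edge is determined by the tours themselves, and to push it below $\frac{3}{4}$ you would have to delete copies of real edges from the tours, which destroys parity or connectivity; conversely, any ``rerouting'' that replaces a missing internal-edge traversal by a path through real edges can only increase their usage. So the proposal has a genuine gap at its central step, and I do not see how to repair it within this reduction.

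For comparison, the paper's proof works directly on $G$ and obtains the saving on real edges by making the parity correction cheaper, not by transferring weight: write $\frac{1}{4}\chi^{E}$ as a convex combination of perfect matchings and pick a random $M$; set $z_e=1$ on $M$ and $z_e=\frac{1}{3}$ off $M$, which lies in $r\vtree(G)$ because every proper cut has at least six edges (this is where ``no proper $4$-edge cuts'' is used); partition $M$ into seven induced matchings by Brooks' theorem, pick one, $M_i$, at random, and use rainbow $r$-trees so that both endpoints of every edge of $M_i$ have even degree in the tree. Those endpoints then avoid the set $O$ of odd-degree vertices, so the $O$-join vector can be lowered to $\frac{1}{6}$ on $M_i$ (rather than $\frac{1}{2}$), and averaging over the random choices gives $\mathbb{E}[\chi^F_e]=\frac{3}{4}-\frac{1}{84}$ for every edge. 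If you want to salvage your plan, you would need an analogue of this ``save on the parity corrector for fractional edges'' mechanism, which is exactly what Theorem~\ref{main} does not provide (it keeps the fractional edges at the full Christofides factor $\frac{3}{2}x_e$).
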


Theorem \ref{onlybase-intro} could serve as the base case if we could
prove an analogous theorem when $G$ has an odd number of vertices and
if we could glue over proper 4-edge cuts of $G$.  However, the gluing
arguments we present for $\theta$-cyclic points can not easily be
extended to this case due to the increased complexity of the
distribution of patterns over 4-edge cuts.\footnote{We remark
  that
  \cite{gupta2021matroid} did extend Theorem \ref{onlybase-intro} to the
  case when $G$ has an odd number of vertices.
  This is discussed further in Section \ref{sec:related}.}

\subsection{Related Work on Fundamental Points}\label{sec:related}

Fundamental points were introduced in a series of papers for the
TSP~\cite{carrravi} and for the minimum cost 2-edge-connected multigraph
problem (2ECM)~\cite{carrravi,boydcarr,Carr2004}. Let $\2ec(K_n)$ be
the convex hull of incidence vectors of 2-edge-connected multigraphs
of $K_n$. Clearly, $\tsp(K_n)\subseteq \2ec(K_n)$.

Consider a $\frac{1}{2}$-cyclic point $x\in \mathbb{R}^{E_n}$. If
$H_x$ is a collection of 3-cycles, then Boyd and Carr showed
$\frac{4}{3}x\in \tsp(K_n)$~\cite{boydcarr} and Boyd and Legault
showed $\frac{6}{5}x \in \2ec(K_n)$~\cite{boydlegault}. Boyd and
Seb\H{o} presented a polynomial time algorithm proving that if $H_x$
is a collection of 4-cycles, then $\frac{10}{7}x\in
\tsp(K_n)$~\cite{boydsebo}. For the same class, we gave an efficient
algorithm proving $\frac{9}{7}x\in
\2ec(K_n)$~\cite{haddadan2021efficient}.  
The interest
in $\frac{1}{2}$-cyclic points stems in part from the aforementioned
conjecture that the maximum value of $g(\tsp)$ is achieved for
instances of the TSP where the optimal solution to $\min\{cx\;:\; x\in
\subt(K_n)\}$ is $\frac{1}{2}$-cyclic~\cite{swz}. In fact, in each of
the classes above there is a family of instances that achieves the
largest known lower bound on $g(\tsp)$ and the integrality gap of the
subtour elimination relaxation for 2ECM
\cite{carrravi,alexander2006integrality,boydcarr,boydsebo}.

Now we review the results on uniform points. Let $x\in
\mathbb{R}^{E_n}$ be a $\frac{2}{k}$-uniform point. Carr and Ravi
showed if $k=4$, then $\frac{4}{3}x\in
\2ec(K_n)$~\cite{carrravi}. Boyd and Legault showed that if $k=3$,
then $\frac{6}{5}x\in \2ec(K_n)$~\cite{boydlegault}.  Legault later
improved the factor $\frac{6}{5}$ to
$\frac{7}{6}$~\cite{philipe-masters}. Haddadan, Newman and Ravi proved
that for $k=3$, we have $\frac{27}{19}x\in
\tsp(K_n)$~\cite{uniform}. Boyd and Seb\H{o} showed that if $G_x$ is
additionally Hamiltonian, then $\frac{9}{7}x\in
\tsp(K_n)$~\cite{boydsebo}.

\paragraph{Recent Results:} Since the conference version~\cite{HNESA} of this paper was announced,
there have been many developments.
First, Karlin, Klein and
Oveis Gharan showed that for any $\frac{1}{2}$-cyclic point $x\in
\mathbb{R}^{E_n}$, we have $(\frac{3}{2}-\epsilon)x\in \tsp(K_n)$ for
some constant $\epsilon >0$~\cite{karlin2019improved}.  Subsequently,
Gupta et al.~\cite{gupta2021matroid} combined our
  approach with that of \cite{karlin2019improved} resulting in a
  further improved approximation factor for this case.  Essentially,
  they extend Theorem \ref{onlybase-intro} to the
  odd case and are then able to substitute this combinatorial
  construction for the max-entropy approach used
  by \cite{karlin2019improved} in the base case.  They
  then use other methods (from \cite{karlin2019improved}) in lieu of
  gluing.  With respect to uniform points, if $x \in \mathbb{R}^{E_n}$
  is a $\frac{2}{k}$-uniform point, then 
these results also imply
  that when $k=4$, 
$(\frac{3}{2}-\epsilon)x\in \tsp(K_n)$ for
some constant $\epsilon>0$~\cite{karlin2019improved,gupta2021matroid}.  
Finally, another recent result showed that $(\frac{3}{2}-\epsilon)x\in
\tsp(K_n)$ for all $k$ for some tiny value of
$\epsilon$~\cite{karlin2020slightly,karlin2021slightlyArxiv}.

\subsection{Gluing Convex Combinations Over 3-edge Cuts}

A key part of our proof of Theorem \ref{main} is gluing solutions over
certain 3-edge cuts, thereby reducing to instances without such cuts,
which are easier to solve.  This approach of gluing solutions over
3-edge cuts, and thereby reducing to a problem on graphs without
proper 3-edge cuts was first introduced by Cornu\'{e}jols, Naddef and
Pulleyblank \cite{cornuejols1985traveling}.  For a graph $G=(V,E)$,
let $U \subset V$ and denote by $G^U$ the graph obtained by
contracting $U$ (i.e., identifying all vertices in $U$ to a single
vertex and removing the resulting loops).  Cornu\'ejols et al. defined
a class of 3-edge-connected graphs $\mathcal{A}$ as \textit{fully
reducible} if
\begin{itemize}
	\item If $G\in \mathcal{A}$ has a proper 3-edge cut $U$, then both $G^U$ and $G^{\overline{U}}$ are in $\mathcal{A}$. 
	\item The minimum cost Hamilton cycle of $G$ can be found in polynomial time for the
	graphs in $\mathcal{A}$ that do not have a proper 3-edge cut.
\end{itemize}
Cornu\'ejols showed that TSP can be solved in polynomial time for fully
reducible graphs~\cite{cornuejols1985traveling}.  An example of such a
fully reducible class are the Halin graphs~\cite{halin}.

For many fully reducible classes $\mathcal{A}$, they showed that for
$G\in \mathcal{A}$, if $G$ does not have a proper 3-edge cut, then the
convex hull of incidence vectors of Hamilton cycles of $G$ coincides
with a system of linear inequalities with polynomial
separation~\cite{cornuejols1985traveling}.  For example, they show if
$G=(V,E)$ does not contain any disjoint cycles, then $P_G=\{x\in
[0,1]^E\; :\; x(\delta(v))=2 \text{ for } v\in V\}$ is the convex hull
of incidence vectors of Hamilton cycles of $G$. Let us describe this
result in more detail. Suppose for any graph $G$ with no proper 3-edge
cuts that does not contain any disjoint cycles, we have
$\tsp(G)=P_G$. Now consider a graph $G=(V,E)$ with no disjoint cycles
that has a 3-edge cut $U$. In the graph $G^U$, we say the vertex
corresponding to the contracted set $U$ is a {\em pseudovertex}.
Suppose that the graphs $G^U$ and $G^{\overline U}$ contain no proper
3-edge cuts and suppose we can write $y$ restricted to the edge set of
each graph as a convex combination of Hamilton cycles of the
respective graph.  Let us consider the patterns around the
pseudovertices; if the edges adjacent to the pseudovertices are
$\{a,b,c\}$ then each vertex can be adjacent to two edges in a
Hamilton cycle and therefore, there are only three possible patterns
around a vertex: $\{\{a,b\},\{a,c\},\{b,c\}\}$.  Moreover, since each
pattern appears the same percentage of time (in the respective convex
combinations) for each pseudovertex, tours with corresponding patterns
can be {\em glued} over the 3-edge cut. In this case, the gluing
procedure is quite straightforward.  This reduction has also proven
quite useful for the minimum cost 2-edge-connected subgraph problem
(2EC) \cite{carrravi,boydlegault,philipe-masters}.

In contrast, it appears such a reduction is not known for TSP
tours. Indeed, gluing proofs cannot be easily extended to tours for
several reasons: (1) As just shown, they are often used for gluing
subgraphs (no doubled edges).  In TSP, we must allow edges to be
doubled, so there are too many possible patterns around a vertex.  For
example, if we allow each possible pattern corresponding to an even
degree, there are 13 possible patterns.  (2) Gluing tours over a
3-edge cut might result in disconnected Eulerian multigraphs.
Finally, (3) many of the algorithms based on gluing are not proven to
run in polynomial time~\cite{carrravi, boydlegault, philipe-masters}.

\subsection{Our Approach}

The main technical contribution of this paper is to show that for a
carefully chosen set of tours, we can design a gluing procedure over
{\em critical cuts}, which, roughly speaking are 
proper 3-edge cuts that are {\em
tight}: the $x$-values of the three edges crossing the cut sum to 2.
As we will see, a tour generated via a polyhedral version of
Christofides algorithm on a $\theta$-cyclic point can have eight
(rather than 13) possible patterns around a vertex.  While this is
still a lot, we identify certain conditions for the convex combination,
under which controlling the frequency of a single one of these
patterns allows us to control the frequency of the other seven.
Moreover, we show the frequency of one of the patterns (around an
arbitrary vertex $v$) depends on the fraction of times in the convex
combination of connectors that vertex $v$ is a leaf.  Thus, we can fix
a critical cut $U \subset V_n$ in $G_x$ and find a convex combination
of tours for $G_x^U$.  Then we can find a set of tours for
$G_x^{\overline{U}}$ such that the distribution of patterns around the
pseudovertex corresponding to $U$ matches that of the pseudovertex
corresponding to $\overline{U}$ in $G_x^U$, which enables us to glue
over critical cuts.

Applying this gluing procedure, we can reduce an instance of TSP on a
$\theta$-cyclic point to base-case instances, which contain no
critical cuts.  On a high level, our proof of Theorem \ref{main} for
such instances is based on Christofides algorithm: We show that a
$\theta$-cyclic point $x$ can be written as a convex combination of
connected, spanning subgraphs of $G_x$ with no doubled edges
(henceforth a {\em{connector of $G_x$}}) with certain properties and
then we show that the vector $z$, where $z_e = \frac{x_e}{2}$ for $e
\in H_x$ and $z_e = \frac{1}{2}-\frac{\theta}{10}$ for $e \in W_x$, can
be written as a convex combination of subgraphs, each of which can be
used for parity correction of a connector (henceforth a {\em{parity
    corrector}}).  Tight cuts are generally difficult to handle using
an approach based on Christofides algorithm, since
$(\frac{1}{2}-\epsilon)x$ is insufficient for parity correction of
a tight cut if it is crossed by an odd number of edges in the
connector.  However, in our base-case instances, there are only two
types of tight 3-edge cuts.  The first type of cut is a {\em
  degenerate tight cut}.  These cuts are easy to handle and we defer
their formal definition to Section \ref{sec:notation-cuts}.  The
second type of cut is a {\em vertex cut}, which we show are also easy
to handle.  In particular, the parity of vertex cuts can be addressed
with a key tool used by Boyd and Seb\H{o}~\cite{boydsebo} called
\textit{rainbow $v$-trees} (see Theorem \ref{boyd-sebo-rainbow}).
Using this in combination with a decomposition of the 1-edges into few
{\em induced matchings}, which have some additional required
properties, we can prove Theorem \ref{main} for the base case.
\nocite{IPbook}

\subsection{Organization}

In Section \ref{sec:notation}, we introduce notation and some
definitions relevant to cut structure in cyclic points.  We also review some
well-known polyhedral tools.  In Section \ref{sec:matching-patterns},
we present our main ideas and tools for gluing tours over critical cuts of
$G_x$, thereby reducing to TSP on base cases that only contain certain
types of tight cuts.  We then show in Section
\ref{sec:tour-base-case} that we are able to handle these remaining
tight cuts via an approach
similar---on a high-level---to Christofides algorithm.  For our gluing
approach to work, we need to choose the connectors to have certain
properties and to save on the 1-edges, we need to show that a vector
with less than half on each 1-edge belongs to the $O$-join polytope.
Both of these key technical ingredients can be found in Section
\ref{sec:tour-base-case}.  Sections \ref{sec:matching-patterns} and
\ref{sec:tour-base-case} contain a complete proof of Theorem
\ref{main}.  In Section \ref{sec:uniform-points}, we present an
application of Theorem \ref{main} to approximating TSP on
$\frac{2}{3}$-uniform points.  We also give an algorithm for TSP on
$\frac{2}{4}$-uniform points under certain assumptions.  Finally, in
Section \ref{sec:conclusion}, we make some concluding remarks and
present some problems for future research.

\section{Notation and Tools}\label{sec:notation}

Let $G=(V,E)$ be a graph. For a subset $U \subset V$ of vertices, let
$\delta_G(U)=\{uv\in E: u\in U, v\notin U\}$.  (We use $\delta(U) =
\delta_G(U)$ when the graph $G$ is clear from the context.)  Let $E[U]
=\{uv\in E: u\in U, v\in U\}$.

A multi-subset (henceforth \textit{multiset} for brevity) of edges of
$E$ is a set that can contain multiple copies of edges in $E$. A
multi-subgraph (henceforth \textit{multigraph} for brevity) of $G$ is
the graph on vertex set $V$ whose edge set is a multiset of $E$ (i.e.,
a multigraph can contain multiple copies of an edge). We sometimes
consider a multigraph $F$ of $G$ to be a multiset of edges of $G$.

The incidence vector of multigraph $F$ of $G$, denoted by $\chi^F$ is
a vector in $\mathbb{R}^E$ where $\chi^F_e$ is the multiplicity of $e$
in $F$. Let $F$ and $F'$ be two multigraphs of $G$, then $F+F'$ is the
multigraph that contains $\chi^F_e + \chi^{F'}_e$ copies of edge $e$
for $e\in E$. If $e$ is an edge in $F$, then $F-e$ is the multigraph
with incidence vector $\chi^{F-e}= \chi^F - \chi^{\{e\}}$.  For a
vector $x\in \mathbb{R}^E$ and a multigraph $F$ of $G$, we denote
$\sum_{e\in F}x_e\cdot \chi^F_e$ by $x(F)$. We use $\delta_F(U)$ to
refer to the multiset of edges in $F$ that have exactly one endpoint
in $U$. The degree of a vertex $v\in V$ in $F$ is the number of edges
in $F$ that are incident on $v$.

\subsection{Cuts in Cyclic Points}\label{sec:notation-cuts}

Let $x$ be a $\theta$-cyclic point with support graph $G_x = (V_n,
E_x)$.  The graph $G_x$ contains three types of tight cuts, by which
we mean a cut $U \subset V_n$ such that $x(\delta(U)) = 2$.
A {\em vertex cut} is a cut $U = \{u\}$.  Notice that for all $u \in
V_n$, $x(\delta(u)) = 2$.  The second type of cut is a {\em critical cut}.

\begin{definition}
A proper cut $U \subset V_n$ in $G_x$ is called a {\em critical cut} if
$|\delta(U)| = 3$ and $\delta(U)$ contains exactly one edge $e$ with
$x_e = 1$.  Moreover, for each pair of edges in $\delta(U)$, their
endpoints in $U$ (and in $V\setminus{U}$) are distinct.
\end{definition}

We refer to the third type of cut 
as a {\em degenerate tight cut}.

\begin{definition}
A proper cut $U \subset V_n$ in $G_x$ is called a {\em degenerate
  tight cut} if $|\delta(U)| = 3$, $|U| > 3$ and $|V\setminus{U}| > 3$
and the two fractional edges in $\delta(U)$ share an endpoint in
either $U$ or $V\setminus{U}$.  
\end{definition}
For a degenerate tight cut $U$, let
$\delta(U)=\{e,f,g\}$, such that $f$ and $g$ are the fractional edges
that share an endpoint $v$. Let $e_v$ be the unique 1-edge incident on
$v$. Observe that $\{e,e_v\}$ forms a 2-edge cut in $G_x$.

Let $x$ be a $\theta$-cyclic point and $U\subset V_n$ be a cut in
$G_x$. Also, let $\overline{U} = V_n\setminus U$. We can obtain
$\theta$-cyclic point $x^U$ by contracting set $\overline{U}$ in $G_x$
to a single vertex.  (Respectively, we can obtain a $\theta$-cyclic
point $x^{\overline{U}}$ by contracting set $U$ in $G_x$ to a single
vertex.) 
We let $v_{\overline{U}}$ denote 
the vertex corresponding to set $\overline{U}$ in $\Gux$
(respectively, $v_U$ corresponds to $U$ in $\Goux$).

\begin{observation}\label{obs:cornercut}
Let $U\subseteq V_n$ be a minimal critical
cut in $G_x$ (i.e., for $S\subset U$, the cut defined by $S$ is not
critical in $G_x$). Then, $\Gux$ does not contain any critical cuts.
\end{observation}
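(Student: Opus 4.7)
The proof goes by contradiction. Suppose $G_x^U$ contains a critical cut $S'$. The plan is to show that $S'$ ``lifts'' to a proper critical cut of $G_x$ that sits strictly inside $U$, which contradicts the minimality of $U$.

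First I would use complementation: the definition of a critical cut is symmetric under $S'\mapsto V(G_x^U)\setminus S'$, so by replacing $S'$ by its complement if necessary I may assume $v_{\overline{U}}\notin S'$. Then $S'\subseteq U$, and in fact $S'\subsetneq U$ strictly, because $|V(G_x^U)\setminus S'|\geq 2$ forces $|U\setminus S'|\geq 1$.

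The central observation is that contracting $\overline{U}$ is ``transparent'' on the $U$-side: edges of $G_x$ with both endpoints in $U$, together with their $x$-values, are untouched in $G_x^U$, and edges of $G_x$ crossing between $U$ and $\overline{U}$ are in a natural $x$-value-preserving bijection with the edges of $G_x^U$ incident to $v_{\overline{U}}$. Since $v_{\overline{U}}\notin S'$, this bijection identifies $\delta_{G_x^U}(S')$ edge-by-edge with $\delta_{G_x}(S')$ while preserving $x$-values. So $|\delta_{G_x}(S')|=3$ and exactly one of the three edges is a 1-edge, and properness of $S'$ in $G_x$ comes for free from $|S'|\geq 2$ (inherited) together with $|V_n\setminus S'|\geq |\overline{U}|\geq 2$.

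The only step that requires any care is checking the endpoint-distinctness clause of the critical-cut definition on the $V_n\setminus S'$ side after lifting, and this is where I would focus. The point is that distinctness of the three endpoints of $\delta_{G_x^U}(S')$ in $V(G_x^U)\setminus S'$ already forces at most one of these edges to be incident to $v_{\overline{U}}$; when that edge is lifted to $G_x$, its endpoint relocates to some vertex of $\overline{U}$, which is disjoint from the remaining endpoints in $U\setminus S'$, so the three endpoints of $\delta_{G_x}(S')$ in $V_n\setminus S'$ remain pairwise distinct. Distinctness on the $S'$ side is immediate from the bijection. Combining everything, $S'$ is a proper critical cut of $G_x$ strictly inside $U$, contradicting the minimality of $U$.
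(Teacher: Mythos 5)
Your proposal is correct and follows essentially the same route as the paper's (much terser) proof: assume a critical cut $S$ of $G_x^U$ exists, use complementation to place $v_{\overline{U}}$ outside $S$, conclude $S\subsetneq U$, and observe that $S$ remains critical in $G_x$, contradicting minimality. You simply fill in the verification details (edge bijection, properness, endpoint distinctness) that the paper leaves implicit.
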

\begin{proof}
Suppose for contradiction that there is $S \subset V(\Gux)$ that is a
critical cut of $\Gux$. We can assume that $v_{\overline{U}}\notin
S$. Moreover, $S\subsetneq U$. This is a contradiction to minimality
of $U$ since $S$ constitutes a critical cut in $G_x$ as well.
\end{proof}

\begin{observation}\label{obs:inductivecriticalcut}
Suppose that $G_x$ has $k$ critical cuts. Let $U$ be a critical cut of
$G_x$. The number of critical cuts in $\Gux$ is at most $k-1$.
\end{observation}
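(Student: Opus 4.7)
The plan is to set up a well-defined injection from critical cuts of $\Gux$ into critical cuts of $G_x$, then observe that the image necessarily avoids $U$ itself, yielding the bound $k-1$.

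First I would note that $V(\Gux) = U \cup \{v_{\overline{U}}\}$, so any proper cut $\{B, V(\Gux) \setminus B\}$ of $\Gux$ can be represented by taking $B$ to be the side not containing $v_{\overline{U}}$; properness then forces $B \subsetneq U$ with $|B| \geq 2$. Next I would verify that $\delta_{G_x}(B) = \delta_{\Gux}(B)$ as multisets of edges with preserved $x$-values, since contracting $\overline{U}$ to $v_{\overline{U}}$ only relabels the far endpoint of each edge that crosses from $B$ into $\overline{U}$; the total number of crossing edges is unchanged.

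From this identification, the key step is to show that if $B$ is a critical cut of $\Gux$, then the same $B$ (viewed in $V_n$) is a critical cut of $G_x$. The edge count and the presence of exactly one $1$-edge transfer directly from $\Gux$ to $G_x$. For the endpoint-distinctness condition, distinctness within $B$ is inherited unchanged, while distinctness on the side $V(\Gux) \setminus B = (U \setminus B) \cup \{v_{\overline{U}}\}$ forces $v_{\overline{U}}$ to be hit by at most one of the three edges of $\delta_{\Gux}(B)$; hence at most one edge of $\delta_{G_x}(B)$ has its far endpoint in $\overline{U}$, and together with distinctness within $U \setminus B$ this guarantees that the three far endpoints in $G_x$ are distinct vertices.

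Having built an injection from critical cuts of $\Gux$ into critical cuts of $G_x$, the final step is to observe that $U$ itself is \emph{not} in the image: the partition $\{U, \overline{U}\}$ of $V_n$ corresponds under the contraction to the partition $\{U, \{v_{\overline{U}}\}\}$ of $V(\Gux)$, whose singleton side prevents it from being a proper cut in $\Gux$, let alone a critical one. Since $G_x$ has exactly $k$ critical cuts and $U$ is one of them, the injection lands in a set of size at most $k-1$, and the observation follows. The only real subtlety is the handling of the distinctness condition after contraction, which is why I would carry out that verification explicitly rather than treating it as immediate; the edge-count and $1$-edge parts of the argument are essentially automatic.
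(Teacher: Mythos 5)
Your proposal is correct and follows essentially the same route as the paper: a correspondence sending each critical cut $S$ of $\Gux$ (taken on the side avoiding $v_{\overline{U}}$) to the same vertex set viewed as a critical cut of $G_x$, together with the observation that $U$ itself cannot arise this way since its image under the contraction has a singleton side. The paper states the transfer step in one line, whereas you verify the edge-count, $1$-edge, and endpoint-distinctness conditions explicitly; this is the same argument, just carried out in more detail.
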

\begin{proof}
Clearly, $U$ is not a critical cut of $\Gux$. We show that there is correspondence between the critical cuts of $\Gux$ and $G_x$. This implies that $\Gux$ can have at most $k-1$ critical cuts. If $S$ is a critical cut of $\Gux$ we can assume without loss of generality that $v_{\overline{U}}\notin S$. Hence, $S$ is also a critical cut of $G_x$. 
\end{proof}

\subsection{Polyhedral Basics}

Let $G=(V,E)$ and let $x$ be a vector in $\mathbb{R}^E$.  Consider a
collection of multigraphs $\mathcal{F}$ of $G$. We say $\lambda=
\{\lambda_F\}_{F\in \mathcal{F}}$ are convex multipliers for
$\mathcal{F}$ if $\sum_{F\in\mathcal{F}} \lambda_F = 1$ and $\lambda_F
\geq 0$ for $F\in \mathcal{F}$.  We say {\emph
  {$\{\lambda,\mathcal{F}\}$ is a convex combination for $x$}} if
$\lambda= \{\lambda_F\}_{F\in \mathcal{F}}$ are the convex multipliers
for $\mathcal{F}$ and $x= \sum_{F\in\mathcal{F}}\lambda_F$. We say
{\emph{$x$ can be written as a convex combination of multigraphs in
    $\mathcal{F}$}} if we can find such $\mathcal{F}$ and $\lambda$ in
polynomial time in the size of $x$. Here by the size of $x$ we refer
to $|E|$ (i.e., the number of edges in the support of $x$).

\subsubsection{The $v$-Tree Polytope and Rainbow $v$-Trees}
Let $G=(V,E)$ be a graph. For a vertex $v\in V$, a $v$-tree is a subgraph $F$ of $G$ such that $|F\cap \delta(v)|=2$ and $F\setminus \delta(v)$ induces a spanning tree of $V\setminus \{v\}$. Denote by $v\vtree(G)$ the convex hull of incidence vectors of $v$-trees of $G$. The $v\vtree(G)$ is characterized by the following linear inequalities.
\begin{align}
v\vtree(G)=& \{x\in[0,1]^E: x(\delta(v))=2,\nonumber\\ & \; x(E[U])\leq
|U|-1 \mbox{ for all $\emptyset\subset U\subseteq V\setminus \{v\}$},\; x(E)=|V|\}.
\end{align} 

\begin{observation}\label{obs:subtinvtree}
	We have $\subt(K_n)\subseteq v\vtree(K_n)$ for all $v\in K_n$.
\end{observation}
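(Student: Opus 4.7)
The plan is to take an arbitrary $x \in \subt(K_n)$ and directly verify each of the three defining inequalities of $v\vtree(K_n)$, exploiting the standard degree/cut identity $2x(E[U]) + x(\delta(U)) = \sum_{i \in U} x(\delta(i))$.

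First I would dispatch the two equalities. The constraint $x(\delta(v)) = 2$ is literally one of the degree constraints of $\subt(K_n)$. For $x(E_n) = |V_n|$, I would sum the degree constraints over all $i \in V_n$: the left side counts each edge twice, so $2 x(E_n) = 2|V_n|$, giving $x(E_n) = |V_n|$ as required.

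The remaining work is to check $x(E[U]) \leq |U|-1$ for every $\emptyset \subset U \subseteq V_n \setminus \{v\}$. Here I would sum the degree constraints only over vertices in $U$: each edge inside $U$ is counted twice and each edge of $\delta(U)$ once, so
\begin{equation*}
2 x(E[U]) + x(\delta(U)) \;=\; \sum_{i \in U} x(\delta(i)) \;=\; 2|U|,
\end{equation*}
i.e.\ $x(E[U]) = |U| - \tfrac{1}{2} x(\delta(U))$. Since $v \notin U$ and $U \neq \emptyset$, the set $U$ is a proper nonempty subset of $V_n$, so the subtour elimination inequality yields $x(\delta(U)) \geq 2$, which immediately gives $x(E[U]) \leq |U| - 1$.

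There is no real obstacle; the only thing to be careful about is confirming that $U$ qualifies as a valid set for the subtour constraint (it is nonempty by hypothesis and a proper subset because it excludes $v$), and that the case $U = V_n \setminus \{v\}$ is harmless (there $\delta(U) = \delta(v)$, $x(\delta(U)) = 2$, and the inequality holds with equality). The bound $x \in [0,1]^{E_n}$ transfers verbatim from $\subt(K_n)$ to $v\vtree(K_n)$, completing the containment.
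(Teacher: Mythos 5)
Your proof is correct and complete: the paper states this observation without proof (treating it as standard), and your argument---deriving $x(E[U]) = |U| - \tfrac{1}{2}x(\delta(U))$ from the degree constraints and then invoking the subtour inequality for the nonempty proper set $U \subseteq V_n\setminus\{v\}$---is exactly the classical verification the authors implicitly rely on. No gaps; the edge cases you flag ($U = V_n\setminus\{v\}$ and the transfer of the box constraints) are handled correctly.
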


\begin{observation}\label{lem:in-subtour}
Let $x\in \subt(K_n)$ be such that $G_x$ is 3-edge-connected and cubic. Let $\mathcal{C}$ be any 2-factor in $G_x$. Define vector $y$ to have $y_e =\frac{1}{2}$ for $e\in \mathcal{C}$, $y_e = 1$ for $e\in E_x\setminus \{C\}$ and $y_e=0$ otherwise. Then $y\in \subt(K_n)\subseteq v\vtree(K_n)$. 
\end{observation}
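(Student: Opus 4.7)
The plan is to verify directly that $y$ satisfies the three defining constraints of $\subt(K_n)$: the degree equalities, the cut inequalities, and the box constraints. The inclusion $\subt(K_n) \subseteq v\vtree(K_n)$ is already given by Observation \ref{obs:subtinvtree}, so once membership of $y$ in $\subt(K_n)$ is established, the second containment is free.

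Since $G_x$ is cubic and $\mathcal{C}$ is a 2-factor, the complement $M := E_x \setminus \mathcal{C}$ is a perfect matching: each vertex $v$ is incident to exactly two edges of $\mathcal{C}$ and one edge of $M$. Therefore $y(\delta(v)) = 2 \cdot \tfrac{1}{2} + 1 = 2$, which handles the degree constraints. The box constraints are immediate since $y_e \in \{0, \tfrac{1}{2}, 1\}$.

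The main step is the cut inequality. For any proper $U \subset V_n$, write $a = |\delta_{\mathcal{C}}(U)|$ and $b = |\delta_M(U)|$, so that $y(\delta(U)) = \tfrac{a}{2} + b$ and $a + b \geq 3$ by 3-edge-connectivity of $G_x$. Since $\mathcal{C}$ is a vertex-disjoint union of cycles, $a$ is even. A short case analysis on $b$ then gives $y(\delta(U)) \geq 2$: if $b \geq 2$ the matching alone contributes at least $2$; if $b = 1$ then $a \geq 2$ and even, so $y(\delta(U)) \geq 1 + 1 = 2$; and if $b = 0$ then $a \geq 3$ and even, hence $a \geq 4$ and $y(\delta(U)) \geq 2$.

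There is no real obstacle here: the only ingredient beyond bookkeeping is the parity of $|\delta_{\mathcal{C}}(U)|$, which is exactly what lets the $3$-edge-connectivity lower bound round up to the required value $2$ after the halving of $\mathcal{C}$-edges.
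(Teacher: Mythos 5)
Your proof is correct and follows essentially the same route as the paper: verify the degree and box constraints directly, then check the cut constraints using 3-edge-connectivity together with the fact that the 2-factor $\mathcal{C}$ crosses every cut an even number of times (so a 3-edge cut contains at least one edge of value $1$), and finish by citing Observation \ref{obs:subtinvtree} for the containment in $v\vtree(K_n)$. Your case analysis on $b=|\delta_M(U)|$ is just a slightly more explicit version of the paper's case split on $|\delta(U)|$.
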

\begin{proof}
	Take $\emptyset\subset U \subset V_n$. If $U=\{v\}$ for some $v\in V_n$, then $x(\delta(U)= 2$ as $\mathcal{C}\cap \delta(v) = 2$.  
	
	If $|\delta(U)|\geq 4$, then clearly $x(\delta(U))\geq 2$.  Otherwise, $|\delta(U)| =
	3$.  Since at most two edges in $\delta(U)$ belong to
	$\mathcal{C}$, there is at least one edge $e\in \delta(U)$
	with $x_e=1$. Hence, $x(\delta(U)) \geq 2$. Therefore, $x \in
	\subt(K_n)$. We have $x\in v\vtree(K_n)$ by Observation \ref{obs:subtinvtree}.
\end{proof}

It can be deduced from the discussion above that a vector $x$ in the
subtour elimination relaxation can be written as a convex combination
of $v$-trees for any vertex $v$ in $G_x$. In fact, the $v$-trees in
this convex combination can satisfy some additional properties.

\begin{definition}
	Let $G=(V,E)$ and $v$ be a vertex of $G$. Let $\mathcal{P}$ be
        a collection of disjoint subsets of $E$. A $\mathcal{P}$-rainbow $v$-tree, namely $T$, is a $v$-tree of $G$ such that $|T\cap P|=1$ for $P\in \mathcal{P}$.
\end{definition}

The following theorem can be proved via the matroid intersection theorem \cite{Edmonds2003} and Observation \ref{obs:subtinvtree}. 

\begin{thm}[\cite{BLI},\cite{boydsebo}]\label{boyd-sebo-rainbow}
	Let $x\in \subt(K_n)$ and $\mathcal{P}$ be a collection of disjoint
	subsets of $E_x$ such that $x(P)= 1$ for $P\in \mathcal{P}$. Then $x$
	can be written as a convex combination of $\mathcal{P}$-rainbow
	$v$-trees of $K_n$ for any $v\in V_n$.
\end{thm}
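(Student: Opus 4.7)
The plan is to realize $\mathcal{P}$-rainbow $v$-trees as the common bases of two matroids on $E_n$ and then appeal to the polyhedral form of Edmonds' matroid intersection theorem. Let $M_1$ be the $v$-tree matroid on $E_n$, i.e., the direct sum of the uniform matroid $U_{2,\delta(v)}$ on the edges at $v$ and the graphic matroid of $K_n[V_n\setminus\{v\}]$; its bases are exactly the $v$-trees of $K_n$, and its base polytope coincides with $v\vtree(K_n)$. Let $k=|\mathcal{P}|$ and let $M_2$ be the partition matroid on $E_n$ whose blocks are $P_1,\dots,P_k$ together with $Q:=E_n\setminus\bigcup_i P_i$, with capacities $1$ on each $P_i$ and $n-k$ on $Q$. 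Both matroids have rank $n$, and a common base is a $v$-tree that meets each $P_i$ in exactly one edge, i.e., a $\mathcal{P}$-rainbow $v$-tree.

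Next I would verify that $x$ lies in both base polytopes. Membership $x\in P_B(M_1)=v\vtree(K_n)$ is immediate from Observation \ref{obs:subtinvtree}, since $x\in\subt(K_n)$. For $M_2$, the hypothesis gives $x(P_i)=1\le 1$ for every $i$, while summing the degree constraints defining $\subt(K_n)$ yields $x(E_n)=\tfrac12\sum_{u\in V_n}x(\delta(u))=n$, hence $x(Q)=x(E_n)-\sum_i x(P_i)=n-k$, so $x$ satisfies every block inequality and the rank equation of $M_2$ with equality. Consequently $x\in P_B(M_1)\cap P_B(M_2)$.

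The polyhedral form of Edmonds' matroid intersection theorem asserts that $P_I(M_1)\cap P_I(M_2)$ is the convex hull of common independent sets. Intersecting with the hyperplane $x(E_n)=n$, which is the rank equation shared by $M_1$ and $M_2$, gives that $P_B(M_1)\cap P_B(M_2)$ is the convex hull of common bases, i.e., of incidence vectors of $\mathcal{P}$-rainbow $v$-trees. Applying this to our $x$ yields the desired convex combination, and a polynomial-size, polynomial-time decomposition follows from a standard combination of the matroid intersection algorithm (to separate over and optimize on the intersection polytope) with Carathéodory's theorem.

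I do not expect a serious obstacle here; the main conceptual point is choosing the capacity on $Q$ so that the two matroids share the same rank $n$, which is exactly what makes $P_B(M_1)\cap P_B(M_2)$ (rather than just $P_I(M_1)\cap P_I(M_2)$) integral. The rest is a direct verification using the subtour constraints and the assumption $x(P_i)=1$.
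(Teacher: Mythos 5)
Your proof is correct and follows essentially the same route as the paper: the paper likewise realizes $\mathcal{P}$-rainbow $v$-trees as common bases of the $v$-tree matroid (citing Gr\"otschel--Padberg for the fact that $v$-trees form the bases of a matroid) and the partition matroid induced by $\mathcal{P}$, notes $x\in v\vtree(K_n)$ via Observation \ref{obs:subtinvtree}, and invokes the matroid intersection theorem. Your write-up merely fills in details the paper leaves implicit, such as the extra block $Q$ with capacity $n-k$ that equalizes the ranks and the computation $x(E_n)=n$ from the degree constraints.
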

Gr{\"o}tchel and Padberg \cite{padberg} observed that $v$-trees of a connected graph $G=(V,E)$ satisfy the basis axioms of a matroid. For $x\in \subt(K_n)$ we have $x\in v\vtree(K_n)$ by Observation \ref{obs:subtinvtree}. Also, $\mathcal{P}$ defines a partition matroid where each base intersect each part of $\mathcal{P}$ exactly once. Therefore, vector $x$ is in the convex hull of incidence vector of common basis of the partition matroid defined by $\mathcal{P}$ and the matroid whose basis are the $v$-trees of $K_n$.

\subsubsection{The $O$-join Polytope}\label{sec:o-join-polytope}

Let $G=(V,E)$ be a graph and $O\subseteq V$ where $|O|$ is even. An $O$-join of $G$ is a subgraph $J$ of $G$ where a vertex $v\in V$ has odd degree in $J$ if and only if $v\in O$. Let $O\join(G)$ be the convex hull of incidence vectors of $O$-joins of $G$. Edmonds and Johnson \cite{tjoin} showed the following description for the $O\join(G)$.
\begin{align}
O\join(G) = \{ z\in [0,1]^E\;: \;& z(\delta(U)\setminus A)-z(A)\geq 1-|A|\label{o-join-exact}\\
&\quad\quad\text{ for } U\subseteq V, A\subseteq \delta(U), |U\cap O|+|A| \text{ odd}\}.\nonumber
\end{align}

\begin{observation}\label{obs:tjoinbasic}
If $x\in \subt(K_n)$, then $\frac{x}{2}\in O\join(K_n)$ for any $O\subseteq V_n$ with $|O|$ even. 
\end{observation}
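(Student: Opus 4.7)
The plan is to verify Edmonds--Johnson's description (\ref{o-join-exact}) for $z := \tfrac{x}{2}$ directly. Membership in $[0,1]^{E_n}$ is immediate, since $x \in \subt(K_n)$ forces $x_e \le 1$ on every edge. So I only have to check, for every pair $(U,A)$ with $U \subseteq V_n$, $A \subseteq \delta(U)$, and $|U\cap O| + |A|$ odd, that
\[
\tfrac{1}{2}\, x(\delta(U)\setminus A) \;-\; \tfrac{1}{2}\, x(A) \;\geq\; 1 - |A|,
\]
which, rearranged, is the single inequality
\[
x(\delta(U)) \;\geq\; 2 \;-\; 2\bigl(|A| - x(A)\bigr).
\]

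Next I would dispose of the trivial extremes. Assuming (as is needed for any $O$-join to exist) that $|O|$ is even, whenever $U = \emptyset$ or $U = V_n$ one has $\delta(U) = \emptyset$, hence $A = \emptyset$, and then $|U\cap O| + |A|$ equals $0$ or $|O|$, which is even and so violates the parity condition; no constraint from (\ref{o-join-exact}) arises in this case.

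The remaining case is $\emptyset \subset U \subset V_n$, where the argument is a one-liner combining the two natural bounds available in the subtour polytope. Because $x \in \subt(K_n)$, the subtour constraint gives $x(\delta(U)) \geq 2$. Because $x_e \leq 1$ for every edge, $x(A) \leq |A|$, so $|A| - x(A) \geq 0$, and the right-hand side above is at most $2$. Thus $x(\delta(U)) \geq 2 \geq 2 - 2(|A| - x(A))$, completing the verification.

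There is really no obstacle here: the observation reduces to applying exactly the two defining features of the subtour polytope ($x(\delta(U)) \geq 2$ and $x_e \leq 1$) to the Edmonds--Johnson inequalities. The only subtle point is the parity bookkeeping, which forces the degenerate cases $U \in \{\emptyset, V_n\}$ out of consideration so that one can freely invoke the cut constraint.
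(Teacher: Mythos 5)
Your proof is correct and follows essentially the same route as the paper: verify the Edmonds--Johnson inequalities directly by combining the subtour cut constraint $x(\delta(U))\geq 2$ with the edge bound $x_e\leq 1$ (the paper phrases this as $z(\delta(U))\geq 1$ and $z(A)\leq |A|/2$ for $z=x/2$). Your extra remark that the parity condition rules out $U\in\{\emptyset,V_n\}$ (and that $|O|$ must be even, correcting the statement's apparent typo) is a small but welcome tightening of the paper's otherwise identical argument.
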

\begin{proof}
Let $z=\frac{x}{2}$. For $U\subseteq V_n$, we have $z(\delta(U))\geq 1$. Moreover, for $e\in E_n$, we have $z_e \leq \frac{1}{2}$ since $x_e\leq 1$. Therefore, for $A\subseteq \delta(U)$ we have $z(A) \leq \frac{|A|}{2}$. This implies $z(\delta(U))-2z(A) \geq 1- |A|$. 
\end{proof}

The following observation states a useful property of 
$O$-joins that can be obtained via \eqref{o-join-exact}.

\begin{observation}\label{humbleobservation}
	Let $G=(V,E)$ be a graph, and let $O\subseteq V$ be a subset
	of vertices such that $|O|$ is even.  Let $z\in O\join(G)$,
	and $z(\delta(u))\leq 1$ for all $u\in V$. Then $z$ can be
	written as convex combination of $O$-joins  of $G$ denoted by $\{\psi,\cal{J}\}$ such that for $u\in O$ we have
	$|J\cap \delta(u)|=1$ for $J\in \cal{J}$. 
\end{observation}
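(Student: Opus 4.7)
The plan is to combine the defining inequalities of the $O$-join polytope with a short integrality/averaging argument. First I would show that, under the hypotheses $z \in O\join(G)$ and $z(\delta(u)) \leq 1$ for all $u \in V$, every $u \in O$ in fact satisfies $z(\delta(u)) = 1$ exactly. To see this, apply inequality (\ref{o-join-exact}) with $U = \{u\}$ and $A = \emptyset$: since $|U \cap O| + |A| = 1$ is odd, the inequality reads $z(\delta(u)) \geq 1$, and combined with the hypothesis this pins $z(\delta(u)) = 1$.

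Second, since $z \in O\join(G)$, I would write $z$ as some convex combination $\{\psi, \mathcal{J}\}$ of $O$-joins of $G$. Such a decomposition exists by Carath\'eodory's theorem, and by standard polyhedral machinery---the $O$-join polytope admits a polynomial-time separation oracle via minimum $T$-cut computations---it can be produced in polynomial time in the size of $z$, matching the paper's convention for being ``written as a convex combination''.

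Finally, I would conclude via a one-line averaging argument. Fix $u \in O$; every $O$-join $J$ in the support of $\psi$ has $|J \cap \delta(u)|$ a positive odd integer, so $|J \cap \delta(u)| \geq 1$. Taking expectations under $\psi$ yields
\begin{equation*}
1 \;=\; z(\delta(u)) \;=\; \sum_{J \in \mathcal{J}} \psi_J\, |J \cap \delta(u)| \;\geq\; \sum_{J \in \mathcal{J}} \psi_J \;=\; 1,
\end{equation*}
so equality holds throughout, forcing $|J \cap \delta(u)| = 1$ for every $J$ in the support of $\psi$ and every $u \in O$, as required. I do not expect a real obstacle here: the whole argument rests on spotting that the singleton-vertex inequality at each $u \in O$ tightens $z(\delta(u))$ to exactly $1$, after which averaging over positive odd integers closes the proof. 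The only mildly delicate point is ensuring the decomposition is algorithmic, but this is standard given polynomial-time $O$-join separation.
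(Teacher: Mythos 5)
Your proposal is correct and follows essentially the same route as the paper: decompose $z$ into a convex combination of $O$-joins, note each join meets $\delta(u)$ in at least one edge for $u\in O$, and conclude by averaging against $z(\delta(u))\leq 1$. The preliminary step pinning $z(\delta(u))=1$ via the polytope inequality is harmless but unnecessary, since the averaging argument already works with the hypothesis $z(\delta(u))\leq 1$.
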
 

\begin{proof}
	By \cite{tjoin}, if $z\in O\join(G)$, then $z$ can be
	written as a convex combination of $O$-joins of $G$ denoted by $\{\psi,\cal{J}\}$. Let
	$u$ be a vertex in $O$. We have $z(\delta(u))\leq 1$. On the
	other hand, for every $J\in \mathcal{J}$ we have
	$|J\cap \delta(u)| \geq 1$. Therefore, $|J\cap \delta(u)|
	=1$.  \end{proof}

\subsubsection{Proof of Theorem \ref{3/2}: Polyhedral Analysis of Christofides}\label{proofofchris}

Now, we are ready to prove Theorem \ref{3/2}. 
\christofides*
\begin{proof}
Let $x\in \subt(K_n)$.  Then by Observation \ref{obs:subtinvtree},
$x\in v\vtree(K_n)$ for some $v\in V_n$. Hence, we can find $v$-trees
$\mathcal{T}$ and convex multipliers $\lambda$ for $\mathcal{T}$ such
that $x = \sum_{T\in \mathcal{T}}\lambda_T\chi^{T}$. For each
$T\in\mathcal{T}$, let $O_T$ be the set of odd degree vertices of
$T$. Notice that $\frac{x}{2} \in O_T\join(K_n)$ for all $T\in
\mathcal{T}$.  This implies that $\frac{x}{2}$ can be written as a
convex combination of $O_T$-joins $\mathcal{J}^T$ of $K_n$ with convex
multipliers $\theta= \{\theta_J\}_{J\in \mathcal{J}^T}$ (i.e.,
$\frac{x}{2} = \sum_{J \in \mathcal{J}^T} \theta_J \chi^J$).  Notice
that for $T\in \mathcal{T}$ and $J\in \mathcal{J}^T$, multigraph $T+J$
is a tour of $K_n$.
Hence, $\sum_{T\in \mathcal{T}}\lambda_T\sum_{J\in
  \mathcal{J}^T}\theta_J \chi^{T+J}\in \tsp(K_n)$.
Moreover, 
\begin{eqnarray*}
\sum_{T\in \mathcal{T}}\lambda_T\sum_{J\in
  \mathcal{J}^T}\theta_J \chi^{T+J}& = &
\sum_{T\in \mathcal{T}}\lambda_T (\chi^T + \sum_{J\in
  \mathcal{J}^T}\theta_J \chi^{J})\\& = & 
\sum_{T\in \mathcal{T}}\lambda_T \chi^T + \sum_{T \in \mathcal{T}}\lambda_T\sum_{J\in
  \mathcal{J}^T}\theta_J \chi^{J}\\ & = & 
x + \sum_{T \in \mathcal{T}} \lambda_T \frac{x}{2}\\ 
& = & x + \frac{x}{2} \sum_{T \in \mathcal{T}} \lambda_T\\ &  = & \frac{3}{2}x. 
\end{eqnarray*}
 Therefore,
$\frac{3}{2}x\in \tsp(K_n)$.
\end{proof}

\section{Matching Patterns: Gluing Tours Over Critical Cuts}\label{sec:matching-patterns}

Let $x$ be a $\theta$-cyclic point and $G_x=(V_n,E_x)$ be the support
of $x$.  In this section, we present an approach for gluing tours over
critical cuts of $G_x$.  One property of the tours we construct,
which is crucial to enable this gluing procedure, is that every tour
contains at least one copy of each 1-edge. 
This allows us to assume that
$x$ belongs to a subclass of $\theta$-cyclic points in which (i) $G_x$
is cubic, (ii) $W_x$, the 1-edges of $G_x$, form a perfect matching,
and (iii) $H_x$, the fractional edges of $G_x$, form a 2-factor.  
We can make this assumption, because we can contract
a path of 1-edges to a single 1-edge; the tour of the new cubic graph
yields a tour for the original subcubic graph.
We
work under this assumption throughout this section and in Section
\ref{sec:tour-base-case}.

For a vertex $u\in V$, denote by $e_u$ the unique 1-edge in
$G_x$ that is incident on $u$.  Let $\delta(u)= \{e_u,\fu,\gu\}$ where
$\fu$ and $\gu$ are the two fractional edges incident on $u$ and
$x_{\fu} = \theta$ and $x_{\gu} = 1-\theta$.  In each tour of $G_x$, a
multiset of edges from $\delta(u)$ belongs to the tour.  We call this
multiset {\em{the pattern around $u$}}.  Denote by ${\mathbb{P}}_u$
the set of possible patterns around a vertex $u$ in a tour of $G_x$
that contain at least one copy of the 1-edge $e_u$ and in which $u$
has degree either two or four (see Figure \ref{fig:patterns}).
\begin{equation*}
{{\mathbb{P}}}_u = \{ \{2e_u\},
\{e_u, \fu\}, \{e_u,\gu\}, \{2e_u,2\fu\}, \{2e_u,2\gu\} ,\{2e_u,\fu,\gu\},
\{e_u,2\fu,\gu\}, \{e_u,\fu,2\gu\}\}.
\end{equation*}
In every tour we construct, the pattern around each vertex $u \in V$
will be some pattern from $\mathbb{P}_u$.  There are other multisets
of $\delta(u)$ that can be valid patterns around $u$ in a tour.  For
example, the pattern $\{\fu,\gu\}$ could be the pattern around $u$ in
some tour.  However, in our construction this pattern will never be
the pattern around $u$ as we always include at least one copy of $e_u$
in a tour.  Formally, we write $\delta_F(u) = p$ if the pattern
around $u$ in the tour $F$ is $p \in \mathbb{P}_u$.

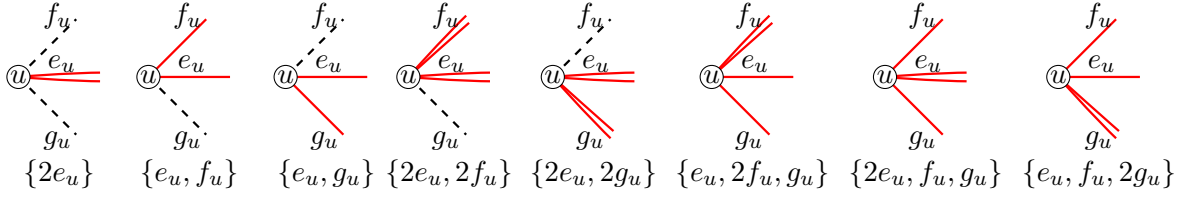
\begin{figure}[h]
	\centering
	\begin{tikzpicture}[scale=0.32]
	\begin{scope}[scale=0.6]
	\draw [-] [red, line width=0.3mm] plot [smooth, tension=1] coordinates {(0,0) (2.82,0.2) (5.65,0.3)};
	\draw [-] [red, line width=0.3mm] plot [smooth, tension=1] coordinates {(0,0) (2.82,-0.2)(5.65,-0.3)};		
	\draw [dashed] [black, line width=0.3mm] plot [smooth, tension=0] coordinates {(0,0) (4,4)};
	\draw [dashed] [black, line width=0.3mm] plot [smooth, tension=0] coordinates {(0,0) (4,-4)};	
	\draw[black,fill=white] (0,0) ellipse (0.8 cm  and 0.8 cm);
	\node (u) at (0,0) {$u$};
	\node (eu) at (3,0.9) {$e_u$};
	\node (f) at (2.7,4.25) {$f_u$};
	\node (g) at (2.7,-4.4) {$g_u$};
	\node (title) at (2.82,-6.7) {$\{2e_u\}$};
	\end{scope}
	\begin{scope}[scale=0.6,xshift=9cm]
	\draw [-] [red, line width=0.3mm] plot [smooth, tension=0] coordinates {(0,0) (5.65,0)};
	\draw [-] [red, line width=0.3mm] plot [smooth, tension=0] coordinates {(0,0) (4,4)};
	\draw [dashed] [black, line width=0.3mm] plot [smooth, tension=0] coordinates {(0,0) (4,-4)};	
	\draw[black,fill=white] (0,0) ellipse (0.8 cm  and 0.8 cm);
	\node (u) at (0,0) {$u$};
	\node (eu) at (3,0.9) {$e_u$};
	\node (f) at (2.7,4.25) {$f_u$};
	\node (g) at (2.7,-4.4) {$g_u$};
	\node (title) at (2.82,-6.7) {$\{e_u,f_u\}$};
	\end{scope}
	\begin{scope}[scale=0.6,xshift=18.5cm]
	
	\draw [-] [red, line width=0.3mm] plot [smooth, tension=0] coordinates {(0,0) (5.65,0)};
	\draw [dashed] [black, line width=0.3mm] plot [smooth, tension=0] coordinates {(0,0) (4,4)};
	\draw [-] [red, line width=0.3mm] plot [smooth, tension=0] coordinates {(0,0) (4,-4)};	
	\draw[black,fill=white] (0,0) ellipse (0.8 cm  and 0.8 cm);
	\node (u) at (0,0) {$u$};
	\node (eu) at (3,0.9) {$e_u$};
	\node (f) at (2.7,4.25) {$f_u$};
	\node (g) at (2.7,-4.4) {$g_u$};
	\node (title) at (2.82,-6.7) {$\{e_u,g_u\}$};
	\end{scope}
	\begin{scope}[scale=0.6,xshift=37cm,yshift=-0cm]
	\draw [-] [red, line width=0.3mm] plot [smooth, tension=1] coordinates {(0,0) (2.82, 0.2) (5.65,0.3)};
	\draw [-] [red, line width=0.3mm] plot [smooth, tension=1] coordinates {(0,0) (2.82,-0.2) (5.65,-0.3)};		
	\draw [dashed] [black, line width=0.3mm] plot [smooth, tension=0] coordinates {(0,0) (4,4)};
	\draw [-] [red, line width=0.3mm] plot [smooth, tension=1] coordinates {(0,0) (2,-1.8) (4.2,-3.75)};
	\draw [-] [red, line width=0.3mm] plot [smooth, tension=1] coordinates {(0,0) (2,-2.2) (4,-4.25)};	
	\draw[black,fill=white] (0,0) ellipse (0.8 cm  and 0.8 cm);
	\node (u) at (0,0) {$u$};
	\node (eu) at (3,0.9) {$e_u$};
	\node (f) at (2.6,4.25) {$f_u$};
	\node (g) at (2.6,-4.4) {$g_u$};
	\node (title) at (2.82,-6.7) {$\{2e_u,2g_u\}$};
	\end{scope}
	\begin{scope}[scale=0.6,xshift=27cm]
	\draw [-] [red, line width=0.3mm] plot [smooth, tension=1] coordinates {(0,0) (2.82, 0.2) (5.65,0.3)};
	\draw [-] [red, line width=0.3mm] plot [smooth, tension=1] coordinates {(0,0) (2.82,-0.2) (5.65,-0.3)};		
	\draw [-] [red, line width=0.3mm] plot [smooth, tension=1] coordinates {(0,0) (2,2.2) (4,4.25)};
	\draw [-] [red, line width=0.3mm] plot [smooth, tension=1] coordinates {(0,0) (2,1.8) (4.2,3.75)};
	\draw [dashed] [black, line width=0.3mm] plot [smooth, tension=0] coordinates {(0,0) (4,-4)};	
	\draw[black,fill=white] (0,0) ellipse (0.8 cm  and 0.8 cm);
	\node (u) at (0,0) {$u$};
	\node (eu) at (3,0.9) {$e_u$};
	\node (f) at (2.6,4.25) {$f_u$};
	\node (g) at (2.6,-4.4) {$g_u$};
	\node (title) at (2.82,-6.7) {$\small\{2e_u,2f_u\}$};
	\end{scope}
	\begin{scope}[scale=0.6,xshift=60cm,,yshift=0cm]
	\draw [-] [red, line width=0.3mm] plot [smooth, tension=1] coordinates {(0,0) (2.82, 0.2) (5.65,0.3)};
	\draw [-] [red, line width=0.3mm] plot [smooth, tension=1] coordinates {(0,0) (2.82,-0.2) (5.65,-0.3)};		
	\draw [-] [red, line width=0.3mm] plot [smooth, tension=0] coordinates {(0,0) (4,4)};
	\draw [-] [red, line width=0.3mm] plot [smooth, tension=0] coordinates {(0,0)  (4,-4)};	
	\draw[black,fill=white] (0,0) ellipse (0.8 cm  and 0.8 cm);
	\node (u) at (0,0) {$u$};
	\node (eu) at (3,0.9) {$e_u$};
	\node (f) at (2.6,4.25) {$f_u$};
	\node (g) at (2.6,-4.4) {$g_u$};
	\node (title) at (2.82,-6.7) {$\{2e_u,f_u,g_u\}$};
	\end{scope}
	\begin{scope}[scale=0.6,xshift=72cm,yshift=0cm]
	\draw [-] [red, line width=0.3mm] plot [smooth, tension=0] coordinates {(0,0)  (5.65,0)};
	\draw [-] [red, line width=0.3mm] plot [smooth, tension=0] coordinates {(0,0)  (4,4)};
	\draw [-] [red, line width=0.3mm] plot [smooth, tension=1] coordinates {(0,0) (2,-1.8) (4.2,-3.75)};
	\draw [-] [red, line width=0.3mm] plot [smooth, tension=1] coordinates {(0,0) (2,-2.2) (4,-4.25)};	
	\draw[black,fill=white] (0,0) ellipse (0.8 cm  and 0.8 cm);
	\node (u) at (0,0) {$u$};
	\node (eu) at (3,0.9) {$e_u$};
	\node (f) at (2.8,4.25) {$f_u$};
	\node (g) at (2.8,-4.4) {$g_u$};
	\node (title) at (2.82,-6.7) {$\{e_u,f_u,2g_u\}$};
	\end{scope}
	\begin{scope}[scale=0.6,xshift=48cm,yshift=0]
	\draw [-] [red, line width=0.3mm] plot [smooth, tension=0] coordinates {(0,0) (5.65,0)};
	\draw [-] [red, line width=0.3mm] plot [smooth, tension=1] coordinates {(0,0) (2,2.2) (4,4.25)};
	\draw [-] [red, line width=0.3mm] plot [smooth, tension=1] coordinates {(0,0) (2,1.8) (4.2,3.75)};
	\draw [-] [red, line width=0.3mm] plot [smooth, tension=0] coordinates {(0,0)  (4,-4)};	
	\draw[black,fill=white] (0,0) ellipse (0.8 cm  and 0.8 cm);
	\node (u) at (0,0) {$u$};
	\node (eu) at (3,0.9) {$e_u$};
	\node (f) at (2.6,4.25) {$f_u$};
	\node (g) 	at (2.6,-4.4) {$g_u$};
	\node (title) at (2.82,-6.7) {$\{e_u,2f_u,g_u\}$};
	\end{scope}
	\end{tikzpicture}
	\caption{The different patterns in $\mathbb{P}_u$. The red solid edges are in the tour and black dashed edges are not used in the handpicked tour.}
	\label{fig:patterns}
\end{figure}

\begin{definition}
A tour $F$ of $G_x=(V_n,E_x)$ is a {\em{handpicked tour of $G_x$}} if
for all $u\in V_n$, the pattern around $u$ in $F$ belongs to 
$\mathbb{P}_u$ (i.e., if $\delta_F(u) \in \mathbb{P}_u$ for all
$u \in V_n$). 
\end{definition}

If $\mathscr{F}$ is a set of tours and $\phi = \{\phi_F\}_{F \in
  \mathscr{F}}$ is a set of convex multipliers, then we use $\phi(p)=
\sum_{F\in \mathscr{F}: \delta_F(u)= p}\phi_F$ to denote the {\em
  pattern frequency} of $p \in \mathbb{P}_u$ in the convex combination
$\{\phi, \mathscr{F}\}$.  Notice that $\phi(p) \in [0,1]$.  Moreover,
if $\mathscr{F}$ is a set of handpicked tour of $G_x$, then for all $u \in V_n$, we
have $\sum_{p \in \mathbb{P}_u} \phi(p) = 1$.
For each $u \in V_n$, we define the {\em pattern profile} of vertex
$u$ in the convex combination $\{\phi, \mathscr{F}\}$ to be the eight
values $\{\phi(p)\}$ for all $p \in \mathbb{P}_u$. 

Another key parameter of a convex combination is the frequency of
doubled edges.  For the convex combination $\{\phi, \mathscr{F}\}$,
define $\phi_2(e) = \sum_{F\in \mathscr{F}: \chi^F_e = 2}\phi_F$ for
all $e \in E_x$.  The pattern profile of a vertex $u$ turns out to be
directly related to the occurence of doubled edges from $\delta(u)$.
This dependence is formalized in the next observation, which states that for
a convex combination of handpicked tours, if the parameters
$\phi_2(e)$ are fixed for $e \in \delta(u)$, then the pattern profile
for each vertex $u$ depends only on the pattern frequency of the
pattern $\{2e_u\}$.

\begin{observation}\label{magic-lemma}
Let $y$ and $q$ be vectors in $\mathbb{R}^{E_x}_{\geq 0}$. Suppose $y$ can
be written as a convex combination of tours of $G_x$ denoted by $\{\phi,\mathscr{F}\}$ such that
for all $e \in E_x$, we have $\phi_2(e) = q_e$.

Then for each vertex $u\in V_n$ and for each pattern $p \in
\mathbb{P}_u$, the frequency of pattern $p$, $\phi(p)$, in this convex
combination
is uniquely determined by the frequency of pattern $\{2e_u\}$.
 \end{observation}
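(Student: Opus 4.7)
The plan is to write down the natural linear system satisfied by the pattern frequencies $\alpha_p := \phi(p)$ for $p \in \mathbb{P}_u$ and to solve for each $\alpha_p$ in terms of the fixed value of $\alpha_{\{2e_u\}}$ and the data $\{y_e, q_e\}_{e \in \delta(u)}$. The eight frequencies satisfy seven natural constraints: the normalization $\sum_{p \in \mathbb{P}_u} \alpha_p = 1$, which holds because $\mathscr{F}$ consists of handpicked tours; the three edge-marginal equations $y_e = \sum_p \alpha_p \, m_p(e)$ for $e \in \{e_u, f_u, g_u\}$, where $m_p(e) \in \{0,1,2\}$ is the multiplicity of $e$ in $p$; and the three doubling equations $q_e = \sum_{p\,:\, m_p(e) = 2} \alpha_p$.

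The strategy begins with the three doubling equations. Since $f_u$ is doubled only in $\{2e_u, 2f_u\}$ and $\{e_u, 2f_u, g_u\}$, the $q_{f_u}$-equation gives $\phi(\{e_u, 2f_u, g_u\}) = q_{f_u} - \phi(\{2e_u, 2f_u\})$; the analogous statement holds for $g_u$; and the $q_{e_u}$-equation combined with the given $\phi(\{2e_u\})$ yields $\phi(\{2e_u, f_u, g_u\}) = q_{e_u} - \phi(\{2e_u\}) - \phi(\{2e_u, 2f_u\}) - \phi(\{2e_u, 2g_u\})$. Substituting these three identities into the $y_{f_u}$- and $y_{g_u}$-equations then expresses $\phi(\{e_u, f_u\})$ and $\phi(\{e_u, g_u\})$ as explicit linear functions of the two remaining unknowns $\phi(\{2e_u, 2f_u\})$ and $\phi(\{2e_u, 2g_u\})$ together with the local data.

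The final step uses the normalization $\sum_p \alpha_p = 1$ together with the built-in identity $y_{e_u} = 1 + q_{e_u}$, which holds automatically because every handpicked tour uses at least one copy of $e_u$. This identity reveals a one-dimensional linear dependence among the seven natural constraints, so specifying the single number $\phi(\{2e_u\})$ supplies exactly the extra piece of data required to close the system. Combining the resulting relations pins down $\phi(\{2e_u, 2f_u\})$ and $\phi(\{2e_u, 2g_u\})$, and hence every other $\alpha_p$, as linear functions of $\phi(\{2e_u\})$ and the local data $\{y_e, q_e\}_{e \in \delta(u)}$.

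The main obstacle is simply the careful bookkeeping of the sequential eliminations and the correct identification of the single natural redundancy among the seven equations, namely $y_{e_u} = 1 + q_{e_u}$, so that fixing the frequency of the pattern $\{2e_u\}$ indeed suffices to determine all eight pattern frequencies as asserted.
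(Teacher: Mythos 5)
You follow the same route as the paper: set up the linear system on the eight pattern frequencies given by the three edge marginals $y_e$, the three doubling frequencies $q_e$, the normalization, and the prescribed value of $\phi(\{2e_u\})$, and argue that it determines all eight frequencies. Your eliminations in the second paragraph are correct. The gap is in the final step. You rightly observe that $y_{e_u}=1+q_{e_u}$ is a built-in identity, i.e., that the constraints are linearly dependent; but that dependence means that after your eliminations only \emph{one} independent equation remains for the \emph{two} leftover unknowns $\phi(\{2e_u,2f_u\})$ and $\phi(\{2e_u,2g_u\})$. Substituting $\phi(\{e_u,2f_u,g_u\})=q_{f_u}-\phi(\{2e_u,2f_u\})$, $\phi(\{e_u,f_u,2g_u\})=q_{g_u}-\phi(\{2e_u,2g_u\})$, $\phi(\{2e_u,f_u,g_u\})=q_{e_u}-\phi(\{2e_u\})-\phi(\{2e_u,2f_u\})-\phi(\{2e_u,2g_u\})$ and the resulting expressions for $\phi(\{e_u,f_u\})$ and $\phi(\{e_u,g_u\})$ into either the normalization or the $y_{e_u}$-marginal yields in both cases the \emph{same} relation $2\phi(\{2e_u,2f_u\})+2\phi(\{2e_u,2g_u\})=\mathrm{const}$, so the two cannot be separated. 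A concrete witness for the residual degree of freedom: adding $\epsilon$ to the frequencies of $\{e_u,g_u\}$, $\{2e_u,2f_u\}$, $\{e_u,f_u,2g_u\}$ and subtracting $\epsilon$ from those of $\{e_u,f_u\}$, $\{2e_u,2g_u\}$, $\{e_u,2f_u,g_u\}$ preserves all three marginals, all three doubling frequencies, the total mass, and $\phi(\{2e_u\})$. Hence the system you set up has a one-parameter family of solutions, and the asserted conclusion does not follow from these equations alone.

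For comparison, the paper's own proof is the same argument and is no more careful at exactly this point: it writes down the same eight equations and asserts that eight equations in eight variables have a unique solution, which presupposes that the coefficient matrix is nonsingular; the perturbation above shows it is singular. So you have not overlooked an ingredient that the paper supplies---your more explicit elimination simply surfaces the obstruction that the paper's one-line justification glosses over. Closing the gap would require some additional property of the constructed tours, beyond the data $y$, $q$, and $\phi(\{2e_u\})$, that breaks the symmetry between doubling $f_u$ and doubling $g_u$ at $u$.
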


\begin{proof}
Suppose $\phi(\{2e_u\})=
\zeta_u$ for some $\zeta_u \in [0,1]$.  Then for each $u \in V_n$, the
following identities hold with respect to the convex combination
$\{\phi, \mathscr{F}\}$.
	\begin{align*}
	\sum_{p\in \mathbb{P}_u: \chi^p_{e} = 2}\phi(p)  &= q_{e} & \text{ for } e\in \{e_u,\fu,\gu\},\\
	\sum_{p\in \mathbb{P}_u: \chi^p_{e} = 1}\phi(p)  &= y_e - 2q_{e}
        & \text{ for } e\in \{e_u,\fu,\gu\},\\
	\sum_{p \in \mathbb{P}_u} \phi(p) &= 1,\\		\phi(\{2e_u\}) &= \zeta_u.
	\end{align*}
Since the above system of equations has full rank (i.e.,
the variables are $\phi(p)$ for $p \in \mathbb{P}_u$), it has a unique
solution.  Therefore, $\phi(p)$ is a function of $\zeta_u$ for all
$p\in \mathbb{P}_u$. \end{proof}

We apply Observation \ref{magic-lemma} to control the pattern profile
of a pseudovertex $u$ by constructing tours in which the pattern
frequency of $\{2e_u\}$ can be set arbitrarily.  This enables us to
prove Theorem \ref{main} with an inductive (gluing) approach.  For
such an approach to work, we need to prove a stronger statement.  Let
$\alpha$ be a constant in $(0,1]$ that we will fix later.

\begin{proposition}\label{main-induction}
Define $y \in \mathbb{R}^{E_x}_{\geq 0}$ as follows: $y_e =
\frac{3}{2}-\frac{\alpha\theta}{2}$ for $e \in W_x$ and $y_e =
\frac{3}{2}x_e$ for $e \in H_x$. Then $y$ can be written as a convex
combination of handpicked tours of $G_x$ denoted by
$\{\phi,\mathscr{F}\}$ such that
\begin{enumerate}

\item[(i)] $\phi_2(e) = \frac{1}{2}-\frac{\alpha\theta}{2}$, for $e\in W_x$, and

\item[(ii)] $\phi_2(e) = \frac{x_e^2}{2}$ for all $e\in H_x$.
\end{enumerate}
\end{proposition}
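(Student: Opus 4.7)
The plan is to prove Proposition~\ref{main-induction} by induction on the number of critical cuts of $G_x$, using the gluing-over-critical-cuts framework developed earlier in this section. To make the induction close, I strengthen the statement so that, in addition to properties (i) and (ii), for any designated vertex $u \in V_n$ the value $\phi(\{2e_u\})$ can be tuned to any $\zeta_u$ in a suitable feasible range. The base case, in which $G_x$ has no critical cuts, is deferred to Section~\ref{sec:tour-base-case}: that is where the handpicked tours are actually constructed, via a Christofides-style composition of rainbow $v$-trees (to control parity across vertex cuts and degenerate tight cuts) with $O$-joins whose total weight on each $1$-edge is strictly less than $1/2$.

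For the inductive step, fix a minimal critical cut $U \subset V_n$ and let the three cut edges be $\{e,f,g\}$ with $x_e = 1$, $x_f = \theta$, $x_g = 1-\theta$. Observation~\ref{obs:cornercut} implies that $\Gux$ has no critical cuts, so the base case (applied to the $\theta$-cyclic point obtained by contracting $\oU$) yields a convex combination $\{\phi^U, \mathscr{F}^U\}$ of handpicked tours of $\Gux$ satisfying (i) and (ii), and moreover realizing $\phi^U(\{2e\}) = \zeta^\star$ at the pseudovertex $v_{\oU}$ for some target value $\zeta^\star$ in the feasible range. By symmetry, Observation~\ref{obs:inductivecriticalcut} guarantees that $\Goux$ has strictly fewer critical cuts than $G_x$, so by the inductive hypothesis applied to $\Goux$ (with designated vertex $v_U$ and target frequency $\zeta^\star$) I obtain a convex combination $\{\phi^{\oU}, \mathscr{F}^{\oU}\}$ satisfying (i), (ii), and $\phi^{\oU}(\{2e\}) = \zeta^\star$.

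The two sides are then spliced by matching pattern profiles across the cut. Because properties (i) and (ii) pin down $\phi_2$ on every cut edge to the common values $\tfrac{1}{2}-\tfrac{\alpha\theta}{2}$, $\tfrac{\theta^2}{2}$, and $\tfrac{(1-\theta)^2}{2}$, and because the two sides agree on $\phi(\{2e\}) = \zeta^\star$, Observation~\ref{magic-lemma} forces the full eight-entry pattern profiles at $v_{\oU}$ in $\Gux$ and at $v_U$ in $\Goux$ to coincide, i.e.\ $\phi^U(p) = \phi^{\oU}(p)$ for every $p \in \mathbb{P}_{v_{\oU}} = \mathbb{P}_{v_U}$. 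After subdividing convex multipliers, I pair up tours $(F^1,F^2) \in \mathscr{F}^U \times \mathscr{F}^{\oU}$ with identical pseudovertex patterns and glue each pair by identifying the matching cut-edge multiplicities, producing a multigraph $F$ on $V_n$. Connectedness is immediate because both sides are connected and the $1$-edge $e$ lies in every glued tour (every pattern in $\mathbb{P}$ contains the $1$-edge); Eulerianness holds because degrees inside $U$ and $\oU$ are inherited from $F^1$ and $F^2$ and the cut-edge multiplicities match by construction; the handpicked property at each original vertex is preserved from its respective side. A direct bookkeeping argument then shows that the resulting convex combination for $y$ on $G_x$ satisfies (i) and (ii), and tunable control of $\phi(\{2e_u\})$ at any designated $u$ propagates from whichever side contains $u$.

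The main obstacle is therefore not the gluing step but the base case: on an instance with no critical cuts one must realize the precise doubling frequencies prescribed by (i) and (ii) using handpicked tours, while still being able to dial $\phi(\{2e_u\})$ to any value in a controlled range. This requires the machinery of Section~\ref{sec:tour-base-case}, most crucially the fact that a vector putting strictly less than $1/2$ on each $1$-edge still belongs to the appropriate $O$-join polytope, combined with a decomposition of $W_x$ into induced matchings and the rainbow $v$-tree result of Theorem~\ref{boyd-sebo-rainbow} to simultaneously handle the remaining vertex cuts and degenerate tight cuts.
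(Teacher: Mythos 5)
Your proposal follows the paper's proof in all essentials: induction on the number of critical cuts, the base case handled by Lemma~\ref{main-base-case}, the minimal critical cut isolated via Observations~\ref{obs:cornercut} and~\ref{obs:inductivecriticalcut}, pattern profiles forced to coincide through Observation~\ref{magic-lemma}, and the splice carried out as in Observation~\ref{gluenow}. The one substantive difference is the direction of the tuning. You strengthen the inductive statement so that $\phi(\{2e_u\})$ is tunable at an arbitrary designated vertex even on instances that still have critical cuts, fix $\zeta^\star$ on the base-case side $\Gux$ first, and then ask the inductive hypothesis on $\Goux$ to hit $\zeta^\star$ at $v_U$. The paper does the reverse: it applies the unstrengthened inductive hypothesis to $\Goux$, reads off whatever value $\zeta^\star = \phi^{\oU}(\{2e_{v_U}\})$ that combination happens to realize, and only then invokes Lemma~\ref{main-base-case} on $\Gux$ with $\zeta = \zeta^\star$; the tuning knob lives solely in the base-case lemma, so the proposition itself never needs to be strengthened. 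Your version can be made to work, but closing the strengthened induction takes more care than you acknowledge: if the designated vertex $u$ and the pseudovertex $v_{\oU}$ both lie on the base-case side, you would need Lemma~\ref{main-base-case} to prescribe the $\{2e\}$-frequency at two vertices of $\Gux$ simultaneously, which it does not do, so you must reorder (tune $u$ on its own side first, read off the realized pseudovertex frequency, and tune the opposite side to match) and also check that every read-off frequency lands in the admissible range $[0,\tfrac{(1-\alpha)\theta}{2}]$ demanded by the lemma. Since the proposition as stated requires no tunability, the paper's ordering is the cleaner route, but your argument is correct once that case split is made explicit.
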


Observe that Proposition \ref{main-induction} implies Theorem
\ref{main}.  As mentioned previously, it is in fact stronger; we
construct tours for the base cases ($\theta$-cyclic points whose support graphs
have no critical cuts) and the additional properties in Proposition
\ref{main-induction} enable us to ``glue'' these tours together over
the critical cuts of $G_x$.  Hence, our induction is on the number of
critical cuts in $G_x$.

Observation \ref{gluenow} gives sufficient conditions under which we
can glue tours of $G_x^{U}$ and $G_x^{\overline{U}}$ together over the critical
cut to obtain tours for $G_x$ that preserve key properties.

\begin{observation}\label{gluenow}
Let $U \subset V_n$ be a critical cut of $G_x$.  Suppose $x^U$ can be
written as a convex combination of handpicked tours of $\Gux$ denoted
by $\{\phi^U,\mathscr{F}^U\}$ with the following properties.  (And
suppose the same holds for $x^{\overline{U}}$, $\Goux$
$\{\phi^{\overline{U}},\mathscr{F}^{\overline{U}}\}$, respectively.)
\begin{enumerate}

\item[(i)] The pattern profiles of vertices $v_U$ and $v_{\overline{U}}$
  are the same in their respective convex combinations. 

\item[(ii)] For every tour $F \in \mathscr{F}^U$,
  $F\setminus{\delta(v_{\overline{U}})}$ induces a connected multigraph on
  $U$.  

\end{enumerate}
Then $x$ can be written as a combination of handpicked tour of $G_x$ denoted by $\{\phi,\mathscr{F}\}$ such that
\begin{enumerate}
\item[(a)] $\phi_2(e) = \phi^U_2(e)$ for $e \in E(\Gux)$, 
\item[(b)] $\phi_2(e) = \phi^{\overline{U}}_2(e)$ for $e \in
  E(\Goux)$, and
 \item[(c)] $|\mathscr{F}|\leq |\mathscr{F}^U|+ |\mathscr{F}^{\overline{U}}|$.
\end{enumerate}
\end{observation}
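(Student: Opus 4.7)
The plan is to glue the two convex combinations by matching tours according to their pattern at the pseudovertex. Note that $\mathbb{P}_{v_U}$ and $\mathbb{P}_{v_{\overline{U}}}$ can be naturally identified, because in both cases the three incident edges are precisely the three cut edges $\delta(U) = \{e, f, g\}$. Condition (i) then gives $\phi^U(p) = \phi^{\overline{U}}(p)$ for every pattern $p$ in this common space. For a pair $(F^U, F^{\overline{U}})$ with $\delta_{F^U}(v_{\overline{U}}) = \delta_{F^{\overline{U}}}(v_U) = p$, define the glued multigraph $F$ on $V_n$ by taking $\chi^F_e = \chi^{F^U}_e$ for $e \in E[U]$, $\chi^F_e = \chi^{F^{\overline{U}}}_e$ for $e \in E[\overline{U}]$, and $\chi^F_e = \chi^p_e$ for $e \in \delta(U)$.

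I would first check that $F$ is a handpicked tour of $G_x$. The degree of each vertex in $U$ equals its degree in $F^U$ (the three cut edges contribute the same multiplicities on both sides, since $v_{\overline{U}}$'s pattern in $F^U$ already encodes them), and symmetrically for $\overline{U}$, so $F$ is Eulerian. Spanning is immediate. Connectedness uses condition (ii) on both sides: $F^U \setminus \delta(v_{\overline{U}})$ is connected on $U$, $F^{\overline{U}} \setminus \delta(v_U)$ is connected on $\overline{U}$, and the three cut edges of $p$ (which use at least one $1$-edge since $U$ is critical) bridge the two sides. The pattern at any $u \in V_n$ in $F$ agrees with its pattern in $F^U$ or $F^{\overline{U}}$, hence lies in $\mathbb{P}_u$.

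Next, to produce the convex combination $\{\phi, \mathscr{F}\}$, I would process each pattern $p$ separately. Within the sub-collections $\mathscr{F}^U_p \subseteq \mathscr{F}^U$ and $\mathscr{F}^{\overline{U}}_p \subseteq \mathscr{F}^{\overline{U}}$ of tours realizing pattern $p$, lay out the weights of the two sides on a common interval of length $\phi^U(p) = \phi^{\overline{U}}(p)$ and take the common refinement of the two partitions. Each sub-interval of the refinement yields a glued tour, weighted by the sub-interval length. A standard counting argument shows this produces at most $|\mathscr{F}^U_p| + |\mathscr{F}^{\overline{U}}_p| - 1$ tours per pattern, so summing over all patterns gives $|\mathscr{F}| \leq |\mathscr{F}^U| + |\mathscr{F}^{\overline{U}}|$, yielding (c).

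Finally, I would verify the edge marginals. For $e \in E[U]$, the marginal of $e$ in $\{\phi, \mathscr{F}\}$ equals the marginal of $e$ in $\{\phi^U, \mathscr{F}^U\}$, which is $x^U_e = x_e$; symmetrically for $e \in E[\overline{U}]$. For $e \in \delta(U)$, the marginal in $\{\phi, \mathscr{F}\}$ is $\sum_p \phi^U(p) \cdot \chi^p_e$, which equals the marginal of $e$ in $\{\phi^U, \mathscr{F}^U\}$, hence $x^U_e = x_e$. The identical argument applies to doubled-edge frequencies, giving $\phi_2(e) = \phi^U_2(e)$ for $e \in E(G^U_x)$ and $\phi_2(e) = \phi^{\overline{U}}_2(e)$ for $e \in E(G^{\overline{U}}_x)$, so (a) and (b) hold. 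The main obstacle is the connectivity verification, which is precisely what forces the requirement (ii); without it, a tour on one side might decompose into two components touching only through the pseudovertex, and splicing in $F^{\overline{U}}$ would not reconnect them.
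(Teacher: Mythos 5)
Your proposal is correct and follows essentially the same route as the paper: glue pairs of tours whose patterns at the two pseudovertices coincide (your interval-refinement bookkeeping is the same as the paper's take-the-minimum-and-subtract argument, giving the same bound for (c)), and then read off the marginals and doubled-edge frequencies pattern by pattern. One small remark: you invoke condition (ii) on both sides for connectivity, whereas in the paper's application it is only available for the $U$-side (the base-case side); the one-sided version already suffices, because every component of $F^{\overline{U}}\setminus\delta(v_U)$ must meet $v_U$ and hence attaches, via a cut edge, to the connected piece induced on $U$.
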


\begin{proof}
First, we prove the following simple claim.

\begin{claim}\label{gluenow2}
	Consider a graph $G=(V,E)$ and nonempty $U\subset V$ such that
        $U$ is a 3-edge-cut in $G=(V,E)$.  Let $F_U$ be a tour in
        $G^U$ and let $F_{\overline{U}}$ be a tour in
        $G^{\overline{U}}$ such that $\chi^{F_U}_e =
        \chi^{F_{\overline{U}}}_e$ for $e\in \delta(U)$.  Moreover,
        assume that $F_U\setminus{\delta(v_{\overline{U}})}$ induces a
        connected multigraph on $U$.  Then the multiset of edges $F$
        defined as $\chi^{F}_e = \chi^{F_U}_e$ for $e\in E(G^U)$ and
        $\chi^{F}_e = \chi^{F_{\overline{U}}}_e$ for $e\in
        E(G^{\overline{U}})$ is a tour of $G$.
\end{claim}

\begin{cproof}
	It is clear that $F$ induces an Eulerian spanning multigraph on $G$,
	but we need to ensure that $F$ is connected.  For example, the
	tour induced on $F_{\overline{U}}\setminus{\delta(v_U)}$ might not
	be connected.  However, since the subgraph of $F_U$ induced on
	the vertex set $U$ is connected, the tour $F$ is connected:
	each vertex in $\overline{U}$ is connected to some vertex in $U$.
\end{cproof}

We observe that if the pattern profiles of $v_U$ and
$v_{\overline{U}}$ with respect to the convex combinations $\{\phi^U,
\mathscr{F}^U\}$ and $\{\phi^{\overline{U}}, \mathscr{F}^{\oU}\}$,
respectively, are the same, then we can always find two tours $F_U \in
\mathscr{F}$ and $F_{\overline{U}} \in \mathscr{F}^{\oU}$ such
that the pattern around $v_{\overline{U}}$ in $F_U$ is the same as the
pattern around $v_U$ in $F_{\overline{U}}$.  We can apply Claim
\ref{gluenow2} to obtain a new tour $F$, to which we assign convex
multiplier $\phi_F = \min\{\phi^U_{F_U},
\phi^{\oU}_{F_{\overline{U}}}\}$ and add to set $\mathscr{F}$.
Then we subtract $\phi_F$ from each of these convex multipliers,
remove tours with convex multipliers zero from $\mathscr{F}^U$ and
$\mathscr{F}^{\oU}$, and repeat. Observe the total number of tours in
$\mathscr{F}$ is at most $|\mathscr{F}^U|
+ |\mathscr{F}^{\oU}|$.

We need to show that each tour $F \in \mathscr{F}$ is handpicked.
This follows from the fact that the pattern around each vertex in $U$
in $F$ is the same as the pattern around it in $F_U$.  Moreover, each
edge $e \in E_x \cap \delta(U)$ is doubled in a tour $F$ of $G_x$ iff
it is doubled in both $F_U$ and in $F_{\overline{U}}$.  For $e \in
E(\Gux)\setminus{\delta(U)}$, edge $e$ is doubled iff it is doubled in
$F_U$.  Analogously, each vertex in $\overline{U}$ has the same
pattern in $F$ as it has in $F_{\overline{U}}$, and each edge $e \in
E(\Goux)\setminus{\delta(U)}$ is doubled iff it is doubled in
$F_{\overline{U}}$.  Thus, properties (a) and (b) hold for the convex
combination $\{\phi, \mathscr{F}\}$.
\end{proof}

In the base case (where graphs $G_x = (V_n,E_x)$ have no critical
cuts), each tour in the convex combination that we construct consists
of a connector plus a parity correction (as we will describe in Section
\ref{sec:tour-base-case}).  For each $u \in V_n$, the previously
introduced (and yet to be fixed) parameter $\alpha$ is a lower bound
on the fraction of connectors in this convex combination in which $u$
has degree two.  Let the parameter $\eta_u$ denote the (exact)
fraction of connectors in which vertex $u$ has degree one.  (Note that
$\eta_u$ can be different for every vertex.)  In our construction of a
convex combination of tours $\{\phi, \mathscr{F}\}$ for the base case,
it will be the case that the frequency of pattern $\{2e_u\}$ will be
equal to $\frac{\eta_u}{2}$.  Thus, controlling the value of $\eta_u$
allows us to control $\phi(\{2e_u\})$.  A key technical tool is that
when we construct a convex combination of tours $\{\phi,
\mathscr{F}\}$ for graph $G_x$ with no critical cuts (i.e., a base
case graph), we can always ensure that for one arbitrarily chosen
vertex $v$, the value of $\eta_v$ (and hence $\phi(\{2e_v\})$) can be
chosen arbitrarily.  This allows us to ensure that the pattern
frequency of $\{2e_{v_{\overline{U}}}\})$ equals the pattern frequency
of $\{2e_{v_U}\})$ in the given convex combination for $\Goux$ (i.e.,
Condition (i) in Observation \ref{gluenow}) when $\Gux$ is a base
case.

\begin{lemma}\label{main-base-case}
Suppose $G_x$ contains no critical cuts.
Fix any vertex $v \in V$
and fix constant $\zeta$ with $0 \leq \zeta \leq \frac{(1-\alpha)\theta}{2}$. 
Define
$y \in \mathbb{R}^{E_x}_{\geq 0}$ as follows:
$y_e = \frac{3}{2}-\frac{\alpha\theta}{2}$ for $e \in W_x$ and $y_e =
\frac{3}{2}x_e$ for $e \in H_x$. Then $y$ can be written as a convex combination of  handpicked tours of $G_x$ denoted by
$\{\phi,\mathscr{F}\}$ with the following properties.
\begin{enumerate}
\item[(i)] $\phi_2(e) = \frac{1}{2}-\frac{\alpha\theta}{2}$, for $e\in
  W_x$, 
\item[(ii)] $\phi_2(e) = \frac{x_e^2}{2}$ for all $e\in H_x$, 
\item[(iii)] $\phi(\{2e_v\}) = \zeta$, and
\item[(iv)] $F\setminus{\delta_F(v)}$ induces a connected multigraph on $V
  \setminus{v}$ for each $F \in \mathscr{F}$.
\end{enumerate}
	\end{lemma}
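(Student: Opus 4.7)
The plan is to build each tour as the sum of a \emph{connector} $C$ and an $O_C$-join $J$, in the spirit of the polyhedral proof of Theorem \ref{3/2}, but with additional structural control. Specifically, I would first produce a convex combination of connectors $\{\mu,\mathcal{C}\}$ of $G_x$ such that every $C \in \mathcal{C}$ contains each $1$-edge of $W_x$ exactly once; for every $u \in V_n$, the fraction of connectors in which $u$ is a leaf equals a prescribed value $\eta_u$; the fraction in which $|\delta_C(u)| \geq 2$ is at least $\alpha$; and, crucially, $\eta_v = 2\zeta$. This uses Theorem \ref{boyd-sebo-rainbow} with the partition $\mathcal{P} = \{\{e\} : e \in W_x\}$, whose rainbow condition forces every tree to include each $1$-edge, combined with a suitable reweighting of the matroid-intersection decomposition to tune the leaf frequencies. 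The bound $\zeta \leq \tfrac{(1-\alpha)\theta}{2}$ is precisely what ensures $\eta_v = 2\zeta$ is compatible with the ``at-least-$\alpha$'' requirement and with the parity-correction constraints appearing below. The no-critical-cuts hypothesis also lets me arrange $C \setminus \delta_C(v)$ to be connected on $V_n \setminus \{v\}$ for every $C \in \mathcal{C}$, yielding property (iv).

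Next, for each $C \in \mathcal{C}$, I would build a convex combination of $O_C$-joins $\{\psi^C,\mathcal{J}^C\}$ whose marginal is the common vector $z \in \mathbb{R}^{E_x}_{\geq 0}$ defined by $z_e = \tfrac{1}{2} - \tfrac{\alpha\theta}{2}$ on $W_x$ and $z_e = \tfrac{x_e}{2}$ on $H_x$. Membership $z \in O_C\join(G_x)$ must be verified from the Edmonds--Johnson description (\ref{o-join-exact}); the only potentially violated constraints come from $3$-edge cuts crossed by exactly one $1$-edge, and the no-critical-cuts hypothesis is exactly what rules these out. Since $z(\delta(u)) = 1 - \tfrac{\alpha\theta}{2} \leq 1$, Observation \ref{humbleobservation} gives $|J \cap \delta(u)| = 1$ whenever $u \in O_C$, and I would additionally arrange $\psi^C$ so that, conditional on $u$ being a leaf of $C$, the unique edge of $J$ at $u$ is $e_u$ with probability exactly $\tfrac{1}{2}$; the bound on $\zeta$ is what makes this conditional design feasible while preserving the unconditional marginal $z$.

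Setting $\phi_F = \mu_C\,\psi^C_J$ for each $F = C + J$ then produces a convex combination of Eulerian spanning multigraphs summing to $x + z = y$, with connectedness of each $F$ inherited from $C$. A short case analysis over $|\delta_C(u)| \in \{1,2,3\}$, together with the parity structure of an $O_C$-join, shows $\delta_F(u) \in \mathbb{P}_u$ for all $u \in V_n$, so every tour is handpicked. Properties (i) and (ii) then follow from $\phi_2(e) = \Pr[e \in J] = z_e$ on $W_x$ and $\phi_2(e) = \Pr[e \in C]\,\Pr[e \in J] = x_e \cdot \tfrac{x_e}{2}$ on $H_x$ (using that every $\psi^C$ marginalizes to the same $z$), property (iii) evaluates to $\phi(\{2e_v\}) = \eta_v \cdot \tfrac{1}{2} = \zeta$ by the conditional-$\tfrac{1}{2}$ design, and property (iv) is inherited from the structural requirement on $C$.

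The hard part will be the coordinated construction in the first two steps: producing a convex combination of connectors with a prescribed leaf-frequency profile and the property that $C \setminus \delta_C(v)$ is connected, while simultaneously designing parity correctors whose conditional $e_u$-usage at leaves is exactly $\tfrac{1}{2}$ and whose unconditional marginal is the savings vector $z$ lying in $O_C\join(G_x)$ for every $C \in \mathcal{C}$. This coordination is where the no-critical-cuts hypothesis on $G_x$ plays an essential role---both to secure the required connectivity of $C \setminus \delta_C(v)$ and to guarantee that $z$ belongs to the relevant $O$-join polytopes.
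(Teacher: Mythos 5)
Your overall architecture (connector plus $O$-join, multiplicative computation of the doubled-edge frequencies, tuning the leaf frequency of $v$ to get $\phi(\{2e_v\})=\zeta$) matches the paper's, but there is a genuine gap at the heart of the parity-correction step. You propose a \emph{single} savings vector $z$ with $z_e=\tfrac12-\tfrac{\alpha\theta}{2}$ on \emph{every} $1$-edge simultaneously, and claim that the only potentially violated constraints of \eqref{o-join-exact} are $3$-edge cuts crossed by one $1$-edge, which the no-critical-cuts hypothesis rules out. This is false: vertex cuts $U=\{u\}$ are tight $3$-edge cuts containing exactly one $1$-edge, and they are never critical cuts (critical cuts are proper by definition), so the hypothesis does nothing for them. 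For such a cut with $A=\emptyset$ and $u\in O_C$, the constraint reads $z(\delta(u))\geq 1$, while your $z$ gives $z(\delta(u))=(\tfrac12-\tfrac{\alpha\theta}{2})+\tfrac{\theta}{2}+\tfrac{1-\theta}{2}=1-\tfrac{\alpha\theta}{2}<1$. Since a connector of a cubic graph in which every vertex has even degree would be a Hamilton cycle, you cannot avoid odd-degree vertices in the connectors, so your $z$ simply does not lie in $O_C\join(G_x)$ for the connectors you would actually produce. (Degenerate tight cuts cause the same problem one level up.)

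The paper's mechanism exists precisely to circumvent this: $W_x$ is partitioned into five induced matchings $M_1,\dots,M_5$ with the extra cut properties of Lemma \ref{alpha}; for each $i$ one builds a family of connectors in which \emph{both endpoints of every edge of $M_i$ have degree exactly two} (Definition \ref{propertyP}, Lemmas \ref{lambda=0} and \ref{arbitlambda}, via rainbow $v$-trees that pair the two fractional half-edges at each such endpoint), so those endpoints are excluded from $O_T$; and the savings $\tfrac{\theta}{2}$ is applied only to the edges of $M_i$ in the $i$-th family (Lemma \ref{tjoin}). Averaging over the five families yields the factor $\alpha=\tfrac15$ --- which is also why your unspecified $\alpha$ cannot be chosen freely. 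Your rainbow partition $\mathcal{P}=\{\{e\}:e\in W_x\}$ is vacuous (every $1$-edge is in every tree anyway since $x_e=1$) and gives no control over the parity of the vertices where you want to save. Without the induced-matching rotation and the coordination between which edges are discounted and which vertices are forced to even degree in the connector, the construction does not go through.
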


Notice that Lemma \ref{main-base-case} implies Proposition
\ref{main-induction} for $\theta$-cyclic points whose support graphs
have no critical cuts. We prove Lemma \ref{main-base-case} in the next
section. In the remainder of this section, we show how Lemma
\ref{main-base-case} implies Proposition \ref{main-induction}.

\begin{proof}[Proof of Proposition \ref{main-induction}]
	Suppose $G_x$ contains $t$ critical cuts. We prove the
        statement by induction on $t$. In fact we show that the
        running time of our algorithm is polynomial in $n$ and $t$. To this end, we show that the convex combination in our
construction contains at most $btn^d$ trees, where $b$ and $d$ are
constants. Notice that $t$ itself is a polynomial bounded by $n$.
	
	If $t=0$, then $G_x$ does not contain a critical cut, then apply Lemma
        \ref{main-base-case}.  Otherwise, find the minimal critical
        cut $U$ of $G_x$. By Observations \ref{obs:cornercut} and
        \ref{obs:inductivecriticalcut}, graph $\Gux$ does not contain any critical cuts and $\Goux$ contains at most $t-1$ critical cuts. 
        
        Define $y^{\overline{U}}$ as follows: $y^{\overline{U}}_e=
        \frac{3}{2}-\frac{\alpha\theta}{2}$ for $e\in
        W_{x^{\overline{U}}}$ and $y^{\overline{U}}_e=
        \frac{x^{\overline{U}}}{2}$ for $e\in
        H_{x^{\overline{U}}}$. We apply the induction hypothesis on
        $\Goux$ to write $y^{\overline{U}}$ as convex combination of
        handpicked tours of $\Goux$ denoted by
        $\{\phi^{\overline{U}},\mathscr{F}^{\overline{U}}\}$ such that
        (i) $\phi^{\overline{U}}_2(e) =
        \frac{1}{2}-\frac{\alpha\theta}{2}$ for $e\in
        W_{x^{\overline{U}}}$, and (ii) $\phi_2(e) =
        \frac{(x^{\overline{U}}_e)^2}{2}$ for all $e\in
        H_{x^{\overline{U}}}$. By induction
        $|\mathscr{F}^{\overline{U}}|\leq b(t-1)(|\overline{U}|+1)^d$.
        		
    	Let $\zeta^*=\phi^{\overline{U}}(\{e_{v_U}\})$. Define
        $y^{U}_e=\frac{3}{2}-\frac{\alpha\theta}{2}$ for $e\in
        W_{x^U}$ and $y^U_e = \frac{x^U_e}{2}$ for $e\in
        H_{x^U}$. Applying Lemma \ref{main-base-case} to
        $\theta$-cyclic point $x^U$ with $\zeta =\zeta^*$ we can write
        $y^U$ as convex combination of handpicked tours of $\Gux$
        denoted by $\{\phi^U,\mathscr{F}^U\}$ such that (i)
        $\phi^U_2(e) = \frac{1}{2}-\frac{\alpha\theta}{2}$ for $e\in
        W_{x^U}$, (ii) $\phi^U_2(e) = \frac{(x^U_e)^2}{2}$ for all
        $e\in H_{x^U}$, (iii) $\phi^U(\{2e_v\}) = \zeta^*$, and (iv)
        $F\setminus{\delta_F(v_{\overline{U}})}$ induces a connected
        multigraph on $U$ for each $F \in \mathscr{F}^U$. Moreover
        $|\mathscr{F}^U|\leq b(|U|+1)^d$.

    	Now we apply Observation \ref{gluenow} to write $y$ as the desired convex combination. This convex combination contains at most $|\mathscr{F}^U|+|\mathscr{F}^{\overline{U}}| \leq b(|U|+1)^d +b(t-1)(|\overline{U}|+1)^d$. Note that for $d\geq 2$ this number is at most $btn^d$, as $|U|+|\overline{U}|=n$ , $|U|\geq 3$, and $|\overline{U}|\geq 3$. 
\end{proof}

\section{Finding Tours in the Base Case: Proof of
Lemma \ref{main-base-case}}\label{sec:tour-base-case}

In this section we present the proof of Lemma \ref{main-base-case}.
We fix $G_x = (V_n, E_x)$ to be the support graph of a $\theta$-cyclic
point $x$.  In the base case, $G_x$ contains no critical cuts.  (The
lemmas in this section will be applied to this base case, but some
hold even when $G_x$ is not a base case.)  Moreover, we remind the
reader that we assume the support of $G_x$ is cubic.  (See discussion
in the beginning of Section
\ref{sec:matching-patterns}.)

A key tool we will use is
that the 1-edges of $G_x$ can be partitioned into five induced
matchings in $G_x$.  A set $M \subset W_x$ is an {\em induced matching
  of $G_x$} if $M$ is a vertex induced subgraph of $G_x$ and $M$ is a
matching.  For each induced matching $M$, we find a set of connectors
$\mathcal{T}$ of $G_x$ where for each 1-edge $e$ in $M$, both
endpoints of $e$ have degree two in every $T\in \mathcal{T}$.  Thus,
when $G_x$ has no critical cuts, a 1-edge $e$ in $M$ does not belong
to any odd cuts in $T$ that are tight cuts in $G_x$.  For each 1-edge
$e$ in $M$, we can therefore reduce usage of $e$ in the parity
correction from $\frac{1}{2}$ to $\frac{1-\theta}{2}$; each 1-edge
saves $\frac{\theta}{2}$ exactly $\frac{1}{5}$ of the times. This
yields the saving of $\frac{\theta}{10}$ on the 1-edges as stated in
Lemma \ref{main-base-case} with $\alpha =\frac{1}{5}$.

The induced matchings require some additional properties that we need
for technical reasons as we will see later.  Recall that for a vertex
$u$ in $G_x$ we denote by $e_u$ the unique 1-edge incident on $u$.
Let $\N(u)$ denote the two vertices that are the other endpoints of
the fractional-edges incident on $u$. In other words, suppose
$\delta(u) = \{e_u, f_u, g_u\}$ and suppose that $w_1$ and $w_2$ are
the other endpoints of $f_u$ and $g_u$, respectively. Then $\N(u)
= \{w_1, w_2\}$.  The proof of Lemma \ref{alpha} is deferred to
Section \ref{ind-match}.

\begin{restatable}{lemma}{mainInducedMatching}
	\label{alpha}
Suppose $G_x$ has no critical cuts.  Let $v$ be a vertex in $V_n$ and
let $\N(v) = \{w_1, w_2\}$. The set of $1$-edges in $G_x$, $W_x$, can
be partitioned into five induced matchings $\{M_1, \dots, M_5\}$ such
that for $i\in [5]$, the following properties hold.  \begin{enumerate}
		
		\item[(i)] $|M_i \cap \{e_v, e_{w_1}, e_{w_2} \}| \leq 1$.
		
		\item[(ii)] For $U\subseteq V_n$ such that $|\delta(U)|=3$, 
		$|\delta(U)\cap M_i|\leq 1$.
		
		\item[(iii)] For $U\subseteq V_n$ such that $|\delta(U)|=2$, 
		$|\delta(U)\cap M_i|$ is even.

	\end{enumerate}
\end{restatable}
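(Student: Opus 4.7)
The plan is to recast the sought partition as a proper $5$-coloring of an auxiliary conflict graph on $W_x$. I declare two $1$-edges $e, e' \in W_x$ to be in conflict if any of the following holds: (a) some fractional edge of $H_x$ joins an endpoint of $e$ to an endpoint of $e'$; (b) there is a $3$-edge cut of $G_x$ containing both $e$ and $e'$; or (c) $\{e, e'\} \subseteq \{e_v, e_{w_1}, e_{w_2}\}$. Type (a) conflicts encode the induced matching property and, since $w_1, w_2$ are fractional neighbors of $v$, automatically cover the pairs $\{e_v, e_{w_i}\}$ of condition (i); type (b) encodes condition (ii); type (c) covers the remaining pair $\{e_{w_1}, e_{w_2}\}$ of condition (i). For condition (iii), note that any $2$-edge cut $\delta(U)$ must consist of two $1$-edges (since $x_e \leq 1$ and $x(\delta(U)) \geq 2$), and such pairs must share a color. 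Let $\sim$ be the equivalence relation on $W_x$ generated by these $2$-edge-cut pairs, and let $\widetilde\Gamma$ denote the resulting conflict graph modulo $\sim$. A proper $5$-coloring of $\widetilde\Gamma$ then yields induced matchings $M_1, \ldots, M_5$ satisfying (i)--(iii).

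To establish $\chi(\widetilde\Gamma) \leq 5$, I aim to bound the maximum degree of $\widetilde\Gamma$ by $4$ and then color greedily, reserving distinct colors upfront for the three special classes containing $e_v, e_{w_1}, e_{w_2}$. Type (a) conflicts contribute at most $4$ neighbors per $1$-edge because the contracted multigraph $G_x / W_x$ is $4$-regular and each of its edges corresponds to a type (a) conflict. Type (c) contributes at most one edge, touching only the distinguished $1$-edges, and is absorbed by the color reservation. The bulk of the work is to show that type (b) conflicts introduce no new neighbors. For this, the hypothesis that $G_x$ has no critical cuts is essential: every $3$-edge cut of $G_x$ must then fall into one of three structural types --- three $1$-edges, two $1$-edges plus one fractional edge, or a degenerate tight cut (one $1$-edge with two fractional edges sharing an endpoint). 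Degenerate tight cuts contain only a single $1$-edge and contribute no pairwise conflict. For the other two types, a structural case analysis on the smaller side of the cut shows that every pair of $1$-edges in the cut is either joined directly by a fractional edge (already a type (a) conflict) or connected by a chain of $2$-edge subcuts of $G_x$ (merged by $\sim$ and hence a single vertex of $\widetilde\Gamma$).

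The main obstacle is this last case analysis. The smaller side of a non-critical $3$-edge cut must itself contain no critical subcuts, and cubicity forces strong constraints on the fractional edges emanating from the $1$-edges crossing the cut. I anticipate a minimality argument: in a minimal counterexample, the smaller side reduces to a short arc of an $H_x$-cycle (possibly with internal $1$-edges forming $2$-edge subcuts), and a direct inspection produces the required fractional edge or $2$-edge subcut witnessing the conflict. Once the degree bound is established, conditions (i)--(iii) follow immediately from the construction of the conflict graph and of the $\sim$-equivalence.
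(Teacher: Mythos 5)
Your reduction to $5$-coloring a conflict graph is in the right spirit---the paper's base case is exactly a coloring argument, namely Brooks' theorem applied to the $4$-regular graph $H=G_x/W_x$ whose independent sets are the induced matchings---but the two load-bearing claims in your degree bound do not hold, and this is a genuine gap. First, after quotienting by $\sim$, the maximum degree of $\widetilde\Gamma$ is not $4$: a $\sim$-class obtained by chaining $2$-edge cuts can contain $m$ $1$-edges and hence have up to $4m$ distinct neighboring classes (each $1$-edge has four fractional neighbours, and there is no reason these collapse across the class); already a single $2$-edge cut $\{e_1,e_2\}$ with large sides on both ends typically gives the merged vertex eight type (a) neighbours. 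Second, the claim that type (b) conflicts are subsumed by type (a) and $\sim$ fails. By parity of $|\delta(U)\cap H_x|$ (the fractional edges form a $2$-factor), a $3$-edge cut contains either one or three $1$-edges; the problematic case is three $1$-edges with $|U|\geq 5$, which the lemma's hypothesis of ``no critical cuts'' does \emph{not} exclude (these cuts have no fractional crossing edge, so they are neither critical nor degenerate). For $\theta=\tfrac12$ take $|U|=5$ with $H_x[U]$ a $5$-cycle $s_1\,a\,s_2\,b\,s_3\,s_1$ and $ab\in W_x$: then $e_1=e_{s_1}$ and $e_2=e_{s_2}$ lie in a common $3$-edge cut but are joined by no fractional edge and by no chain of $2$-edge cuts, so you must add a genuine type (b) edge, pushing the degree past $4$. (Your taxonomy of $3$-edge cuts also lists an impossible type, ``two $1$-edges plus one fractional edge''.)

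The paper avoids both problems by \emph{decomposing} rather than coloring globally: it inducts on the number of ``bad'' (non-triangular proper) $3$-edge cuts and on the number of $2$-edge cuts, splitting $G_x$ along such a cut, solving each side recursively, and then \emph{permuting the five colors of one side} so that the crossing edges receive matching colors before taking unions. That freedom to independently relabel colors on the two sides of a cut is precisely what enforces conditions (ii) and (iii) across bad cuts without inflating any vertex degree, and it is not available to a single greedy coloring of one global conflict graph. Only in the fully decomposed base case (no bad $3$-cuts, $3$-edge-connected) is the conflict graph exactly the $4$-regular graph $H$, where Brooks gives four colors (or five singletons if $H=K_5$) and a fifth color is used to repair condition (i) by splitting one of $e_{w_1},e_{w_2}$ into its own matching. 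To salvage your approach you would need either to prove a Brooks-type bound for the full quotient conflict graph (which the examples above rule out) or to reintroduce the cut decomposition, at which point you have the paper's proof.
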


For the rest of this section, let $v$ be a fixed vertex in $V_n$,
$N_2(v) =\{w_1,w_2\}$ and
let $\{M_1, \dots, M_h\}$ denote the partition of $W_x$ into induced
matchings with the additional properties enumerated in
Lemma \ref{alpha}.  These properties will be used to ensure that we
can save on the edges in $W_x$ when augmenting connectors of $G_x$
(with parity correctors) into tours.  While Lemma \ref{alpha} implies
that $h=5$, we will use $h$ throughout this section, since if
Lemma \ref{alpha} could be proved with, say, four matchings, it would
allow a larger value of $\alpha = \frac{1}{h}$ and hence directly
yield a better bound in the statement of Lemma \ref{main-base-case}.

The proof of Lemma \ref{main-base-case} consists of two main
parts. First we show there is a convex combination of connectors of
$G_x$ that satisfy certain properties.  Second, we show that for each
connector, we can find parity correctors such that the union of a
connector and a parity corrector is a tour.  

\subsection{Constructing Connectors}

Now we will show how to construct connectors for $G_x$.  We will apply
this when $G_x$ is a base case (i.e., $G_x$ has no critical cuts), but
Definition \ref{propertyP} and Lemmas \ref{lambda=0}
and \ref{arbitlambda} apply even when this is not the case.  Recall
that $v$ is fixed vertex in $V_n$ and that $N_2(v) = \{w_1,w_2\}$.

\begin{definition}\label{propertyP}
Suppose $M \subset W_x$ is a subset of 1-edges of $G_x$. Let $\Lambda$
be a constant such that $0\leq \Lambda \leq \theta$.  Suppose $x$ can
be written as a convex combination of connectors of $G_x$ denoted by
$\{\lambda ,\mathcal{T}\}$. Then we say $P(v,M,\Lambda)$ holds for
$\{\lambda, \mathcal{T}\}$ if it has the following properties.
	\begin{enumerate}
		\item $\sum_{T \in \mathcal{T}:
			|\delta_{T}(v)|=1}\lambda_T                  =\sum_{T
			\in \mathcal{T}: |\delta_{T}(v)|=3}\lambda_T=
		\Lambda$ and $\sum_{T \in \mathcal{T}:
			|\delta_{T}(v)|=2}\lambda_T=1-2\Lambda$.
		
		\item For each edge $st \in M$,
		$|\delta_{T}(s)|=|\delta_{T}(t)|=2$ for all $T \in \mathcal{T}$.
		\item $T\setminus \delta_{T}(v)$ induces a connector on $V\setminus \{v\}$.
	\end{enumerate}
\end{definition}

Let us explain why the properties described above are useful in our
construction. The first property allows us to control the fraction of
time vertex $v$ has degree one in a connector in the convex combination
$\{\lambda,\mathcal{T}\}$, which in turn will allow us to control the
fraction of time a tour has the pattern $\{2e_v\}$ around $v$.
This flexibility is required to perform the gluing procedure;
it allows us to manipulate the convex combination of connectors to
have the desired pattern profile for the pseudovertex (which will be
$v$). The second condition ensures that no 1-edge in $M$ is part of a
tight cut that is crossed an odd number of times in a connector
$T\in\mathcal{T}$. Lastly, the third property 
guarantees that we maintain connectivity of the tours after gluing
them together over critical cuts.

We defer the proofs of the next two lemmas to Section \ref{sec:spanning}.
\begin{restatable}{lemma}{lambdazero}
	\label{lambda=0}
Suppose $M \subset W_x$ forms an induced matching in $G_x$ and edge
$e_v \in M$.  Then $x$ can be written as a convex combination of
connectors of $G_x$ denoted by $\{\lambda,\mathcal{T}\}$ for which
$P(v,M,0)$ holds.
\end{restatable}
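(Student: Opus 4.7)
The plan is to realize the required connectors as $\mathcal{P}$-rainbow $v$-trees for a suitable collection $\mathcal{P}$, and then invoke Theorem~\ref{boyd-sebo-rainbow}. Let $V(M)$ denote the set of endpoints of edges in $M$; since $e_v\in M$, we have $v\in V(M)$. For each $u\in V(M)$ the two fractional edges $f_u,g_u$ incident to $u$ satisfy $x_{f_u}+x_{g_u}=x(\delta(u))-x_{e_u}=1$, so I would set
\[
\mathcal{P}\;=\;\bigl\{\,\{f_u,g_u\}\,:\,u\in V(M)\,\bigr\}.
\]
Then $x(P)=1$ for every $P\in\mathcal{P}$.

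The main technical point, and the only place where the induced matching hypothesis on $M$ is used, is verifying that the sets in $\mathcal{P}$ are pairwise disjoint. Suppose for contradiction that $u,u'\in V(M)$ are distinct with $\{f_u,g_u\}\cap\{f_{u'},g_{u'}\}\neq\emptyset$; a common element would have to be a fractional edge $uu'\in E_x$. If $uu'\in M$ then $x_{uu'}=1$, contradicting that $uu'$ is fractional. Otherwise $u$ and $u'$ lie in two distinct edges of $M$, and the fractional edge $uu'\in E_x$ would then be an edge of the subgraph of $G_x$ induced by $V(M)$ that does not belong to $M$, contradicting that $M$ is an induced matching.

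By Observation~\ref{obs:subtinvtree} we have $x\in v\vtree(K_n)$, so Theorem~\ref{boyd-sebo-rainbow} provides a convex combination $\{\lambda,\mathcal{T}\}$ of $\mathcal{P}$-rainbow $v$-trees of $K_n$ expressing $x$. It remains to verify the three defining properties of $P(v,M,0)$. Property~1 (with $\Lambda=0$) holds because every $v$-tree $T$ satisfies $|\delta_T(v)|=2$. Property~3 holds since $T\setminus\delta_T(v)$ is a spanning tree of $V_n\setminus\{v\}$, hence connected. For Property~2, observe that every $1$-edge of $x$ lies in every $T\in\mathcal{T}$ with $\lambda_T>0$: the identity $1=x_e=\sum_T\lambda_T\chi^T_e$ together with $\chi^T_e\in\{0,1\}$ forces $\chi^T_e=1$. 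Thus for each $st\in M$ the edge $st$ lies in $T$, and the $\mathcal{P}$-rainbow condition puts exactly one of $f_s,g_s$ and exactly one of $f_t,g_t$ into $T$, yielding $|\delta_T(s)|=|\delta_T(t)|=2$, as required.
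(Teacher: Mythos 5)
Your proof is correct and follows essentially the same route as the paper's: define $\mathcal{P}$ by pairing the two fractional edges at each endpoint of an edge of $M$, apply Theorem~\ref{boyd-sebo-rainbow}, and read off the three properties of $P(v,M,0)$ from the $v$-tree and rainbow conditions. You are in fact more careful than the paper, which omits the verification that the sets in $\mathcal{P}$ are pairwise disjoint (the one place the induced-matching hypothesis is needed) and the observation that every $1$-edge appears in every tree of the decomposition.
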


\begin{restatable}{lemma}{arbitlambda}
	\label{arbitlambda}
Let $\Lambda$ be any constant such that $0 \leq \Lambda \leq \theta$.
Suppose $M \subset W_x$ forms an induced matching in $G_x$, $e_v
\notin M$ and $|M\cap \{e_{w_1},e_{w_2}\}| \leq 1$.  Then $x$ can be
written as a convex combination of connectors of $G_x$ denoted by
$\{\lambda,\mathcal{T} \}$ for which $P(v,M,\Lambda)$ holds.
\end{restatable}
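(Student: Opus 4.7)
My plan is to use Theorem~\ref{boyd-sebo-rainbow} to write $x$ as a convex combination of $\mathcal{P}$-rainbow $v$-trees for a carefully chosen partition $\mathcal{P}$, and then post-process each such $v$-tree $S$ by adding or deleting a single edge of $\delta(v)$ so as to redistribute the degree of $v$ among the values $1,2,3$ in the required proportions $\Lambda,\,1-2\Lambda,\,\Lambda$. The reason this works structurally is that $|S\cap\delta(v)|=2$ and $S\setminus\delta(v)$ is already a spanning tree on $V_n\setminus\{v\}$, so a local modification at $\delta(v)$ neither destroys connectivity of the output nor alters $S\setminus\delta(v)$; Property~3 then holds for free. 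The only delicate issue is whether the modification disturbs the degree of a vertex in $V(M)\cap\{w_1,w_2\}$, which is precisely what the hypothesis $|M\cap\{e_{w_1},e_{w_2}\}|\leq 1$ is designed to control.

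To define $\mathcal{P}$, I exploit the induced-matching property: the two fractional neighbours of every $s\in V(M)$ lie outside $V(M)$, so the edge sets $\{f_s,g_s\}$ for $s\in V(M)$ are pairwise disjoint, each with $x$-value $\theta+(1-\theta)=1$. In the generic case $e_{w_1},e_{w_2}\notin M$ I take $\mathcal{P}=\{\{f_s,g_s\}:s\in V(M)\}$. If $e_{w_1}\in M$ then one of $\{f_{w_1},g_{w_1}\}$ coincides with one of $\{f_v,g_v\}$, say $f_v$; writing $h^{\ast}$ for the other fractional edge at $w_1$ (which has $x$-value $1-\theta$), I replace $\{f_{w_1},g_{w_1}\}$ in $\mathcal{P}$ by the equivalent set $\{f_v,h^{\ast}\}$. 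This set still has $x$-value $1$ and is disjoint from every other part, since $h^{\ast}$ ends outside $V(M)$ and no other endpoint of $M$ is a fractional neighbour of $v$ (the fractional neighbours of $v$ being only $w_1,w_2$, with $w_2\notin V(M)$ in this subcase). The assumption $|M\cap\{e_{w_1},e_{w_2}\}|\leq 1$ is exactly what prevents us from being forced to make \emph{both} such substitutions, which would cause an overlap on $\{f_v,g_v\}$.

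By Observation~\ref{obs:subtinvtree}, $x\in v\vtree(K_n)$, so Theorem~\ref{boyd-sebo-rainbow} yields a convex combination $\{\mu,\mathcal{S}\}$ of $\mathcal{P}$-rainbow $v$-trees summing to $x$. Each $S\in\mathcal{S}$ contains every $1$-edge, and the rainbow condition together with $e_s\in S$ forces $|\delta_S(s)|=2$ for every $s\in V(M)$ (with the part $\{f_v,h^{\ast}\}$ doing this job at $w_1$ in the non-generic case). Partition $\mathcal{S}=\mathcal{S}_f\sqcup\mathcal{S}_g$ according to which of $f_v,g_v$ the tree contains; then $\mu(\mathcal{S}_f)=\theta$ and $\mu(\mathcal{S}_g)=1-\theta$. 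To obtain the required degree distribution at $v$, I replace each $S$ by a convex combination of $S$, $S\cup\{h\}$, and $S\setminus\{h'\}$ for suitable $h,h'\in\{f_v,g_v\}$. In the generic case I use all four modifications symmetrically (fractions $\Lambda/(2\theta)$ on $\mathcal{S}_f$ and $\Lambda/(2(1-\theta))$ on $\mathcal{S}_g$). In the case $e_{w_1}\in M$ I must not touch $f_v$ (doing so would change $|\delta_T(w_1)|$), so I only add $g_v$ to trees of $\mathcal{S}_f$ with fraction $\Lambda/\theta$ and only delete $g_v$ from trees of $\mathcal{S}_g$ with fraction $\Lambda/(1-\theta)$; both fractions lie in $[0,1]$ since $\Lambda\leq\theta\leq 1-\theta$. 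The case $e_{w_2}\in M$ is symmetric, using $f_v$ in place of $g_v$.

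A short bookkeeping check confirms that the total multiplicity of $f_v$ remains $\theta$ and of $g_v$ remains $1-\theta$, so the output still sums to $x$; Property~1 holds by the choice of fractions; each $S\setminus\{h'\}$ is still a connector because $e_v$ remains and attaches $v$ to the spanning tree $S\setminus\delta(v)$; Property~3 holds because $T\setminus\delta_T(v)$ equals that same spanning tree in every output; and Property~2 holds because modifications touch only $\delta(v)$, and so only affect the degrees of $\{v,w_1,w_2\}$, with the restricted-modification rule protecting the unique element (if any) of $V(M)\cap\{w_1,w_2\}$. The step requiring the most care is this last coordination: both the partition and the set of allowed modifications must simultaneously accommodate the case where a $1$-edge incident to a fractional neighbour of $v$ lies in $M$, and without the hypothesis $|M\cap\{e_{w_1},e_{w_2}\}|\leq 1$ modifications would be forbidden on both $f_v$ and $g_v$, leaving no way to change the degree of $v$ at all.
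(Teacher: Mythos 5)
Your proposal is correct and follows essentially the same route as the paper: decompose $x$ into $\mathcal{P}$-rainbow $v$-trees where $\mathcal{P}$ pairs the two fractional edges at each endpoint of $M$, then add or delete an edge of $\delta(v)$ on sub-collections of total mass $\Lambda$ to shift $v$'s degree to $1$ and $3$, with the hypothesis $|M\cap\{e_{w_1},e_{w_2}\}|\leq 1$ guaranteeing that at least one of $f_v,g_v$ can be modified without disturbing the degree of a vertex of $V(M)$. (One small inaccuracy: that hypothesis is not what makes $\mathcal{P}$ a disjoint collection — the induced-matching property already gives disjointness even if both $e_{w_1},e_{w_2}\in M$ — but since your argument never actually relies on that claim, nothing breaks.)
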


Recall that $\{M_1,\ldots,M_h\}$ is the partition of $W_x$ into
induced matchings obtained via Lemma \ref{alpha}.  Assume without loss
of generality that $e_v\in M_1$. For $i = 1$, let $\mathcal{T}_1$ be a
set of connectors of $G_x$ and let $\{\vartheta, \mathcal{T}_1\}$ be a
convex combination for $x$ for which $P(v,M_1,0)$ holds (by
Lemma \ref{lambda=0}).  For $i \in \{2,
\dots, h\}$, let $\mathcal{T}_i$ be a set of connectors of $G_x$ and
let $\{\vartheta, \mathcal{T}_i\}$ be a convex combination for $x$ for
which $P(v,M_i,\frac{\Lambda}{1-\alpha})$ holds (by Lemma
\ref{arbitlambda}).  Notice that $\frac{\Lambda}{1-\alpha}\leq \theta$
since $\Lambda \leq (1-\alpha)\theta$.

We can write $x$ as a convex combination of connectors from
$\mathscr{T}$, by weighting each set $\mathcal{T}_i$ by $\alpha$.  In
particular, we have $x= \alpha \sum_{i
=1}^h \sum_{T \in \mathcal{T}_i} \vartheta_T \chi^{T}$.  For each
$T \in \mathscr{T}$, let $\sigma_T = \alpha \cdot \vartheta_T$.  Then
$\{\sigma, \mathscr{T}\}$ is a convex combination for $x$. Observe
that since $x_e = 1$ for $e\in W_x$, we have $W_x \subseteq T$ for
$T\in \mathscr{T}$. From Definition
\ref{propertyP} and Lemmas \ref{lambda=0} and \ref{arbitlambda}, we
observe the following.

\begin{claim}\label{firstclaiminmainproof}
	For each $T \in \mathscr{T}$, $T \setminus{\delta(v)}$ induces a
	connected, spanning subgraph on $V\setminus{\{v\}}$.
\end{claim}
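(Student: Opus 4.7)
The plan is to observe that the claim is essentially a direct restatement of the third condition in Definition \ref{propertyP}, propagated through the construction of $\mathscr{T}$. Specifically, every $T \in \mathscr{T}$ belongs to $\mathcal{T}_i$ for some $i \in \{1,\ldots,h\}$, and by construction the convex combination $\{\vartheta, \mathcal{T}_i\}$ satisfies property $P(v, M_i, \Lambda_i)$ for an appropriate $\Lambda_i$: for $i=1$ this is guaranteed by Lemma \ref{lambda=0} (applied with $M = M_1$, which contains $e_v$, with $\Lambda_1 = 0$), while for $i \in \{2,\ldots,h\}$ it is guaranteed by Lemma \ref{arbitlambda} (applied with $M = M_i$ and $\Lambda_i = \Lambda/(1-\alpha) \leq \theta$), using property (i) of Lemma \ref{alpha} to verify that $e_v \notin M_i$ and $|M_i \cap \{e_{w_1}, e_{w_2}\}| \leq 1$.

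Therefore condition 3 of Definition \ref{propertyP} applies to each such $T$, meaning $T \setminus \delta_T(v)$ induces a connector on $V_n \setminus \{v\}$, which by our convention is a connected spanning subgraph of $V_n \setminus \{v\}$. Since the edges removed from $T$ by forming $T \setminus \delta(v)$ are exactly those incident to $v$ in $T$, namely $\delta_T(v)$, the two expressions $T \setminus \delta(v)$ and $T \setminus \delta_T(v)$ denote the same edge multiset, and the claim follows. The only mildly nontrivial point is keeping straight that the hypotheses of Lemma \ref{arbitlambda} are verified by the induced-matching partition $\{M_1,\ldots,M_h\}$ supplied by Lemma \ref{alpha}, but this is immediate from property (i) of that lemma.
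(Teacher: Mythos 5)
Your proposal is correct and matches the paper exactly: the paper states this claim without a separate proof, as an immediate consequence of condition~3 of Definition~\ref{propertyP} together with Lemmas~\ref{lambda=0} and~\ref{arbitlambda}, which is precisely the argument you spell out (including the verification of the hypotheses of Lemma~\ref{arbitlambda} via property~(i) of Lemma~\ref{alpha} and the partition structure).
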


\subsection{Constructing Parity Correctors}

For each $T\in \mathscr{T}$, let $O_T$ be the set of odd degree
vertices of $T$. In the second part of the proof we show that each
connected subgraph $T\in\mathscr{T}$ has a ``cheap'' convex
combination of $O_T$-joins.
\begin{restatable}{lemma}{tjoinLemma}
	\label{tjoin}
Suppose $G_x$ has no critical cuts.  Let $M \subset W_x$ be a
        subset of 1-edges of $G_x$ such that each 3-edge cut in $G_x$
        contains at most one edge from $M$.  Let $O \subseteq V$ be a
        subset of vertices such that $|O|$ is even and for all
        $e=st\in M$, neither $s$ nor $t$ is in $O$. Also suppose for
        any set $U\subseteq V$ such that $|\delta(U)|=2$, both $|U\cap
        O|$ and $|\delta(U)\cap M|$ are even.  Define vector $z$ as
        follows: $z_e = \frac{1}{2}$ if $e \in W_x$ and $e\notin M$,
        $z_e = \frac{1-\theta}{2}$ if $e \in M$, and $z_e =
        \frac{x_e}{2}$ if $e \in H_x$.  Then vector $z\in
        O\join(G_x)$.
\end{restatable}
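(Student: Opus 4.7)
The plan is to verify the Edmonds--Johnson description \eqref{o-join-exact} of $O\join(G_x)$ directly. Because $z_e \leq 1/2$ for every $e \in E_x$, the family of inequalities $z(\delta(U)\setminus A)-z(A)\geq 1-|A|$ parameterized by pairs $(U,A)$ with $|U\cap O|+|A|$ odd collapses, after observing that enlarging $A$ by two edges changes the left-hand side by $-2 + 2z_e + 2z_f \leq 0$, to the following two statements: (a) $z(\delta(U))\geq 1$ whenever $|U\cap O|$ is odd, and (b) $z(\delta(U))\geq 2z_e$ for every $e\in\delta(U)$ whenever $|U\cap O|$ is even. Since $z_e\leq 1/2$, statement (b) is automatic as soon as $z(\delta(U))\geq 1$, so I really only need to check (a), together with (b) on those even cuts where $z(\delta(U))<1$.

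Using the definition of $z$ one has $z(\delta(U))=\tfrac{x(\delta(U))}{2}-\tfrac{\theta}{2}|\delta(U)\cap M|$, and since $x(\delta(U))\geq 2$ by the subtour inequalities a deficit $z(\delta(U))<1$ can only occur on cuts containing an $M$-edge. The rest of the proof is a case analysis on $|\delta(U)|$ showing that every such cut either still satisfies $z(\delta(U))\geq 1$, or has $|U\cap O|$ even (in which case (b) will hold essentially with equality). For a 2-edge cut both edges are necessarily 1-edges, the parity hypothesis forces $|\delta(U)\cap M|\in\{0,2\}$, and in the nontrivial subcase the hypothesis that $|U\cap O|$ is even dispenses with (a) while (b) holds because $z_e=(1-\theta)/2$ on both edges. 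For a 3-edge cut, the hypothesis $|\delta(U)\cap M|\leq 1$ bounds the deficit by $\theta/2$; a short enumeration of the possible $x$-values on the three edges, together with the fact that the configuration $(1,1-\theta,1-\theta)$ would force the cut to be critical (ruled out in the base case), handles every non-tight 3-edge cut. The only tight 3-edge cuts with deficit are then vertex cuts $\{u\}$ with $e_u\in M$, in which case the hypothesis on $M$-endpoints gives $u\notin O$ and so $|U\cap O|=0$; and degenerate tight cuts $U$, for which the auxiliary 2-edge cut $U\cup\{v\}$ singled out in Section~\ref{sec:notation-cuts} together with the parity hypothesis on 2-edge cuts forces $e_v\in M$ (so $v\notin O$), whence $|U\cap O|$ has the same parity as $|U'\cap O|$ and is even.

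For $|\delta(U)|\geq 4$ I expect a direct estimate to suffice: the fractional edges $H_x$ form vertex-disjoint even cycles, so every cut meets $H_x$ in an even number of edges, and each fractional edge contributes at least $\theta$ to $x(\delta(U))$, giving $x(\delta(U))\geq |\delta(U)\cap W_x|+\theta\,|\delta(U)\cap H_x|$; this is enough to absorb the deficit $(\theta/2)\,|\delta(U)\cap M|$ and recover $z(\delta(U))\geq 1$ in the remaining subcases (e.g.\ $|\delta(U)\cap M|\leq |\delta(U)\cap W_x|$, so the slack in the non-$M$ part covers the $M$-deficit whenever $\theta\leq 1/2$). The main obstacle, and where the hypotheses really bite, is the tight 3-edge cut analysis, where the absence of critical cuts, the parity hypotheses on $O$ and $M$, and the auxiliary 2-edge cut attached to a degenerate tight cut must all be used together; once the reduction to odd-cut-coverage via $z_e\leq 1/2$ is in hand, the rest is essentially bookkeeping.
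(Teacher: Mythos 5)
Your overall strategy is the same as the paper's: verify the Edmonds--Johnson inequalities directly, observe that $z_e\le\tfrac12$ reduces everything to checking $z(\delta(U))\ge 1$ on odd cuts (plus the $|A|=1$ constraint on even cuts with deficit), and then do a case analysis on the cut. Your treatment of 2-edge cuts and of tight 3-edge cuts (vertex cuts via $u\notin O$, degenerate tight cuts via the auxiliary 2-edge cut forcing $e_v\in M$ and hence $|U\cap O|$ even) matches the paper's Cases 2 and 3i, and your observation that a $(1,1-\theta,1-\theta)$ proper 3-cut would be critical is a correct way to dispose of the non-tight 3-cuts.

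The gap is in the case $|\delta(U)|\ge 4$, where your ``direct estimate'' is not sufficient. Consider a cut with $|\delta(U)\cap W_x|=1$, that 1-edge in $M$, and $|\delta(U)\cap H_x|=4$ with all four fractional edges of value $\theta$. Then $x(\delta(U))=1+4\theta\ge 2$ holds for all $\theta\ge\tfrac14$, yet $z(\delta(U))=\tfrac{1-\theta}{2}+4\cdot\tfrac{\theta}{2}=\tfrac12+\tfrac{3\theta}{2}$, which is strictly less than $1$ for every $\theta<\tfrac13$; your bound $\tfrac{1-\theta}{2}|\delta(U)\cap W_x|+\tfrac{\theta}{2}|\delta(U)\cap H_x|$ gives exactly the same insufficient quantity, and the generic bound $z(\delta(U))\ge 1-\tfrac{\theta}{2}|\delta(U)\cap M|$ is also below $1$. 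Nothing in the hypotheses prevents such a $U$ from having $|U\cap O|$ odd, so the inequality would genuinely fail if such a cut existed. What is needed (and what the paper proves in its Case 3ii) is that such a cut \emph{cannot exist}: if $z(\delta(U))<1$ with $|\delta(U)|\ge 5$ and one 1-edge, then every fractional edge of $\delta(U)$ has value $\theta$ with $\theta<\tfrac12$; since each cycle $C$ of $H_x$ alternates $\theta$ and $1-\theta$ and is crossed an even number of times, crossing $C$ only in $\theta$-edges forces $|U\cap V(C)|$ to be even, hence $|U|$ is even, hence $|\delta(U)|$ is even in the cubic graph $G_x$ --- contradicting $|\delta(U)\cap W_x|=1$, which forces $|\delta(U)|$ odd. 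This parity/structure argument is the one nontrivial ingredient of the large-cut case and is entirely absent from your proposal.
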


For each $i \in [h]$, define $z_e^{i} = \frac{1-\theta}{2}$ if $e \in
{M_i}$ and $z_e^{i} = \frac{x_e}{2}$ otherwise.  For each $T \in
\mathcal{T}_i$, let $O_T \subseteq V$ be the set of odd-degree
vertices of $T$.  By construction, we have $V(M_i)\cap O_T=\emptyset$.
By Lemma \ref{tjoin}, we have $z^i\in O_T\join(G_x)$. So by
Observation \ref{humbleobservation} we can write $z^i$ as a convex
combination of $O_T$-joins of $G_x$ denoted by $\{\psi^T,\cal{J}_T\}$
where $|J\cap \delta(u)|=1$ for $u\in O_T$ and $J\in \cal{J}_T$. This
implies that $x+z^i$ can be written as a convex combination of tours
of $G_x$.  We denote this set of tours by $\mathcal{F}_{i}$ and we let
$\mathscr{F} = \{F\in \mathcal{F}_i \; : \; i\in \{1, \dots, h\}\}$.
Now for $F\in \mathscr{F}$ we have $F=T+J$ for some $T\in \mathscr{T}$ and $J\in \mathcal{J}_T$. Define $\phi_F = \sigma_T\cdot \psi_J^T$. The vector $\sum_{i=1}^{h}\alpha(x+z^i)$ can be written as $\{\phi,\mathscr{F}\}$.
\begin{claim}
		Every tour in $\mathscr{F}$ is handpicked.
	\end{claim}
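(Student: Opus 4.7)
The plan is to unpack the definition of ``handpicked'' and verify the two required conditions vertex-by-vertex for every $F\in\mathscr{F}$. Fix $F=T+J$ with $T\in\mathcal{T}_i\subseteq\mathscr{T}$ and $J\in\mathcal{J}_T$. Inspecting Figure~\ref{fig:patterns}, the eight listed patterns in $\mathbb{P}_u$ are precisely the multisets on $\delta(u)=\{e_u,f_u,g_u\}$ whose multiplicities lie in $\{0,1,2\}$, whose total weight is $2$ or $4$, and which contain $e_u$ with multiplicity at least one. The multiplicity bound is free because $T$ and $J$ are both simple subgraphs of $G_x$, hence $\chi^F_e\le 2$ for every edge. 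So for each $u\in V_n$ it suffices to check (a) $\chi^F_{e_u}\ge 1$ and (b) $\deg_F(u)\in\{2,4\}$.

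Condition (a) I would dispatch immediately from the construction of $\mathscr{T}$: since $x_e=1$ for every $e\in W_x$, every connector in $\mathscr{T}$ must contain all $1$-edges, so $e_u\in T\subseteq F$.

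For condition (b) I would split on whether $u\in O_T$. If $u\notin O_T$, then $\deg_T(u)=2$ (using that $G_x$ is cubic and $T$ is a spanning connected subgraph giving $u$ even degree at least one), while $\deg_J(u)\in\{0,2\}$ because $J$ is an $O_T$-join and $|\delta(u)|=3$; hence $\deg_F(u)\in\{2,4\}$. If $u\in O_T$, then $\deg_T(u)\in\{1,3\}$, and the key point is to argue that $|J\cap\delta(u)|=1$. I would obtain this from Observation~\ref{humbleobservation} applied to the convex combination $\{\psi^T,\mathcal{J}_T\}$ of $z^i$; the hypothesis $z^i(\delta(u))\le 1$ needed to invoke the observation is a two-line calculation using $x_{f_u}+x_{g_u}=1$: it gives $z^i(\delta(u))=\tfrac{1}{2}+\tfrac{1}{2}=1$ when $e_u\notin M_i$, and $z^i(\delta(u))=\tfrac{1-\theta}{2}+\tfrac{1}{2}=1-\tfrac{\theta}{2}<1$ when $e_u\in M_i$.

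The only step that requires any computation is this verification of $z^i(\delta(u))\le 1$ to unlock Observation~\ref{humbleobservation}; everything else is a direct consequence of how $T$ and $J$ were assembled. There is no real obstacle: once (a) and (b) are established, the pattern $\delta_F(u)$ is automatically one of the eight multisets comprising $\mathbb{P}_u$, and so $F$ is handpicked.
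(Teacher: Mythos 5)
Your proof is correct and follows essentially the same route as the paper's: establish $e_u\in T\subseteq F$ from $x_{e_u}=1$, and control the degree of $u$ in $F$ via Observation~\ref{humbleobservation} applied to $\{\psi^T,\mathcal{J}_T\}$. The only difference is cosmetic: you argue the degree bound by a direct case split on $u\in O_T$ and explicitly verify $z^i(\delta(u))\leq 1$, whereas the paper argues by contradiction from $|\delta_F(u)|\geq 6$ and leaves that computation implicit.
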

\begin{cproof}
	Let $F\in \mathscr{F}$. By construction, $F= T+ {J}$,
        where $T\in \mathscr{T}$ and $J\in \mathcal{J}_T$. Let $u\in
        V_n$ and $\delta(u) = \{e_u,f_u,g_u\}$. Notice that
        $\chi^F_{e_u}\geq 1$, since $e_u \in T$ for all
        $T\in \mathscr{T}$. Hence, we only need to show that
$|\delta_F(u)| < 6$.        
Suppose for contradiction that
$|\delta_F(u)|\geq 6$. This implies that
$|\delta_T(u)| = 3$ and $|\delta_J(u)|=3$. 
However, if
$|\delta_T(u)|=3$, then $u\in O_T$. 
From Observation \ref{humbleobservation},
if $u\in O_T$ and $z(\delta(u)) \leq 1$, then
$|J \cap \delta(u)|=1$ which is a contradiction.
\end{cproof}

\begin{claim}\label{xmidclaim}
Suppose $G_x$ contains no
	critical cuts.  Define vector $y \in \mathbb{R}^{E_x}$ as $y_e =
	\frac{3}{2}-\frac{\alpha\theta}{2}$ for $e\in W_x$ and $y_e = \frac{3}{2}x_e$
	for $e\in H_x$.  Then $\{\phi, \mathscr{F}\}$ is a convex combination
	for $y$.
\end{claim}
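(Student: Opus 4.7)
The plan is a direct bookkeeping computation: expand $\sum_{F\in\mathscr{F}}\phi_F\chi^F$ according to the definitions and verify that the resulting vector is $y$. The existence of the $O_T$-joins $\mathcal{J}_T$ is already guaranteed by Lemma~\ref{tjoin} using the hypothesis that $G_x$ has no critical cuts, together with the properties of the matchings $M_i$ from Lemma~\ref{alpha} (in particular, properties (ii) and (iii) guarantee the hypotheses of Lemma~\ref{tjoin} for each $M_i$). So the only remaining content of the claim is an arithmetic identity.

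First I would check that $\{\phi_F\}_{F\in\mathscr{F}}$ really are convex multipliers. Each $F\in\mathscr{F}$ arises uniquely as $F=T+J$ with $T\in\mathcal{T}_i$ for some $i$ and $J\in\mathcal{J}_T$, so
\[
\sum_{F\in\mathscr{F}}\phi_F \;=\; \sum_{i=1}^h\sum_{T\in\mathcal{T}_i}\sigma_T\sum_{J\in\mathcal{J}_T}\psi^T_J \;=\; \sum_{i=1}^h\sum_{T\in\mathcal{T}_i}\sigma_T \;=\; \sum_{i=1}^h\alpha\sum_{T\in\mathcal{T}_i}\vartheta_T \;=\; h\alpha \;=\; 1,
\]
since $\alpha=1/h$ and $\{\vartheta,\mathcal{T}_i\}$ is a convex combination for each $i$.

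Next I would compute the incidence vector. Since $\chi^{T+J}=\chi^T+\chi^J$ and $\{\psi^T,\mathcal{J}_T\}$ is a convex combination for $z^i$ whenever $T\in\mathcal{T}_i$, we have
\[
\sum_{F\in\mathscr{F}}\phi_F\chi^F
= \sum_{i=1}^h\sum_{T\in\mathcal{T}_i}\sigma_T\Bigl(\chi^T+\sum_{J\in\mathcal{J}_T}\psi^T_J\chi^J\Bigr)
= \sum_{i=1}^h\sum_{T\in\mathcal{T}_i}\sigma_T\chi^T \;+\; \sum_{i=1}^h z^i\sum_{T\in\mathcal{T}_i}\sigma_T.
\]
The first sum equals $x$ by construction of $\{\sigma,\mathscr{T}\}$, and the second sum equals $\sum_{i=1}^h z^i\cdot\alpha = \alpha\sum_{i=1}^h z^i$, again using $\sum_{T\in\mathcal{T}_i}\vartheta_T=1$.

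Finally I would evaluate $\alpha\sum_{i=1}^h z^i_e$ coordinate-wise. For $e\in H_x$, every $z^i_e$ equals $x_e/2$, so the contribution is $\alpha\cdot h\cdot x_e/2 = x_e/2$, and adding $x_e$ yields $\tfrac{3}{2}x_e = y_e$. For $e\in W_x$, the edge $e$ lies in exactly one matching, say $M_j$, so $z^j_e=\tfrac{1-\theta}{2}$ and $z^i_e=\tfrac{1}{2}$ for $i\neq j$; hence
\[
\alpha\sum_{i=1}^h z^i_e \;=\; \alpha\Bigl(\tfrac{1-\theta}{2}+(h-1)\tfrac{1}{2}\Bigr) \;=\; \alpha\cdot\tfrac{h-\theta}{2} \;=\; \tfrac{1}{2}-\tfrac{\alpha\theta}{2},
\]
and adding $x_e=1$ gives $\tfrac{3}{2}-\tfrac{\alpha\theta}{2}=y_e$. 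Thus $\sum_F\phi_F\chi^F=y$ as required. There is no real obstacle here beyond keeping the three indices $i,T,J$ carefully separated; the substantive work has already been done in Lemmas~\ref{alpha}, \ref{lambda=0}, \ref{arbitlambda}, and \ref{tjoin}.
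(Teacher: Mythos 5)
Your proof is correct and takes essentially the same route as the paper: the paper's proof of this claim is exactly the coordinate-wise verification that $y=\sum_{i=1}^{h}\alpha(x+z^i)$, treating the identity $\sum_F\phi_F\chi^F=\sum_{i=1}^{h}\alpha(x+z^i)$ as already established by the construction of $\phi_F=\sigma_T\cdot\psi^T_J$. Your version merely makes the convex-multiplier bookkeeping (that the $\phi_F$ sum to $1$ and that the triple sum splits into $x+\alpha\sum_i z^i$) explicit, which is a harmless elaboration of the same argument.
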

\begin{cproof}
	We need to show that $y = \sum_{i=1}^{h}\alpha(x + z^i)$.  First, let
	$e$ be a 1-edge of $G_x$ and $M_j$ be the induced matching that
	contains $e$. Then, $x_e = 1$, $z^i_e = \frac{1}{2}$ for $i \in
	[h]\setminus \{j\}$ and $z^j_e = \frac{1-\theta}{2}$. Hence,
	\begin{equation*}
	\sum_{i=1}^{h} \alpha(x_e+z_e^i) = \sum_{\ell=1}^{h}
	\alpha \cdot \frac{3}{2} - \alpha\cdot  \frac{\theta}{2}  =
	\frac{3}{2} -\frac{\alpha\theta}{2}.
	\end{equation*}
	For a fractional $e$ of $G_x$, we have 
	$z^i_e=\frac{x_e}{2}$ for $i\in [h]$, so $\sum_{i=1}^{h}\alpha(x_e+z^i_e
	)=\frac{3}{2}x_e$.
\end{cproof}

Now we prove some additional useful properties of the convex
combination $\{\phi, \mathscr{F}\}$ for $\theta$-cyclic point $x$. 

\begin{claim}\label{edgedoubling-frequency}
For convex combination $\{\phi, \mathscr{F}\}$, we have
$\phi_2(e) = \frac{1}{2} - \frac{\alpha \theta}{2}$ for $e \in W_x$
and $\phi_2(e) = \frac{(x_e)^2}{2}$.
\end{claim}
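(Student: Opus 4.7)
The plan is to unfold the definition of $\phi$ and compute the doubling frequency $\phi_2(e)$ directly, using the edge-marginals of the two layers of convex combinations we have built (the connector distribution $\{\vartheta,\mathcal{T}_i\}$ and, for each $T\in\mathcal{T}_i$, the $O_T$-join distribution $\{\psi^T,\mathcal{J}_T\}$). The starting observation is that every tour $F\in\mathscr{F}$ is of the form $F = T+J$ for some $i\in[h]$, some $T\in\mathcal{T}_i$ and some $J\in\mathcal{J}_T$, with convex multiplier $\phi_F = \alpha\,\vartheta_T\,\psi^T_J$. Since both $T$ and $J$ are simple (no doubled edges), edge $e$ satisfies $\chi^F_e = 2$ if and only if $e\in T$ and $e\in J$.

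First I would write
\begin{equation*}
\phi_2(e) \;=\; \sum_{i=1}^{h}\sum_{\substack{T\in\mathcal{T}_i \\ e\in T}}\ \sum_{\substack{J\in\mathcal{J}_T \\ e\in J}} \alpha\,\vartheta_T\,\psi^T_J
\;=\; \alpha \sum_{i=1}^{h}\sum_{\substack{T\in\mathcal{T}_i \\ e\in T}} \vartheta_T\Bigl(\sum_{\substack{J\in\mathcal{J}_T \\ e\in J}}\psi^T_J\Bigr).
\end{equation*}
The inner sum equals $z^i_e$ because $\{\psi^T,\mathcal{J}_T\}$ is, by construction in the paragraph preceding the claim, a convex combination of $O_T$-joins realizing the vector $z^i$; hence the marginal at $e$ is exactly $z^i_e$, independently of $T$. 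Pulling $z^i_e$ out of the inner sum and using that $\{\vartheta,\mathcal{T}_i\}$ is a convex combination for $x$ so that $\sum_{T\in\mathcal{T}_i:\,e\in T}\vartheta_T = x_e$, I get the clean identity
\begin{equation*}
\phi_2(e) \;=\; \alpha\, x_e \sum_{i=1}^{h} z^i_e.
\end{equation*}

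It then remains only to evaluate $\sum_i z^i_e$ in the two cases. For $e\in H_x$ we have $z^i_e = x_e/2$ for every $i$, so $\sum_i z^i_e = h x_e/2$ and, using $h\alpha = 1$ (which is just the choice $\alpha = 1/h$ stated in the preamble to Lemma~\ref{main-base-case}), $\phi_2(e) = \alpha\, x_e \cdot h x_e/2 = x_e^2/2$, as required. For $e\in W_x$ we have $x_e = 1$; if $M_j$ is the unique induced matching from Lemma~\ref{alpha} containing $e$, then $z^j_e = (1-\theta)/2$ while $z^i_e = 1/2$ for the other $h-1$ indices, so $\sum_i z^i_e = (h-\theta)/2$ and therefore $\phi_2(e) = \alpha(h-\theta)/2 = 1/2 - \alpha\theta/2$.

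There is no real obstacle here: the claim is a bookkeeping consequence of the product structure of $\phi_F = \alpha\vartheta_T\psi^T_J$ together with the two marginal identities already guaranteed by the constructions preceding the claim. The only mild point of care is to notice that the double-counting is legitimate — i.e. that distinct pairs $(T,J)$ from different $\mathcal{T}_i$'s really do give distinct tours (or, equivalently, that we may sum over $i$ first and then over $(T,J)$) — which is immediate because the multipliers $\sigma_T = \alpha\vartheta_T$ on the connector layer are disjointly indexed over the $h$ copies.
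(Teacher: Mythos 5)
Your proposal is correct and is essentially the paper's own argument: the paper phrases the same computation probabilistically (sampling $F=T+J$, using $\Pr[e\text{ doubled}]=\Pr[e\in T]\cdot\Pr[e\in J]$ and conditioning on whether $e$ lies in the randomly chosen matching $M_i$), while you unfold the identical product of marginals $\phi_2(e)=\alpha\,x_e\sum_i z^i_e$ as an explicit double sum. Both rest on the same two facts — the join layer has marginal $z^i_e$ and the connector layer has marginal $x_e$ — so no further comment is needed.
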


\begin{cproof}
Notice that $\phi$ defines a probability distribution on
$\mathscr{F}$. We sample $F$ from $\mathscr{F}$ with the probabilities
defined by $\phi$. 
Recall that $\phi_2(e) = \Pr[e \text{ is doubled in $F$}]$.
Moreover, recall that $F=T+J$ where $T\in \mathscr{T}$ and
$J\in \mathcal{J}_T$ and $T$ is associated with some matching
$M\in \{M_1,\ldots,M_h\}$ (i.e., $T \in \mathcal{T}_i$ for $i \in [h]\}$). For $e\in E_x$, we have \begin{equation*}
\Pr[e_u \text{ is doubled in $F$ }]   = \Pr[e_u \in J~\text{and}~ e_u \in T ]=\Pr[e_u\in T]\cdot \Pr[e_u \in J].
\end{equation*}
For $e\in H_x$, this implies that $\phi_2(e) = x_e\cdot \frac{x_e}{2} = \frac{(x_e)^2}{2}$. Also 
\begin{align*}
\Pr[e_u \text{ is doubled in $F$ }] &=\Pr[e_u \in J] \\ 
&=\Pr[e_u\in J| e_u\in M]  \cdot
\Pr[e_u \in M]+\Pr[e_u\in J| e_u\notin M]  \cdot
\Pr[e_u \notin M]\\
 &=\frac{1-\theta}{2}\cdot  \alpha + \frac{1}{2} \cdot (1-\alpha)\\
&= \frac{1}{2} - \frac{\alpha \theta}{2}.
\end{align*}

\end{cproof}

Claims \ref{firstclaiminmainproof}, \ref{xmidclaim} and \ref{edgedoubling-frequency} yield Lemma \ref{main-base-case}. It remains to prove Lemmas \ref{lambda=0}, \ref{arbitlambda} and \ref{tjoin}.

\subsection{Proofs of Lemmas \ref{lambda=0} and \ref{arbitlambda} for
Constructing Connectors}\label{sec:spanning}

In this section, we prove Lemmas \ref{lambda=0} and \ref{arbitlambda},
which are necessary in order to write a $\theta$-cyclic point $x$ as a
convex combination of connectors with property $P$ described in
Definition \ref{propertyP}.

\lambdazero*
\begin{proof}
	For each $st\in M$, pair the half-edges incident on $s$ and pair those
	incident on $t$ to obtain disjoint subsets of edges $\mathcal{P}$. Decompose $x$ into a convex combination of $\mathcal{P}$-rainbow
	$v$-trees $\mathcal{T}$ (i.e., $x=\sum_{T \in
		\mathcal{T}}\lambda_T\chi^{T}$) via Theorem \ref{boyd-sebo-rainbow}. This is
	the desired convex combination since for all $T \in \mathcal{T}$, we
	have $|\delta_{T}(v)| = 2$ and $|\delta_{T}(u)|=2$ for all endpoints
	$u$ of edges in $M$.  Thus, the first and second conditions are
	satisfied. The third condition holds by definition of $v$-trees.
\end{proof}

\arbitlambda*
\begin{proof}
	As in the proof of Lemma \ref{lambda=0}, for each $st\in M$, pair the
	half-edges incident on $s$ and pair those incident on $t$ to obtain a
	collection of disjoint subsets of edges $\mathcal{P}$. Apply Theorem
	\ref{boyd-sebo-rainbow} to obtain $\{\lambda, \TT\}$ which is a convex
	combination for $x$, where $\TT$ is a set of $\mathcal{P}$-rainbow
	$v$-trees (i.e., $x=\sum_{T \in \TT}\lambda_T\chi^{T}$). Notice that
	this convex combination clearly satisfies the second requirement in
	Definition \ref{propertyP}.
	
	Now let $\delta(v) = \{e_v,f,g\}$, where $w_1$ and $w_2$ are the other
	endpoints of $f$ and $g$, respectively.  Assume $x_f = \theta$
        and $x_g = 1-\theta$.
Since $x= \sum_{T \in \TT}\lambda_T
	\chi^{T}$ and $x_{e_v} = 1$, we have $e_v\in T$ for $T \in \TT$.  In
	addition, we have $|\delta_{T}(v)|=2$ for all $T \in \TT$ by the
	definition of $v$-trees.  Hence, $\sum_{T \in \TT: f\in T, g\notin
		T}\lambda_T=\theta$ and
$\sum_{T\in \TT: f\notin T, g\in T}\lambda_T=
	x_f=1-\theta$.  Define 
$$\TT_f = \{T \in \TT : f\in T \text{ and } g\notin T\} \text{ and }
        \TT_g = \{T \in \TT : g\in T \text{ and } f\notin T\},$$ where
        $\TT_f \cup \TT_g = \TT$ and $\TT_f \cap \TT_g = \emptyset$.
        We can also assume that there are subsets $\TT^1_f \subseteq
        \TT_f$ and $\TT^1_g \subseteq \TT_g$ such that $\sum_{T \in
          \TT^1_f}\lambda_T=\Lambda$ and $\sum_{T \in
          \TT^1_g}\lambda_T= \Lambda$, since $\Lambda\leq \theta$.
        Now we consider two cases
\begin{enumerate}
\item	If $e_{w_1}\notin M$:
For $T \in \TT_f^1$,
	replace $T$ with $T-f$.  For $T \in \TT_g^1$, replace $T$
	with $T + f$.  

\item If $e_{w_1} \in M$:
For $T \in \TT_f^1$,
	replace $T$ with $T+g$.  For $T \in \TT_g^1$, replace $T$
	with $T -g$.  
\end{enumerate}

For all $T \in \TT\setminus{( \TT_f^1 \cup \TT_g^1)}$, keep $T$ as is.
Observe that $T\in \TT$ is still a connector of
$G_x$: for every $T \in \TT_f$, $T-f$ is a spanning tree of $G_x$ and
for every $T \in \TT_g$, $T-g$ is spanning tree of $G_x$.  Observe
that in the first case, the degree of $e_{w_2}$ is preserved for every
$T$, and in the second case the degree of $e_{w_1}$ is preserved for
every $T$.  We want to show that the new convex combination
$\{\lambda, \TT\}$ is the desired convex combination for $x$.  Notice
that in the first case,
	\begin{align*}\sum_{T \in \TT}\lambda_T\chi^{T}_f &=
	\sum_{T \in \TT_f^1}\lambda_T\chi^{T}_f
	+\sum_{T \in \TT_f\setminus{\TT_f^1}}\lambda_T\chi^{T}_f
	+\sum_{T \in \TT_g^1}\lambda_T\chi^{T}_f + \sum_{T \in \TT_g \setminus{\TT_g^1}}\lambda_T\chi^{T}_f \\
	&=  0 + (\theta -\Lambda) + \Lambda + 0= x_f.
	\end{align*}
In the second case,
	\begin{align*}\sum_{T \in \TT}\lambda_T\chi^{T}_g &=
	\sum_{T \in \TT_f^1}\lambda_T\chi^{T}_g
	+\sum_{T \in \TT_f\setminus{\TT_f^1}}\lambda_T\chi^{T}_g
	+\sum_{T \in \TT_g^1}\lambda_T\chi^{T}_g + \sum_{T \in \TT_g \setminus{\TT_g^1}}\lambda_T\chi^{T}_f \\
	&=  \Lambda + 0 + 0 + (1 - \theta -\Lambda) = x_g.
	\end{align*}

	So $x=\sum_{T \in \TT}\lambda_T\chi^{T}$.  Moreover, notice
        that for $T \in \TT$, $T\setminus \delta_{T}(v)$ still induces
        a connector on $V\setminus \{v\}$ since we did not
        remove any edge in $T\setminus{\delta(v)}$ from the $v$-tree
        $T$.  Finally, for each vertex $s$ with $e_s\in M$, we have
        $|\delta_{T}(s)|=2$ for all $T \in \TT$. To observe this,
        notice that the initial convex combination satisfies this
        property for vertex $s$ (since the convex combination is
        obtained via Theorem \ref{boyd-sebo-rainbow}).  In the
        transformation of the convex combination we only change edges
        incident on $w_1$ and $w_2$, so if $s\neq w_1,w_2$ the
        property clearly still holds after the transformation.  If $s
        \in \{w_1, w_2\}$, then as noted previously, we do not remove
        or add an edge incident on $s$ if $e_s \in M$.
\end{proof}

\subsection{Proof of Lemma \ref{tjoin} for Constructing Parity Correctors}\label{sec:parity}

We use $O$-joins as parity correctors for each $T \in \mathscr{T}$.
We now give the complete proof of Lemma \ref{tjoin}.

\begin{proof}[Proof of Lemma \ref{tjoin}]
Our goal is to show that $z$ belongs to $O\join(G_x)$.  By definition,
	$z\in [0,1]^{E_x}$.  Now we will show that $z$ satisfies the
	constraint \eqref{o-join-exact}.  First, we state three useful
	claims.

\begin{claim}\label{z-large}
If $z(\delta(U))\geq 1$ for $U \subset V_n$,
	then $z(\delta(U)\setminus A)-z(A) \geq 1 - |A|$.
\end{claim}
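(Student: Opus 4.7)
The plan is to prove Claim \ref{z-large} by a short, direct manipulation of the $O$-join inequality. The essential ingredient I will extract from the definition of $z$ is that every coordinate of $z$ is at most $\tfrac12$: on a 1-edge, $z_e$ is either $\tfrac12$ or $\tfrac{1-\theta}{2}$, and on a fractional edge $e \in H_x$, $z_e = \tfrac{x_e}{2} \leq \tfrac{1-\theta}{2}$ since $x_e \in \{\theta,1-\theta\}$ with $\theta \leq \tfrac12$. This uniform bound immediately yields $z(A) \leq |A|/2$ for any $A \subseteq \delta(U)$.

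With that in hand, I split the target quantity as
\[
z(\delta(U)\setminus A) - z(A) \;=\; z(\delta(U)) - 2\,z(A),
\]
and then combine the hypothesis $z(\delta(U)) \geq 1$ of the claim with the bound $z(A) \leq |A|/2$ to conclude
\[
z(\delta(U)\setminus A) - z(A) \;\geq\; 1 - 2\cdot\tfrac{|A|}{2} \;=\; 1 - |A|,
\]
which is exactly the inequality to be shown.

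There is no real obstacle for this particular claim; its content is simply that the $O$-join constraint \eqref{o-join-exact} at $U$ is automatically satisfied as soon as $z(\delta(U))\geq 1$ and $z_e \leq \tfrac12$ on every edge, irrespective of the choice of $A$ or of the parity condition $|U\cap O|+|A|$ odd. The nontrivial part of the proof of Lemma \ref{tjoin}, which I expect will be carried out in subsequent claims of that proof, is to handle cuts with $z(\delta(U)) < 1$. In such cuts $\delta(U)$ must contain a 1-edge where the $\tfrac{\theta}{2}$ savings have been taken (i.e.\ an edge of $M$), and one will need the hypothesis that every 3-edge cut of $G_x$ meets $M$ in at most one edge, together with the parity condition that $|\delta(U)\cap M|$ is even on 2-edge cuts and that endpoints of $M$ avoid $O$, to close the argument.
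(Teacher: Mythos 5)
Your proof is correct and is essentially identical to the paper's: both rewrite $z(\delta(U)\setminus A)-z(A)$ as $z(\delta(U))-2z(A)$ and use $z_e\leq\frac12$ to bound $z(A)\leq\frac{|A|}{2}$. Your added remarks about where the real work of Lemma \ref{tjoin} lies also match the paper's subsequent case analysis.
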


\begin{cproof}
We have $z(\delta(U)\setminus A)-z(A) = z(\delta(U))-2z(A)$.  
Since $z_e \leq \frac{1}{2}$ for all $e\in E_x$, we have $
z(\delta(U))-2z(A) \geq 1 - |A|$.
\end{cproof}

\begin{claim}\label{no-matching-edges}
If $\delta(U)\cap M =\emptyset$, we have $z(\delta(U))\geq 1$.
\end{claim}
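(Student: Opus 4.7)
The plan is to observe that the definition of $z$ only differs from $x/2$ on edges belonging to $M$: concretely, $z_e = x_e/2$ when $e \in H_x$ (by definition), and $z_e = 1/2 = x_e/2$ when $e \in W_x \setminus M$ (since $x_e = 1$ for 1-edges). Only on edges of $M$ do we have $z_e = (1-\theta)/2 < x_e/2$, so $z_e = x_e/2$ holds whenever $e \notin M$.

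Under the hypothesis that $\delta(U) \cap M = \emptyset$, this equality applies to every edge of $\delta(U)$, so the plan is to write
\begin{equation*}
z(\delta(U)) \;=\; \sum_{e \in \delta(U)} z_e \;=\; \sum_{e \in \delta(U)} \tfrac{x_e}{2} \;=\; \tfrac{1}{2}\, x(\delta(U)).
\end{equation*}
Finally, since $x \in \subt(K_n)$, the subtour elimination constraint gives $x(\delta(U)) \geq 2$ for every proper cut $U$, so $z(\delta(U)) \geq 1$ as required. There is no real obstacle here; the claim is essentially a direct consequence of the fact that the reduction from $1/2$ to $(1-\theta)/2$ on $z$ is confined to edges of $M$.
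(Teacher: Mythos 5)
Your proof is correct and is exactly the paper's argument, just written out in more detail: the paper's one-line justification is precisely that $z_e = x_e/2$ for every $e \notin M$, from which $z(\delta(U)) = \tfrac{1}{2}x(\delta(U)) \geq 1$ follows via the subtour elimination constraints. (One tiny caveat: the claim is also invoked for singleton cuts $U=\{u\}$, not just proper cuts, but there $x(\delta(u))=2$ holds by the degree constraints, so your bound still applies.)
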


\begin{cproof}
This follows from the fact that 
for every edge $e\notin M$, we have $z_e
	= \frac{x_e}{2}$. 
\end{cproof}

\begin{claim}\label{same-parity}
For all $U \subset V_n$, $|\delta(U) \cap W_x|$
	and $|\delta(U)|$ have same parity.
\end{claim}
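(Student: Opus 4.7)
The plan is to exploit the structural description of a $\theta$-cyclic point already recorded just after Definition \ref{def:cyclicpts}: the fractional edges $H_x = \{e : x_e < 1\}$ form a disjoint union of (even) cycles in $G_x$. This immediately reduces the parity claim to a completely combinatorial fact about disjoint unions of cycles.

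Concretely, I would write $\delta(U) = (\delta(U) \cap W_x) \,\dot\cup\, (\delta(U) \cap H_x)$, so that
\[
|\delta(U)| - |\delta(U) \cap W_x| = |\delta(U) \cap H_x|,
\]
and then argue that the right-hand side is always even. For this, observe that $H_x$ decomposes into vertex-disjoint cycles $C_1, C_2, \ldots, C_k$, and for any vertex subset $U \subset V_n$ and any cycle $C_j$, the number of edges of $C_j$ that cross $\delta(U)$ is even (a cycle that enters $U$ must leave it). Summing over $j$ gives $|\delta(U) \cap H_x| \equiv 0 \pmod 2$, so $|\delta(U)|$ and $|\delta(U) \cap W_x|$ share the same parity.

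There is essentially no obstacle here: the only thing to be careful about is citing the right structural fact about $\theta$-cyclic points. Note this argument does not even need $\theta < \tfrac{1}{2}$; it only needs that $H_x$ is an edge-disjoint union of cycles, which follows from $x(\delta(v)) = 2$ together with $W_x$ being a matching (in the reduced cubic case we are working in, $W_x$ is a perfect matching, so every vertex has exactly two fractional edges incident, and these fractional edges therefore form a 2-regular subgraph, i.e.\ a disjoint union of cycles).
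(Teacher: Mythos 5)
Your proposal is correct and is essentially the paper's own argument: the paper's proof is the one-line observation that $|\delta(U)\cap H_x|$ is always even because $H_x$ is a 2-factor of $G_x$, which is exactly the cycle-crossing fact you spell out. No differences worth noting.
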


\begin{cproof}
This follows from the fact that $|\delta(U) \cap H_x|$ is always
even since $H_x$ is a 2-factor of $G_x$.
\end{cproof}

We consider the following cases. \textbf{Case 1: } $|\delta(U)\cap W_x|\geq 3$,  \textbf{Case 2: }  $|\delta(U)\cap W_x|=2$, and \textbf{Case 3: }  $|\delta(U)\cap W_x|=1$.
	
	 \paragraph{Case 1: } If $|\delta(U)\cap W_x|\geq 4$, then
	 $z(\delta(U))\geq 2-2\theta \geq 1$. Thus we may assume
	 $|\delta(U)\cap W_x|=3$. If $|\delta(U)\geq 4$, then
	 $z(\delta(U))\geq \frac{3}{2} - \frac{3}{2}\theta
	 + \frac{\theta}{2} \geq 1$, since
	 $\theta \leq \frac{1}{2}$. If $|\delta(U)|=3$, then by
	 assumption we have $|\delta(U)\cap M|\leq 1$. Thus,
	 $z(\delta(U))\geq 1$. Thus, Claim \ref{z-large} applies in
	 each subcase.

	\paragraph{Case 2: } In this case, if $|\delta(U)|\geq 4$,
        then $z(\delta(U))\geq (1-\theta) + \theta \geq 1$.  By
        Claim \ref{same-parity}, the only remaining subcase to
        consider is when $|\delta(U)|=2$.
	
 By assumption, $|U\cap O|$ is even. Hence, $|A|$ must be odd, which
	implies that $|A|=1$.  Let $\delta(U) = \{e',e''\}$. Since
	$|\delta(U)|=2$, we have either $|\delta(U)\cap M| = 2$ or
	$|\delta(U)\cap M| = 0$.  In both cases $z_{e'} =
	z_{e''}$. Hence, $z(A) = z(\delta(U)\setminus A)$. Therefore,
	$z(\delta(U)\setminus A)-z(A) = 0 = 1-|A|$.

	\paragraph{Case 3: } If $\delta(U) \cap M = \emptyset$, then
	by Claim \ref{no-matching-edges} we have $z(\delta(U))\geq
	1$. Hence, we assume $|\delta(U)\cap M|=1$.  By
	Claim \ref{same-parity}, we only need to consider 
the following
	cases: \textit{Case 3i: } $|\delta(U)|=3$, and \textit{Case
	3ii: } $|\delta(U)|\geq 5$.
	
	\begin{itemize}
		\item[] \textit{Case 3i:} Notice that
		$x(\delta(U))\geq 2$.  In this case, $\delta(U)$ is either
		a critical cut, a vertex cut, or a degenerate tight cut. We assumed that $G_x$
		has no critical cuts. So $U$ is either a vertex cut or a degenerate tight cut. We prove in both cases that $|U\cap O|$ is even. Then we only need to
		consider $|A|$ odd. If $|A|=1$, then
		$z(\delta(U)\setminus A) - z(A) \geq \frac{\theta}{2} \geq 0=
		1-|A|$. 
		If $|A|=3$, then $z(\delta(U)\setminus A) - 		z(A) \geq -1 + \frac{\theta}{2} \geq -2= 1-|A|$. 
		
		 If $U=\{u\}$ for some $u\in
		V_n$, then $u\notin O$ by assumption. Otherwise, $U$
                is a degenerate tight cut. Let
                $\delta(U)=\{e_u,f_v,g_v\}$ where
                $\{f_v,g_v\}=\delta(v)\cap H_x$ for some $v\in
                V_n$. Notice that $\delta(U\setminus \{v\})=
                \{e_u,e_v\}$ and $e_u\in M$. This implies by
                assumption that $e_v \in M$, which implies that $v
                \notin O$.  Since
 $|(U\setminus \{v\})\cap O|$ is even, hence
                $|U\cap O|$ is even. 
		
		\item[] \textit{Case 3ii:} 
Since $|\delta(U)|\geq 5$, if there is an edge
		$e\in \delta(U)\cap H_x$ with $z_e
		= \frac{1-\theta}{2}$, then $z(\delta(U))\geq
		1- \theta + 3\cdot \frac{\theta}{2} = 1 + \frac{\theta}{2}$. Therefore, $\theta<\frac{1}{2}$ and for all edges $e$ in $\delta(U)\cap H_x$, we have $z_e = \frac{\theta}{2}$. 
		
		Let $\mathcal{C}$ be the collection of cycles in
		$H_x$. Since $\theta<\frac{1}{2}$, every cycle in
		$\mathcal{C}$ is even length. Clearly, any cut crosses
		every cycle $C\in \mathcal{C}$ an even number of
		times. If all the edges in $\delta(U)\cap C$ have the
		same $x$ value, then $|U\cap V(C)|$ is even. We have 
		\begin{equation}
		|U| = \sum_{C\in \mathcal{C}: V(C)\subseteq U} |V(C)\cap U| + \sum_{C\in \mathcal{C}: V(C)\cap U\neq \emptyset} |V(C)\cap U|.
		\end{equation}
		By the argument above and the fact that $|V(C)|$ is
		even for all $C\in \mathcal{C}$, we conclude that
		$|U|$ is even. Since $G_x$ is a cubic graph, this
		implies that $|\delta(U)|$ is even. However, by
		Claim \ref{same-parity}, we know that $|\delta(U)|$ is odd.
	\end{itemize}
This concludes the case analysis and the proof.
\end{proof}	
\subsection{Proof of Lemma \ref{alpha}: Partitioning 1-edges into Induced Matchings}\label{ind-match}

The goal of this section is to prove the following lemma.
\mainInducedMatching*		

We say $\delta(U)$ is a triangular 3-cut if $|U|= 3$ or $|V\setminus
U|=3$, and $|\delta(U)|=3$.  A bad 3-edge cut is a proper 3-edge cut
that is not triangular.  We construct the desired partition of $W_x$
into induced matchings by gluing over the bad cuts of $G_x$ and
perform induction on the number of bad 3-edge cuts.  We prove Lemma
\ref{alpha} using a two-phase induction.  Claim \ref{alpha1} is the
base case and Claims \ref{alpha2} and \ref{alpha3} are the first and
second inductive steps.

\begin{claim}\label{alpha1}
	Suppose $G_x$ is 3-edge-connected and contains no bad 3-edge
	cuts. Then Lemma \ref{alpha} holds.
\end{claim}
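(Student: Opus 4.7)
The plan is to reduce the claim to a vertex coloring problem on an auxiliary graph built from $W_x$. Define $H = (W_x, E_H)$ where two distinct $1$-edges $e = ab$ and $f = cd$ are joined in $H$ iff $G_x$ contains an edge with one endpoint in $\{a,b\}$ and the other in $\{c,d\}$. A subset $M \subseteq W_x$ is then an induced matching of $G_x$ precisely when $M$ is an independent set in $H$, so partitioning $W_x$ into five induced matchings amounts to properly $5$-coloring $H$. Since each endpoint of a $1$-edge has exactly two fractional neighbors in the cubic graph $G_x$, each such neighbor lying in a different $1$-edge, we get $\Delta(H) \leq 4$.

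Next I would dispose of conditions (ii) and (iii) of Lemma~\ref{alpha} by examining the small cuts of $G_x$. Condition (iii) is vacuous because $G_x$ is $3$-edge-connected and hence has no $2$-edge cut. For condition (ii), the hypothesis of no bad $3$-edge cuts, combined with the parity identity $|\delta(U)| = 3|U| - 2|E[U]|$ for cubic $G_x$, forces every $3$-edge cut to be either a vertex cut ($|U| = 1$) or a triangular cut with $|U| = 3$ or $|V_n \setminus U| = 3$, in which case the small side induces a triangle in $G_x$. A vertex cut crosses exactly one $1$-edge, so (ii) is automatic. For a triangular cut $U = \{u_1, u_2, u_3\}$, the number of crossing $1$-edges is $\equiv |U| \pmod 2$ because $W_x$ is a perfect matching, hence it is odd, so either $1$ or $3$: the one-edge case is trivial, while if all three crossing edges are $1$-edges $e_{u_1}, e_{u_2}, e_{u_3}$, then the fractional triangle edges $u_iu_j$ of $G_x$ exhibit pairwise $H$-adjacencies among $e_{u_1}, e_{u_2}, e_{u_3}$, placing them in a triangle in $H$. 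Any proper coloring of $H$ therefore gives them three distinct colors.

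Finally, I would produce the $5$-coloring by precoloring $e_v, e_{w_1}, e_{w_2}$ with colors $1, 2, 3$ and then greedy-coloring the remaining $1$-edges. This precoloring is proper: $e_v$ is $H$-adjacent to $e_{w_1}$ via the fractional edge $vw_1$ and to $e_{w_2}$ via $vw_2$; if $e_{w_1}$ and $e_{w_2}$ happen to coincide, the constraint collapses to two distinct colors. Greedy extension succeeds because each uncolored $1$-edge has at most $\Delta(H) \leq 4$ already-colored neighbors in $H$, leaving at least one of the five colors free. The resulting color classes $M_1, \dots, M_5$ are induced matchings by construction, satisfy (i) by the precoloring, and satisfy (ii) and (iii) by the cut analysis above.

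The delicate step I expect to spend the most care on is the structural analysis in the middle paragraph, particularly verifying that the three crossing $1$-edges of a triangular $3$-cut always form a triangle in $H$. This is what upgrades ``proper'' coloring of $H$ into the cut-wise bound required by (ii); the rest is a standard greedy coloring argument with a small precolored set.
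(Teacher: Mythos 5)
Your proof is correct and follows essentially the same route as the paper: both reduce the problem to coloring the 4-regular auxiliary graph obtained by contracting the 1-edges (independent sets correspond to induced matchings), both dismiss condition (iii) as vacuous under 3-edge-connectivity, and both verify condition (ii) via the observation that the three 1-edges crossing a triangular cut are pairwise adjacent in that auxiliary graph. The only difference is the coloring step itself: the paper invokes Brooks' theorem to obtain four classes (treating $K_5$ separately) and then splits one class to enforce condition (i), whereas you precolor $e_v, e_{w_1}, e_{w_2}$ with three distinct colors and greedily extend with five colors using $\Delta(H)\leq 4$ --- an equally valid and slightly more elementary way to arrive at five induced matchings satisfying (i).
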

\begin{cproof}
	In $G_x$, contract every edge in $W_x$. We get a connected 4-regular graph $H=(W_x,H_x)$.  An independent set in $H$ corresponds
	to a set of edges in $W_x$ that forms an induced matching in $G_x$.
	We consider two cases. If $H$ is the complete graph on five vertices,
	then partition the vertex set into five independent sets, which
	corresponds to five induced matchings in $G_x$.  Notice that the
	condition (i) from Lemma \ref{alpha} is satisfied since each induced
	matching contains one edge.
	
	If $H$ is not the complete graph on five vertices, by Brook's
	Theorem (see Theorem 8.4 in \cite{bondy}) we can partition the
	vertices of $H$ into four independent sets where each
	independent set corresponds to an induced matching
	$\{M_1, \ldots, M_4\}$ in $G_x$ and these four induced
	matchings partition $W_x$.  If
	$|M_i\cap \{e_r,e_{w_1},e_{w_2}\}|\leq 1$ for $i \in [4]$,
	then we are done.  Otherwise, assume without loss of
	generality that $\{e_{w_1},e_{w_2}\}\in M_4$.  Then let $M'_4
	= M_4\setminus \{e_{w_1}\}$.  The desired partition is
	$\{M_1,M_2,M_3,M'_4,\{e_{w_1}\}\}$. Thus, condition (i) is
	satisfied.
	
	Now we prove condition (ii).
	First, consider a vertex $u \in V$ and the cut $\delta(u)$ in
	$G_x$.  Clearly $|\delta(u)\cap M_i|\leq |\delta(u)\cap W_x|\leq
	1$.  For a triangular 3-cut, $\delta(U)=\{e_1,e_2,e_3\}$, we cannot have
	$|\delta(U)\cap \{e_1,e_2,e_3\}|\geq 2$, since $\delta(U)\subseteq W_x$ and
	no pair of edges from $\delta(U)$ can belong to an induced matching.
	Since condition (iii) does not apply, this completes the proof of the claim.
\end{cproof}

\begin{claim}\label{alpha2}
	Suppose $G_x$ is 3-edge-connected. Then Lemma \ref{alpha}
	holds.
\end{claim}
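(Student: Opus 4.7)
I would prove Claim \ref{alpha2} by induction on the number of bad 3-edge cuts of $G_x$, with Claim \ref{alpha1} serving as the base case (no bad 3-edge cuts). For the inductive step, fix a bad 3-edge cut $\delta(U) = \{e_1, e_2, e_3\}$ and form the two contracted graphs $\Gux$ and $\Goux$. Each of these is cubic and 3-edge-connected (contraction cannot destroy edge-connectivity), still contains no critical cuts (any critical cut of a contracted graph lifts to a critical cut of $G_x$), and has strictly fewer bad 3-edge cuts than $G_x$, since $\delta(U)$ is replaced on each side by the vertex cut around the new pseudovertex, which is not bad. Hence the inductive hypothesis applies on both sides.

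Without loss of generality $v \in U$. I would invoke the inductive hypothesis on $\Gux$ with special vertex $v$, where $\N(v)$ is computed inside $\Gux$ so that any neighbor of $v$ originally lying in $\overline{U}$ is replaced by $v_{\overline{U}}$, to obtain a partition $\{M_i^U\}_{i=1}^5$, and on $\Goux$ with special vertex $v_U$ to obtain $\{M_i^{\overline{U}}\}_{i=1}^5$. Setting $k = |\delta(U) \cap W_x|$, condition (ii) applied to the 3-edge cut $\delta(v_{\overline{U}})$ in $\Gux$ forces these $k$ 1-edges into $k$ distinct classes of $\{M_i^U\}$, and analogously in $\{M_i^{\overline{U}}\}$. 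I would then relabel $\{M_i^{\overline{U}}\}$ by a permutation so that every 1-edge in $\delta(U)$ carries the same index on both sides, and define $M_i := M_i^U \cup M_i^{\overline{U}}$ for each $i \in [5]$.

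Verification of the required properties proceeds by case analysis. Condition (iii) is vacuous because 3-edge-connectedness rules out 2-edge cuts. Condition (i) follows from the corresponding condition for $\Gux$, the only subtlety being that when some $w_j$ lies in $\overline{U}$, the edge $e_{w_j}$ becomes a $\delta(U)$-edge whose matching-index is fixed by the alignment step. Condition (ii) is inherited for cuts entirely contained in $U$ or in $\overline{U}$; for a 3-edge cut $\delta(S)$ properly crossing $U$, a short argument using cubicity and the fact that both $\delta(S)$ and $\delta(U)$ have exactly three edges shows that $\delta(S)$ projects on each side to a cut of size at most three in the contracted graph, allowing one to transfer the bound via the inductive hypothesis.

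The main obstacle is confirming that each $M_i$ is not just a matching but an \emph{induced} matching in $G_x$. The only new way to violate this is an edge $f \in \delta(U)$ connecting an endpoint of an edge of $M_i^U$ inside $U$ to an endpoint of an edge of $M_i^{\overline{U}}$ inside $\overline{U}$. If $f$ is a 1-edge, the index-alignment step puts $f$ into exactly one $M_j$, and a quick case analysis rules out $j=i$ unless $f$ itself is the conflicting edge, in which case there is no violation. If $f$ is fractional, one must instead ensure that the 1-edges at the two endpoints of $f$, which are the only 1-edges that could place those endpoints into a matching class, do not end up in the same index after alignment. Here the freedom built into condition (i) of Lemma \ref{alpha}, combined with the flexibility in choosing the special vertex in the inductive call on each side, is what one would exploit, treating separately the cases $k \in \{0,1,2,3\}$ and the adjacency pattern of the fractional edges of $\delta(U)$ to the endpoints of the 1-edges among $e_1, e_2, e_3$.
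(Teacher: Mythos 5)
Your overall strategy---induction on the number of bad $3$-edge cuts, splitting at such a cut, recursing on both pieces, aligning the indices of the matchings containing the cut edges, and taking unions $M_i = M_i^U \cup M_i^{\overline U}$---is the same as the paper's. The genuine divergence is the splitting operation, and that is where your argument fails. You contract each side to a pseudovertex. But for a proper $3$-edge cut $\delta(U)=\{e_1,e_2,e_3\}$ of $G_x$, the number of fractional edges in $\delta(U)$ is even ($H_x$ is a $2$-factor), so $|\delta(U)\cap W_x|\in\{1,3\}$; and since a $3$-cut with exactly one $1$-edge is tight, hence critical or degenerate, the typical bad cut the induction must split on has all three of $e_1,e_2,e_3$ in $W_x$. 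In that case the pseudovertex $v_{\overline{U}}$ of $\Gux$ is incident to three $1$-edges: $x^U(\delta(v_{\overline{U}}))=3$, so $x^U$ is not a $\theta$-cyclic point, the $1$-edges of $\Gux$ no longer form a perfect matching, and the base-case machinery of Claim \ref{alpha1} (contract the perfect matching $W_x$ to get a $4$-regular graph and apply Brooks' theorem) is unavailable. The inductive hypothesis therefore cannot be invoked on $\Gux$ and $\Goux$ as you describe. The paper uses a different gadget precisely to preserve the structure: it keeps the three boundary vertices $s_1,s_2,s_3$ (resp.\ $t_1,t_2,t_3$) on each side and adds a triangle on them, so each boundary vertex retains its unique incident $1$-edge (the cut edge) plus two new edges playing the role of fractional edges, and both pieces remain legitimate instances of Lemma \ref{alpha}.

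Two smaller points. First, the ``main obstacle'' you identify---that a cut edge could connect endpoints of same-indexed matching edges on the two sides---is not an obstacle when all three cut edges lie in $W_x$: the only $1$-edge incident to an endpoint of $e_j$ is $e_j$ itself, so after aligning indices the only class meeting $\{s_j,t_j\}$ is $M_j$, which contains $e_j$, and no non-induced pair arises; your unresolved case analysis for a fractional cut edge (``the freedom built into condition (i) \dots is what one would exploit'') is left unexecuted, and the paper's own proof simply treats the cut edges as $1$-edges. Second, your claim that a $3$-cut $\delta(S)$ crossing $U$ ``projects on each side to a cut of size at most three'' needs the standard submodular uncrossing argument to be stated; the paper does not address crossing cuts explicitly either, so this is a matter of detail rather than a divergence of method.
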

\begin{cproof}
	Now let us consider a bad cut.  In particular, consider graph $G_x$
	with 3-edge-cut $\delta(U)=\{e_{1}, e_{2}, e_{3}\}$, and assume
	without loss of generality that $r\in U$.  Let $s_1, s_2$ and $s_3$
	be the endpoints of $e_{1}, e_{2}$ and $e_{3}$ that are in $U$, and
	$t_1,t_2$ and $t_3$ be the other endpoints.  Notice that $s_1,s_2,s_3$
	(and analogously $t_1,t_2,t_3$) are distinct vertices since $G_x$ is
	3-edge-connected.  Construct graph $G_1 = G_x[(V\setminus U) \cup
	\{s_1,s_2,s_3\}]+\{s_1s_2,s_1s_3,s_2s_3\}$ and, symmetrically, graph
	$G_2 = G_x[U\cup \{t_1,t_2,t_3\}]+\{t_1t_2,t_1t_3,t_2t_3\}$.  If both
	$G_1$ and $G_2$ have no bad 3-edge cuts, then we can apply Claim
	\ref{alpha1} to both $G_1$ and $G_2$.  For $G_1$, we find induced
	matchings $\{M^1_1, \ldots, M^1_5\}$ such that conditions (i) and (ii)
	hold.  Similarly, for $G_2$, we find induced matchings $\{M^2_1,
	\ldots, M^2_5\}$ such that (i) and (ii) hold.
	
	Notice that for each edge $e\in \{e_1,e_2,e_3\}$, there is exactly one
	induced matching in $\{M^1_1,\ldots,M^1_5\}$ and in
	$\{M^2_1,\ldots,M^2_5\}$ that contains $e_1$.  Without loss of
	generality, suppose $M^1_i$ and $M^2_i$ each contain edge $e_i$ for $i
	\in [3]$.  Then let $M_i = M^1_i\cup M^2_i$ for $i \in
	[5]$ and notice that $M_i$ is an induced matching in $G_x$.
	We conclude by induction on the number of bad cuts in $G_x$, since
	both $G_1$ and $G_2$ have fewer bad 3-edge cuts than does $G_x$.
\end{cproof}

\begin{claim}\label{alpha3}
	Suppose $G_x$ is 2-edge-connected. Then Lemma \ref{alpha}
	holds.
\end{claim}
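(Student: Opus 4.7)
The plan is to induct on the number of 2-edge cuts of $G_x$, using the case with no 2-edge cuts as base: then $G_x$ is 3-edge-connected and Claim~\ref{alpha2} applies directly. For the inductive step I pick any 2-edge cut $\delta(U) = \{e_1, e_2\}$ with $e_i = s_i t_i$, $s_1, s_2 \in U$, $t_1, t_2 \in \overline{U}$. Since $x(\delta(U)) \geq 2$ and every $x_e \leq 1$, both $e_1$ and $e_2$ lie in $W_x$; because $W_x$ is a perfect matching, $s_1 \neq s_2$ and $t_1 \neq t_2$. Assume without loss of generality that $v \in U$, so that $\{e_v, e_{w_1}, e_{w_2}\} \subset E(G_x[U])$ (both fractional neighbors of $v$ lie in $U$ since $\delta(U) \subseteq W_x$).

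I next perform a 2-sum decomposition: form $G_2 = G_x[U] + \{s_1 s_2\}$ and $G_1 = G_x[\overline{U}] + \{t_1 t_2\}$, declaring each new edge to be a 1-edge. Both graphs are cubic with their 1-edges forming a perfect matching and their fractional edges forming a 2-factor, and both inherit valid $\theta$-cyclic points, since any cut $W \subseteq U$ in $G_2$ corresponds to either $W$ or $W \cup \overline{U}$ in $G_x$ with the same total $x$-value, using $x(\delta(U)) = 2$. Moreover, $G_1$ and $G_2$ each have strictly fewer 2-edge cuts than $G_x$. I apply the inductive hypothesis to $G_2$ with the distinguished vertex $v$ (and the same $\N(v) = \{w_1, w_2\}$, which is preserved) and to $G_1$ with an arbitrary chosen vertex, obtaining partitions $\{M^1_i\}_{i=1}^{5}$ and $\{M^2_i\}_{i=1}^{5}$ of the 1-edges into induced matchings satisfying (i)--(iii). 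Reindexing so that the two replacement edges fall in the class with the same index $k$, define
\[
M_k = (M^1_k \setminus \{t_1 t_2\}) \cup (M^2_k \setminus \{s_1 s_2\}) \cup \{e_1, e_2\}, \qquad M_i = M^1_i \cup M^2_i \text{ for } i \neq k.
\]

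It will then be routine to check that each $M_i$ is an induced matching in $G_x$ and that (i)--(iii) transfer: condition (i) is inherited from $G_2$ since $\{e_v, e_{w_1}, e_{w_2}\} \subset E(G_x[U])$; condition (ii) follows from a case analysis on how a 3-edge cut of $G_x$ meets $\delta(U)$---0 edges of intersection give a 3-cut inside one side, 1 edge of intersection is equivalent to a 3-cut in one side after substituting $e_i$ by the replacement edge, and 2 edges of intersection reduce to a 2-cut within one side, bounded by condition (iii) applied there; and condition (iii) is immediate when the 2-cut is exactly $\{e_1, e_2\}$ (both land in $M_k$) and follows by induction otherwise. The main obstacle I expect is the degenerate case where $s_1 s_2$ or $t_1 t_2$ already lies in $E_x$: such an edge must be fractional (since $W_x$ is a perfect matching), and the 2-sum would create parallel edges, breaking the cyclic-point framework. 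I plan to handle this by choosing $U$ to be a minimal side of any 2-edge cut, which either forces $|U|$ small enough to construct a partition directly, or else precludes the parallel-edge configuration; alternatively, the local structure around such a pair of parallel edges can be contracted to a strictly smaller cyclic-point instance and argued by a separate short induction. A minor secondary check, needed for the induction to bottom out, is that $G_1$ and $G_2$ remain 2-edge-connected, which follows from the 2-edge-connectivity of $G_x$ together with the absence of any 1-edge cut.
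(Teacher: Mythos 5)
Your decomposition is exactly the paper's: induct on the number of 2-edge cuts, split $G_x$ along $\delta(U)=\{e_1,e_2\}$ into $G_x[U]+s_1s_2$ and $G_x[\overline{U}]+t_1t_2$ with the new edges declared 1-edges, apply the inductive hypothesis to both pieces, force the two replacement edges into classes of the same index, and merge. The one place where your write-up falls short is the ``main obstacle'' you flag at the end, for which you offer only speculative workarounds (minimal choice of $U$, a contraction, a separate induction). None of that is needed, because the case is vacuous. If $s_1s_2$ were an edge of $G_x$ it would have to be fractional (both $s_1$ and $s_2$ are already saturated by the perfect matching $W_x$ via $e_1$ and $e_2$), and then $U'=U\setminus\{s_1,s_2\}$ --- which is nonempty because the fractional edges form a 2-factor with no 2-cycles --- would satisfy $\delta(U')=\{f,g\}$, where $f,g$ are the remaining fractional edges at $s_1$ and $s_2$ (no edge of $\delta(U)$ leaves $U'$, and $s_1s_2$ is internal to $\{s_1,s_2\}$). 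Hence $x(\delta(U'))\le 2(1-\theta)<2$, contradicting $x\in\subt(K_n)$. The same argument applies to $t_1t_2$. This is precisely the one-line parenthetical observation in the paper's proof.

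A second caution concerns your sketch of the verification of condition (ii). If a 3-edge cut $\delta(S)$ of $G_x$ met $\delta(U)$ in two edges, it would contain both $e_1$ and $e_2$, which you have deliberately placed in the \emph{same} class $M_k$; then $|\delta(S)\cap M_k|\ge 2$, and condition (iii) of the inductive hypothesis (which gives only parity, not a bound of one) cannot rescue you. The correct resolution is that this configuration cannot occur: a short case analysis on how $S$ partitions $\{s_1,s_2,t_1,t_2\}$, using 2-edge-connectivity, shows that no 3-edge cut of $G_x$ contains both edges of a 2-edge cut (one always produces a nonempty set of degree at most one). With these two points repaired, your argument coincides with the paper's.
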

\begin{cproof}
	We proceed by induction on the number of 2-edge cuts of
	$G_x$. If $G_x$ does not contain any 2-edge cuts then $G_x$ is
	3-edge-connected, so by Claim \ref{alpha2} the claim follows.
	
	For the induction step, consider 2-edge cut $\delta(U)=\{e_1,e_2\}$.
	Since $x$ is a half-cycle point, note that $e_1, e_2 \in W_x$.  Let
	$s_1$ and $s_2$ be the endpoints of $e_1$ and $e_2$ that are in $U$
	and let $t_1$ and $t_2$ be the other endpoints. 
	(Observe that neither $s_1s_2$ nor $t_1t_2$ is an edge in $G_x$;
	otherwise $G_x$ would contain a cut of $x$-value less than 2.)
	Consider graphs
	$G_1=G[U]+ s_1s_2$ and $G_2 = G[V\setminus U]+t_1t_2$.  The set of
	1-edges of $G_1$ is $\{W_x\cap E(G_1)\}\cup \{s_1s_2\}$, and the set of
	1-edges of $G_2$ is $\{W_x\cap E(G_2)\}\cup \{t_1t_2\}$.
	
	Without loss of generality, assume $r\in S$.  Apply induction on $G_1$
	to find induced matchings $\{M^1_1,\ldots,M^1_5\}$ where $s_1s_2\in
	M^1_1$, and on $G_2$ to obtain induced matchings
	$\{M^2_1,\ldots,M^2_5\}$ where $t_1t_2\in M^2_1$.  Set $M_1 =
	\{M^1_1\cup M^2_1 \cup \{e_1, e_2\} \}\setminus \{s_1s_2, t_1t_2\}$ and
	set $M_i = M^1_i\cup M^2_i$ for $i \in \{2, \ldots,5\}$.  Then
	$\{M_1,\ldots, M_5\}$ partition $W_x$ into induced matchings and
	satisfy conditions (i), (ii) and (iii).
\end{cproof}

The proof of Lemma \ref{alpha} follows from Claim \ref{alpha3}.

\section{Construction of Tours for Uniform Points}\label{sec:uniform-points}

Recall the definition of $\frac{2}{k}$-uniform point from Section \ref{sec:uniformpts}. In this section we prove Theorem \ref{uc-improved} regarding $\frac{2}{3}$-uniform points and then we prove Theorem \ref{onlybase-intro} concerning $\frac{2}{4}$-uniform points.

  We would like to remark that a key idea in the proofs of both
  theorems is to first write the uniform point as a convex combination
  of points containing a perfect matching of 1-edges.  In the
  case of a $\frac{2}{3}$-uniform point, this results in a convex combination
  of $\frac{1}{2}$-cyclic points.  Using our main result (Theorem \ref{main}),
  we can save on 1-edges, which results on a uniform saving
  over all edges in the convex combination.  In the case
  of a $\frac{2}{4}$-uniform point, when we have an even number of
  vertices, we can obtain a convex combination of points where each
  vertex is incident to a 1-edge and three fractional $\frac{1}{3}$-edges.
  Then, if we have no tight 4-edge cuts, which occurs
  when the graph is essentially 6-edge-connected, we can save on
  1-edges. (Notice that the graph corresponding to a
  $\frac{2}{4}$-uniform point does not contain any 5-edge cuts.)
 This could serve as a base case if (i) the odd case is
  solved, and (ii) we could glue tours over tight 4-edge cuts.

\subsection{TSP on $\frac{2}{3}$-Uniform Points}\label{sec:thm1point5}

We start by the following lemma reducing TSP on $\frac{2}{3}$-uniform points  to TSP on $\frac{1}{2}$-cyclic points.
\begin{lemma}
	\label{lem:reduce}
	If for any $\frac{1}{2}$-cyclic point $x$ the vector $y$ defined as: $y_e = \frac{3}{2}-\epsilon$ for $e \in W_x$ and $y_e =
	\frac{3}{4} - \delta$ for $e \in H_x$ and $y_e = 0$ for $e\notin E_x$ for constants $\epsilon, \delta
	\geq 0$ belongs to $\tsp(K_n)$, then for any $\frac{2}{3}$-uniform point $z$ we have $ (\frac{3}{2} -
	\frac{\epsilon}{2} - \delta)z \in \tsp(K_n)$. 
\end{lemma}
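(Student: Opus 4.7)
The plan is a standard averaging reduction via Edmonds' perfect matching polytope. First I would observe that the support graph $G_z$ of a $\frac{2}{3}$-uniform point $z \in \subt(K_n)$ is automatically $3$-regular and $3$-edge-connected: values in $\{0, \frac{2}{3}\}$ combined with $z(\delta(v))=2$ force every vertex to have degree exactly $3$ in $G_z$, and $z(\delta(U)) \geq 2$ forces $|\delta_{G_z}(U)| \geq 3$ for every proper cut. Hence $z = \frac{2}{3}\chi^{E(G_z)}$, and it suffices to argue on $G_z$ itself.

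Next I would verify that $\frac{1}{3}\chi^{E(G_z)}$ lies in the perfect matching polytope of $G_z$. Edmonds' inequalities reduce to $\frac{1}{3}\chi^{E(G_z)}(\delta(v)) = 1$, which is immediate from $3$-regularity, and $\frac{1}{3}|\delta_{G_z}(S)| \geq 1$ for every odd $S \subset V_n$, which is immediate from $3$-edge-connectivity. This lets me write $\frac{1}{3}\chi^{E(G_z)} = \sum_{M \in \mathcal{M}} \mu_M \chi^M$ as a convex combination of perfect matchings of $G_z$, where $\{\mu_M\}$ are convex multipliers.

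For each perfect matching $M$, I would form the point $x^M$ with $x^M_e = 1$ on $M$, $x^M_e = \frac{1}{2}$ on $E(G_z) \setminus M$, and $0$ elsewhere. The step requiring the most care---the only real obstacle---is checking that each such $x^M$ is a $\frac{1}{2}$-cyclic point; the subcubic support and the existence of a $1$-edge at every vertex are immediate, so only the subtour constraint needs attention. Cuts with $|\delta(U)| \geq 4$ satisfy $x^M(\delta(U)) \geq 2$ trivially, and for a $3$-edge cut the parity identity $|U| = 2|E[U]\cap M| + |\delta(U)\cap M|$, combined with $|\delta(U)| \equiv |U| \pmod{2}$ in a cubic graph, forces $|U|$ odd and hence $|\delta(U) \cap M| \geq 1$, giving $x^M(\delta(U)) \geq 1 + 2 \cdot \tfrac{1}{2} = 2$.

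Finally, by the hypothesis each associated $y^M$ belongs to $\tsp(K_n)$, so the convex combination $\sum_M \mu_M y^M$ also belongs to $\tsp(K_n)$. A direct calculation on each $e \in E(G_z)$ gives
\begin{equation*}
\sum_M \mu_M y^M_e = \tfrac{1}{3}\bigl(\tfrac{3}{2}-\epsilon\bigr) + \tfrac{2}{3}\bigl(\tfrac{3}{4}-\delta\bigr) = 1 - \tfrac{\epsilon}{3} - \tfrac{2\delta}{3} = \bigl(\tfrac{3}{2} - \tfrac{\epsilon}{2} - \delta\bigr)\cdot \tfrac{2}{3},
\end{equation*}
which matches $(\tfrac{3}{2} - \tfrac{\epsilon}{2} - \delta) z_e$ exactly, and both sides vanish off the support of $z$. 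This establishes $(\tfrac{3}{2} - \tfrac{\epsilon}{2} - \delta) z \in \tsp(K_n)$, with the parity-based subtour check being the only nontrivial ingredient.
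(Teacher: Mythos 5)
Your proposal is correct and is essentially the paper's argument: the paper decomposes $z$ into 2-factors via the $\emptyset$-join polytope and puts value $\tfrac12$ on the 2-factor and $1$ on the complementary perfect matching, which in a cubic 3-edge-connected graph is exactly the complementary view of your decomposition of $\tfrac13\chi^{E(G_z)}$ into perfect matchings, and the final averaging computation is identical. Your explicit parity check that each $x^M$ satisfies the subtour constraints is the same fact the paper delegates to its Observation~\ref{lem:in-subtour} (a 2-factor meets a 3-edge cut in an even number of edges), so nothing is missing on either side.
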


\begin{proof}
	Let $z$ be a $\frac{2}{3}$-uniform point, and let $G_z=(V_n,E_z)$ be its support. Notice that $z \in \emptyset\join(G_z)$. Hence $z$ can be written as a convex combination of $\emptyset$-joins of $G_z$ denoted by $\{\lambda,\mathscr{C}\}$. Observe that each $\emptyset$-join $\mathcal{C}\in \mathscr{C}$ is in fact a 2-factor of $G_z$ since $z(\delta(u))=2$ and $|\mathcal{C}\cap \delta(u)|\leq 2$ for $u\in V_n$. For $\mathcal{C}\in \mathscr{C}$, we define $p^{\mathcal{C}}$ to be such that $p^{\mathcal{C}}_e = 1$ for $e\in \mathcal{C}$ and $p^{\mathcal{C}}_e = \frac{1}{2}$ for $e\in E_z\setminus \mathcal{C}$ and $p^{\mathcal{C}}_e=0$ for $e\in E_n\setminus E_z$. Notice that $p^{\mathcal{C}}$ is a $\frac{1}{2}$-cyclic point. Define $y^{\mathcal{C}}$ as follows: for $e\in E_n$ let $y^{\mathcal{C}}_e
	=\frac{3}{2}-\epsilon$ if $e\in W_{p^\mathcal{C}}$, and $y^{\mathcal{C}}_e =
	\frac{3}{4}-\delta$ if $e\in H_{p^{\mathcal{C}}}$ and $y_e =0$ otherwise. By assumption, we have
	$y^{\mathcal{C}}\in \tsp(K_n)$. Therefore,
	\begin{equation*}
	\hat{z}=\sum_{\mathcal{C}\in \mathscr{C}}\lambda_{\mathcal{C}} y^{\mathcal{C}} \in \tsp(K_n).
	\end{equation*}
	Observe that for $e\in E_x$
	\begin{align*}
	\hat{z}_e =&\frac{1}{3}\cdot (\frac{3}{2}-\epsilon)+ \frac{2}{3}\cdot (\frac{3}{4}-\delta)\\
	=& 1- \frac{\epsilon}{3}-\frac{2\delta}{3}\\
	=& (\frac{3}{2}-\frac{\epsilon}{2}-\delta)\cdot \frac{2}{3}=(\frac{3}{2}-\frac{\epsilon}{2}-\delta)\cdot x_e.
	\end{align*}
	Finally, for $e\in E_n\setminus E_x$, we have $\hat{z}_e = 0$. 
\end{proof}

A consequence of Theorem \ref{main} is that for $\frac{2}{3}$-uniform point $x\in \mathbb{R}^{E_n}$, we have $(\frac{3}{2}-\frac{1}{40})x\in \tsp(K_n)$. 

Haddadan, Newman and Ravi \cite{uniform} used the following theorem in \cite{bit13} to obtain the first factor below $\frac{3}{2}$ for approximating TSP on $\frac{2}{3}$-uniform point.
\begin{thm}[\cite{bit13}]\label{bit13}
	Let $G=(V,E)$ be a bridgeless cubic graph. Then $G$ has a 2-factor that covers all 3-edge cuts and 4-edge cuts of $G$.
	\end{thm}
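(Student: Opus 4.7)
The plan is to translate the statement into one about perfect matchings and then combine a clean polyhedral averaging with an augmented-polytope argument for the residual $4$-cut condition.

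Since $G$ is cubic, the complement $M = E \setminus F$ of a $2$-factor $F$ is a perfect matching, and conversely. As $F$ induces degree $2$ at every vertex, $|F \cap \delta(U)|$ is even for every cut, so $F$ covers a $3$-edge cut $\delta(U)$ iff $|M \cap \delta(U)|=1$, and $F$ covers a $4$-edge cut iff $|M \cap \delta(U)| \in \{0,2\}$. It therefore suffices to find a perfect matching $M$ hitting every $3$-cut exactly once and every $4$-cut at most twice. The fractional matching $x = \frac{1}{3}\chi^E$ lies in the perfect matching polytope of $G$: the equalities $x(\delta(v))=1$ are immediate, and for odd $|U|$ the identity $|\delta(U)| \equiv 3|U| \pmod{2}$ makes $|\delta(U)|$ odd, hence $\geq 3$ by bridgelessness, so $x(\delta(U)) \geq 1$. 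Writing $x = \sum_M \lambda_M \chi^M$ by Edmonds' theorem, the $3$-cut condition falls out for free: for any $3$-cut $\delta(U)$, $|M \cap \delta(U)| \in \{1,3\}$ averages to $x(\delta(U))=1$, so \emph{every} matching in the support satisfies $|M \cap \delta(U)|=1$.

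The remaining difficulty is the $4$-cut condition, and this is the main obstacle. For a $4$-cut the analogous parity gives $|M \cap \delta(U)| \in \{0,2,4\}$ with mean $4/3$, which does not rule out a matching hitting the cut four times, so no single matching in the above decomposition is guaranteed to work. I would handle this by intersecting the perfect matching polytope with the additional inequalities $x(\delta(U)) \leq 2$ for every $4$-edge cut $\delta(U)$, obtaining a polytope $P$ that still contains $\frac{1}{3}\chi^E$ (since $4/3 \leq 2$). Any integer vertex of $P$ is then a perfect matching covering every $3$-cut exactly once and every $4$-cut in at most two edges, and complementing yields the desired $2$-factor.

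The hard part is showing that $P$ is integral. I would attempt an uncrossing argument: pick a vertex $x^*$ of $P$; if fractional, some added $4$-cut inequality is tight, and two crossing tight small cuts in a cubic bridgeless graph make the corner cuts (intersection and symmetric union) small as well, which should allow a laminar reduction to a graph with no tight $4$-cuts and invoke Edmonds' integrality there. A more combinatorial alternative is to start from a perfect matching $M$ supplied by the previous step (so all $3$-cut conditions already hold) and, as long as some $4$-cut $\delta(U)$ has $|M \cap \delta(U)|=4$, swap $M$ along an $M$-alternating cycle lying entirely inside $G[U]$ or $G[V \setminus U]$ to drop $|M \cap \delta(U)|$ to $2$; bridgelessness together with the $3$-cut property should guarantee that such a cycle exists and leaves every $3$-cut intersection unchanged. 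Showing that this exchange process terminates without cascading violations of other $4$-cuts is the technical heart of the argument, and is where I expect most of the work to go.
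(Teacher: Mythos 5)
This theorem is not proved in the paper at all; it is imported verbatim from \cite{bit13}, so there is no internal proof to compare against, and your proposal must stand on its own. Your reduction to perfect matchings is correct and cleanly stated: in a cubic graph the complement of a $2$-factor is a perfect matching $M$, covering a $3$-edge cut amounts to $|M\cap\delta(U)|=1$, covering a $4$-edge cut amounts to $|M\cap\delta(U)|\in\{0,2\}$, and the averaging argument on a decomposition of $\tfrac{1}{3}\chi^{E}$ does force $|M\cap\delta(U)|=1$ on every $3$-cut for every matching in the support. That half is standard and sound.

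The $4$-cut condition, however, is the entire content of the theorem, and neither of your two routes closes it. For route (a), the integrality of the perfect matching polytope intersected with the inequalities $x(\delta(U))\le 2$ over all $4$-edge cuts is asserted, not proved; ``should allow a laminar reduction'' is the whole theorem in disguise, and there is no reason to expect this polytope to be integral (if it were, one could optimize arbitrary weights over it, yielding a weighted version of the theorem that is not known to hold; the proofs in the literature, due to Kaiser and \v{S}krekovski and to Boyd, Iwata and Takazawa, are delicate combinatorial or algorithmic constructions, not LP roundings). For route (b), the basic mechanism is wrong as stated: an $M$-alternating cycle lying entirely inside $G[U]$ or $G[V\setminus U]$ is disjoint from $\delta(U)$, so switching $M$ along it leaves $|M\cap\delta(U)|$ unchanged. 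To drop a $4$-cut intersection from $4$ to $2$ you need an alternating cycle that crosses $\delta(U)$ exactly twice, both times on matching edges; establishing that such a cycle exists, that the switch does not create new violations on other $3$- and $4$-cuts, and that the process terminates is precisely the hard part, which the proposal explicitly defers. As it stands, only the (folklore) $3$-cut half of the statement has been proved.
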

We can combine the ideas in the proof of Theorem 1 of \cite{uniform} with Theorem \ref{main} to prove the following.
\begin{thm}	\label{application-uniformcover}
	Let $x$ be a $\frac{2}{3}$-uniform point and $G_x=(V_n,E_x)$ its support graph.  Then $\frac{17}{12}x\in \tsp(K_n)$.  If $G_x$ is Hamiltonian, then $\frac{87}{68} x\in \tsp(K_n)$. 
\end{thm}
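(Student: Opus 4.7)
The plan is to adapt the proof of Theorem~1 of Haddadan, Newman and Ravi~\cite{uniform}, which establishes $\frac{27}{19}z\in\tsp(K_n)$ by combining Theorem~\ref{bit13} with the polyhedral Christofides analysis, by substituting Corollary~\ref{maincorollary} in place of the $\frac{3}{2}$ bound used there on the $1$-edges of the intermediate $\frac{1}{2}$-cyclic point.

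Concretely, let $z$ be a $\frac{2}{3}$-uniform point with cubic, $3$-edge-connected support graph $G_z$. By Theorem~\ref{bit13}, I fix a $2$-factor $\mathcal{C}^{\ast}$ of $G_z$ that covers every $3$-edge cut and every $4$-edge cut, and let $M = E(G_z)\setminus\mathcal{C}^{\ast}$, which is a perfect matching. The associated $\frac{1}{2}$-cyclic point $p^{\mathcal{C}^{\ast}}$ (value $\frac{1}{2}$ on $\mathcal{C}^{\ast}$, value $1$ on $M$) has a particularly well-behaved cut structure: every $3$-edge cut of $G_z$ contains exactly two edges of $\mathcal{C}^{\ast}$ and one edge of $M$, so it becomes a critical cut of $p^{\mathcal{C}^{\ast}}$ handled cleanly by the gluing framework of Section~\ref{sec:matching-patterns}.

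Applying Corollary~\ref{maincorollary} to $p^{\mathcal{C}^{\ast}}$ gives a convex combination of handpicked tours of $G_z$ using each edge of $M$ at most $\frac{3}{2}-\frac{1}{20}$ times and each edge of $\mathcal{C}^{\ast}$ at most $\frac{3}{4}$ times; this supplies $\epsilon = \frac{1}{20}$ in the language of Lemma~\ref{lem:reduce}. The remaining step is to import from~\cite{uniform} the extra saving $\delta$ on the fractional edges, produced by averaging over $2$-factors selected via Theorem~\ref{bit13}; combining the two via Lemma~\ref{lem:reduce} then yields
\[
\Bigl(\tfrac{3}{2}-\tfrac{\epsilon}{2}-\delta\Bigr)\,z \;=\; \Bigl(\tfrac{3}{2}-\tfrac{1}{40}-\tfrac{7}{120}\Bigr)\,z \;=\; \tfrac{17}{12}\,z \;\in\; \tsp(K_n),
\]
where the arithmetic identity $\tfrac{1}{40}+\tfrac{7}{120}=\tfrac{1}{12}$ pinpoints the needed value $\delta=\tfrac{7}{120}$. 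For the Hamiltonian case, I would take $\mathcal{C}^{\ast}$ to be a Hamilton cycle of $G_z$; the single-cycle structure of $\mathcal{C}^{\ast}$ unlocks the Boyd--Seb\H{o} construction for Hamiltonian inputs~\cite{boydsebo} (itself giving $\tfrac{9}{7}z$) in place of the generic Christofides-style analysis on fractional edges, and combining this with Corollary~\ref{maincorollary}'s $1$-edge savings in the same manner produces the sharper bound $\tfrac{87}{68}\,z$.

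The principal technical obstacle will be verifying that the $1$-edge savings from Corollary~\ref{maincorollary} and the fractional-edge savings imported from~\cite{uniform} are simultaneously realizable in a single convex combination: the parity corrector of Lemma~\ref{tjoin} must be reduced on the induced matchings in $M$ and on selected edges of $\mathcal{C}^{\ast}$ in tandem, while the resulting vector must still lie in the relevant $O$-join polytope of $G_z$. Once this joint feasibility is checked, the stated constants $\tfrac{17}{12}$ and $\tfrac{87}{68}$ are forced by the arithmetic above.
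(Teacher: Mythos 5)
There is a genuine gap at exactly the point you flag as the ``principal technical obstacle,'' and it is not a verification issue but a missing construction. Your route needs, for the $\frac{1}{2}$-cyclic point $p^{\mathcal{C}^{\ast}}$, a \emph{single} family of tours realizing both $\epsilon=\frac{1}{20}$ on the $1$-edges and $\delta=\frac{7}{120}$ on the fractional edges, so that Lemma~\ref{lem:reduce} (or a fixed-$2$-factor variant of it) applies. Corollary~\ref{maincorollary} gives only $\delta=0$: Theorem~\ref{main} deliberately keeps the fractional edges at $\frac{3}{2}x_e$, and nothing in the paper (or in~\cite{uniform}) provides a positive $\delta$ for the fractional edges of a $\frac{1}{2}$-cyclic point via the parity corrector of Lemma~\ref{tjoin}; producing one would amount to a new result. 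Moreover, Lemma~\ref{lem:reduce} as stated requires its hypothesis for \emph{every} $\frac{1}{2}$-cyclic point because its proof averages over the $\emptyset$-join decomposition of $z$ into many $2$-factors; if you instead fix the single $2$-factor $\mathcal{C}^{\ast}$ of Theorem~\ref{bit13}, the ratio computation in that lemma no longer applies, since a single $p^{\mathcal{C}^{\ast}}$-based bound leaves the matching edges at roughly $\frac{9}{4}$ times their $x$-value.

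The paper's proof avoids joint realizability altogether: it mixes two \emph{separate} points of $\tsp(K_n)$ built from the one $2$-factor $\mathcal{C}$ of Theorem~\ref{bit13}. The first, $y^1$, has $y^1_e=1$ on $\mathcal{C}$ and $y^1_e=\frac{4}{5}$ on $M=E_x\setminus\mathcal{C}$; it comes from writing $\frac{2}{5}\chi^{E(G')}$ as a convex combination of $r$-trees of the $5$-edge-connected contraction $G'=G_x/\mathcal{C}$ (this is where the cut-covering property of Theorem~\ref{bit13} is actually used --- not to make $3$-cuts critical) and taking the tours $\mathcal{C}+2T$. The second, $y^2$, is exactly your application of Theorem~\ref{main} to $p^{\mathcal{C}}$, giving $\frac{3}{4}$ on $\mathcal{C}$ and $\frac{3}{2}-\frac{1}{20}$ on $M$. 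Then $\frac{7}{9}y^1+\frac{2}{9}y^2=\frac{17}{12}x$. For the Hamiltonian case the paper does not need Boyd--Seb\H{o}: it simply replaces $y^1$ by the tour $\chi^{\mathcal{C}}$ itself and takes $\frac{7}{17}\chi^{\mathcal{C}}+\frac{10}{17}y^2=\frac{87}{68}x$. Without the $y^1$-type construction (or a genuinely new $\delta>0$ statement for cyclic points), your argument only yields $(\frac{3}{2}-\frac{1}{40})x\in\tsp(K_n)$, which the paper notes as the direct consequence of Theorem~\ref{main}, and which is weaker than $\frac{17}{12}x$.
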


\begin{proof}
	By Theorem \ref{bit13}, $G_x$ has a 2-factor $\mathcal{C}$ that
	covers all 3-edge cuts and 4-edge cuts of $G_x$. Define vector $y^1$ as
	follows: $y^1_e= 1$ for $e\in \mathcal{C}$ and $y^1_e = \frac{4}{5}$ for
	$e\in E_x\setminus \mathcal{C}$ and $y^1_e =0$ for $e\in E_n \setminus E_x$. 
	\begin{claim}
		We have $y^1\in \tsp(K_n)$.
	\end{claim}
	\begin{cproof}
		Notice graph $G'=G_x/\mathcal{C}$ is a 5-edge-connected graph. We can assume without loss of generality that $G'$ is also 5-regular\footnote{Replace every vertex $v$ of degree more than 5 with a doubled cycle of length $|\delta(v)|$ and connect each vertex in the cycle to a neighbor of $v$ in $G'$.}. For any vertex $r$ of $G'$ we have $\frac{2}{5}\chi^{E(G')}\in r\vtree(G')$. So the vector $\frac{2}{5}\chi^{E(G')}$ can be written as a convex combination of $r$-trees of $G'$ denoted by $\{\lambda,\mathcal{T}\}$. For $T\in \mathcal{T}$ the multigraph $F_T=\mathcal{C}+2T$ is a tour of $G_x$ and therefore $K_n$. Finally,  $y^1= \sum_{T\in \mathcal{T}}\lambda_T\chi^{F_T}$.
	\end{cproof}
	On the other hand, we can define $z$ where
	$z_e = \frac{1}{2}$ for $e\in \mathcal{C}$ and $z_e = 1$ for $e\in
	E_x\setminus \mathcal{C}$ for $z_e=0$ for $e\in E_n\setminus E_x$. Vector $z$ is a
	$\frac{1}{2}$-cyclic point, hence we can apply Theorem \ref{main} to
	obtain vector $y^2\in \tsp(K_n)$ such that $y^2_e = \frac{3}{4}$
	for $e\in \mathcal{C}$, $y^2_e = \frac{3}{2}-\frac{1}{20}$ for $e\in
	E_x\setminus \mathcal{C}$ and $y^2_e=0$ for $e\in E_n\setminus E_x$. Notice that
	$\frac{7}{9}y^1+ \frac{2}{9}y^2 \in \tsp(G)$ and is equal to $\frac{17}{12} x$.
	
	If $G_x$ is Hamiltonian, we can assume $\mathcal{C}$ is the Hamiltonian
	cycle of $G_x$.  Hence $\chi^{\mathcal{C}}\in \tsp(K_n)$. In this case $\frac{7}{17}\cdot \chi^{\mathcal{C}}+ \frac{10}{17}\cdot y^2 \in \tsp(K_n)$ and is equal to $\frac{87}{68} x$. 
\end{proof}

\subsection{A Base Case for $\frac{2}{4}$-Uniform Points}\label{sec:thm1point7}
Due to the fact that we can glue over critical cuts, we observed that TSP
on a $\theta$-cyclic point $x$ is essentially equivalent to the problem
with the assumption that $G_x$ contains no critical cuts.
Analogously, in the case of a $\frac{2}{4}$-uniform point $x$, Theorem
\ref{onlybase-intro} could serve as the base case if we were able
to glue over the proper minimum cuts of $G_x$.  However,
the difference here is that (1) the gluing arguments we presented for
$\theta$-cyclic points can not easily be extended to this case (due to the
increased complexity of the distribution of patterns), and (2) we
require an even number of vertices for our arguments.  

\introFourRegBasecase*

\begin{proof}
	We prove the claim by showing that there
	is a distribution of tours that satisfies the properties. It
	is easy to see that the proof yields a convex combination of
	tours of $G$.  Since $G$ does not have a proper 4-edge cut and since it is Eulerian, a proper cut of $G$ has at least 6 edges.
	
	Define $y_e=\frac{1}{4}$ for all $e\in E$. Vector $y$ is in
	the perfect matching polytope of $G$ and can be written as a
	convex combination of perfect matchings of $G$.  Choose a
	perfect matching $M$ at random from the distribution defined
	be the convex multipliers of this convex combination.
	
	Let $r\in V$. Define vector $z$ as follows: $z_e=1$ if $z\in M$ and
	$z_e = \frac{1}{3}$ for $z\in E\setminus M$. Observe that $z\in r\vtree(G_x)$ for any $v\in V$: $z(\delta(U))\geq \frac{1}{3}\cdot |\delta(U)|\geq 2$ if
	$|U|\geq 2$ and $|V\setminus U|\geq 2$ and $z(\delta(r))=2$.
	
	Applying Brook's theorem (similar to the proof of Lemma \ref{alpha})
	we can find collection $\{M_1,\ldots,M_7\}$ of induced matchings of
	$G$ that partition $M$. Choose $i \in[7]$ uniformly at random.
	For each $e=st\in M_i$, include the three edges incident on $s$ in one
	set and the three edges incident to $t$ in another set.  Notice all
	six edge are distinct since $G$ has no proper 4-edge cuts.  Apply
	Theorem \ref{boyd-sebo-rainbow} to decompose $z$ into a convex combination of
	rainbow $r$-trees of $G$ with respect to this partition. Take a
	random $r$-tree $T$ from this convex combination using the
	distribution defined by the convex multipliers. Let $O$ be the set of
	odd degree vertices of $T$.  Note that for each $e=st\in M_i$,
	$s,t\notin O$ by construction.  Define vector $p$ to be such that
	$p_e= \frac{1}{2}$ for $e\in M\setminus \{M_i\}$ and $p_e =
	\frac{1}{6}$ otherwise. We have $p\in O\join(G)$. Therefore, we can
	write $p$ as convex combination of $O$-joins of $G$. Choose one of
	the $O$-joins at random from the convex combination and label it
	$J$. Note that $F=T+J$ is a tour of $G$. For an edge $e\in M$ we
	have
	\begin{align*}
	\Pr[e\in J| e\in M] &= \Pr[e\in J|e\in M_i]\Pr[e\in M_i]+\Pr[e\in J|e\in M\setminus M_i]\Pr[e\in M\setminus M_i]\\
	&= \frac{1}{6}\cdot \frac{1}{7} + \frac{1}{2}\cdot \frac{6}{7} = \frac{19}{42}.
	\end{align*}
	If $e\notin M$, then we have $\Pr[e\in J|e\notin M]= \frac{1}{6}$. Hence,
	\begin{align*}
	\Pr[e\in J] &= \Pr[e\in J|e\in M]\Pr[e\in M]+\Pr[e\in J|e\notin M]\Pr[e\notin M]\\
	&= \frac{19}{42}\cdot
	\frac{1}{4} + \frac{1}{6}\cdot \frac{3}{4} = \frac{5}{21}.
	\end{align*}
	Observe that $\mathbb{E}[z_e] = 1\cdot \Pr[e\in M] + \frac{1}{3}\cdot Pr[e\notin M]=\frac{1}{2}$. Therefore, $\Pr[e\in T]=\Pr[e\notin T]=\frac{1}{2}$.
	\begin{align*}
	\mathbb{E}[\chi^F_e]&= 2\cdot  \Pr[e\in T \text{ and } e\in J] + \Pr[e\in T \text{ and } e\notin J]+ \Pr[e\notin T \text{ and } e\in J] \\
	&=2\cdot  \frac{1}{2} \cdot \frac{5}{21} + \frac{1}{2} \cdot \frac{16}{21}+ \frac{1}{2}\cdot \frac{5}{21}=\frac{3}{4} - \frac{1}{84}.
	\end{align*}
	Thus, each edge $e \in E$ is used
	to an extent $(\frac{3}{2}- \frac{1}{42})\cdot \frac{1}{2}$. This concludes the proof.
\end{proof}

\section{Concluding Remarks}\label{sec:conclusion}

In this paper, we showed how to improve the multiplicative
approximation factor of Christofides algorithm on the 1-edges of
$\theta$-cyclic points from $\frac{3}{2}$ to
$\frac{3}{2}-\frac{\theta}{10}$. Approaching Conjecture \ref{conj:4/3}
from this angle, we propose the following open problem, which is implied by the four-thirds conjecture. 

\begin{problem}\label{problem1}
Let $x\in \mathbb{R}^{E_n}$ be a $\theta$-cyclic point. Define vector $y$ as
follows: $y_e = \frac{4}{3}$ for $e \in
W_x$, $y_e = \frac{3}{2}x_e$ for $e\in H_x$ and $y_e = 0$ for
$e \notin E_x$.  Can we show $y\in \tsp(K_n)$?
\end{problem} 
In fact, the bound above is tight: for $\epsilon>0$, there exists a $\frac{1}{2}$-cyclic point $x^\epsilon$ such that vector $y^\epsilon$ defined as $y^{\epsilon}_e = \frac{4}{3}-\epsilon$ for $e \in
W_{x^\epsilon}$, $y_e = \frac{3}{2}x^{\epsilon}_e$ for $e\in H_{x^\epsilon}$ and $y_e = 0$ for
$e \notin E_{x^\epsilon}$.  Then $y^\epsilon\notin \tsp(K_n)$ (See Figure \ref{fig:4/3}). 

\begin{figure}[h]
	\centering
			\begin{tikzpicture}[scale=0.8]
		
		\draw [-] [black, line width=0.3mm] plot [smooth, tension=0] coordinates {(1,1) (2.4,1)};
		\draw [-] [black, line width=0.3mm] plot [smooth, tension=0] coordinates {(3.6,1) (5,1)};
		
		\draw [-] [black, line width=0.3mm] plot [smooth, tension=0] coordinates {(0,2) (2.4,2)};
		
		\draw [-] [black, line width=0.3mm] plot [smooth, tension=0] coordinates {(6,2) (3.6,2)};
			
		\draw [-] [black, line width=0.3mm] plot [smooth, tension=0] coordinates {(0,0) (2.4,0)};
		
		\draw [-] [black, line width=0.3mm] plot [smooth, tension=0] coordinates {(6,0) (3.6,0)};

		\draw [dashed] [black, line width=0.3mm,xshift=0 cm] plot [smooth, tension=0] coordinates {(0,0) (1,1)};
		
		\draw [dashed] [black, line width=0.3mm,xshift=0 cm] plot [smooth, tension=0] coordinates {(1,1) (0,2)};
		
		\draw [dashed] [black, line width=0.3mm,xshift=0 cm] plot [smooth, tension=0] coordinates {(0,0) (0,2)};
		
		\draw [dashed] [black, line width=0.3mm,xshift=0 cm] plot [smooth, tension=0] coordinates {(5,1) (6,2)};
		
		\draw [dashed] [black, line width=0.3mm] plot [smooth, tension=0] coordinates {(6,2) (6,0)};
		
		\draw [dashed] [black, line width=0.3mm] plot [smooth, tension=0] coordinates {(5,1) (6,0)};
		
		\draw[black,fill=white] (1,1) ellipse (0.07 cm  and 0.07 cm);
		\draw[black,fill=white] (0.6,2) ellipse (0.07 cm  and 0.07 cm);
		\draw[black,fill=white] (1.2,2) ellipse (0.07 cm  and 0.07 cm);
		\draw[black,fill=white] (5.4,2) ellipse (0.07 cm  and 0.07 cm);
		\draw[black,fill=white] (4.8,2) ellipse (0.07 cm  and 0.07 cm);
		\draw[black,fill=white] (4.2,2) ellipse (0.07 cm  and 0.07 cm);
		\draw[black,fill=white] (1.8,2) ellipse (0.07 cm  and 0.07 cm);
		\draw[black,fill=white] (3.6,2) ellipse (0.07 cm  and 0.07 cm);
		\draw[black,fill=white] (2.4,2) ellipse (0.07 cm  and 0.07 cm);
		
		\draw[black,fill=white] (1.7,1) ellipse (0.07 cm  and 0.07 cm);
		\draw[black,fill=white] (2.4,1) ellipse (0.07 cm  and 0.07 cm);
		\draw[black,fill=white] (4.3,1) ellipse (0.07 cm  and 0.07 cm);
		\draw[black,fill=white] (3.6,1) ellipse (0.07 cm  and 0.07 cm);
		
		\draw[black,fill=white] (0,2) ellipse (0.07 cm  and 0.07 cm);
		\draw[black,fill=white] (0,0) ellipse (0.07 cm  and 0.07 cm);
		\draw[black,fill=white] (5,1) ellipse (0.07 cm  and 0.07 cm);
		\draw[black,fill=white] (6,2) ellipse (0.07 cm  and 0.07 cm);
		\draw[black,fill=white] (6,0) ellipse (0.07 cm  and 0.07 cm);
		
		\draw[black,fill=white] (0.6,0) ellipse (0.07 cm  and 0.07 cm);
		\draw[black,fill=white] (1.2,0) ellipse (0.07 cm  and 0.07 cm);
		\draw[black,fill=white] (5.4,0) ellipse (0.07 cm  and 0.07 cm);
		\draw[black,fill=white] (4.8,0) ellipse (0.07 cm  and 0.07 cm);
		\draw[black,fill=white] (4.2,0) ellipse (0.07 cm  and 0.07 cm);
		\draw[black,fill=white] (1.8,0) ellipse (0.07 cm  and 0.07 cm);
		\draw[black,fill=white] (3.6,0) ellipse (0.07 cm  and 0.07 cm);
		\draw[black,fill=white] (2.4,0) ellipse (0.07 cm  and 0.07 cm);

		\node (Q) at (3,2) {{$\ldots$}};
		\node (Q) at (3,1) {{$\ldots$}};
		\node (Q) at (3,0) {{$\ldots$}};
		\end{tikzpicture}
	\label{fig:5.0}
	\caption{The support of $x^\epsilon$: In the figure above each of the three paths of solid edges contain $\ceil{\frac{1}{\epsilon}+1}$ vertices. We have $x^\epsilon_e= 1$ for solid edge $e$, $x^\epsilon_e = \frac{1}{2}$ for dashed edge $e$, and $x^\epsilon_e = 0$ for edge $e$ not depicted.} \label{fig:4/3}
\end{figure}
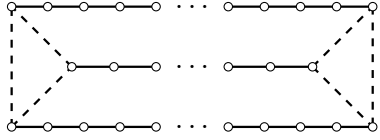
 This
makes the problem above intriguing
even when restricted to $\frac{1}{2}$-cyclic points.  For a $\frac{1}{2}$-cyclic point $x$ where $H_x$ is disjoint union of 3-cycles, Boyd and Carr \cite{boydcarr} achieved this bound. Notice that this class includes $x^\epsilon$ in Figure \ref{fig:4/3}. Interestingly, a construction very similar to that of Boyd and Seb\H{o} \cite{boydsebo} implies that for a $\frac{1}{2}$-cyclic point
$x$ where $H_x$ is a union of vertex-disjoint 4-cycles, we can go beyond this factor.

\begin{thm}\label{boydsebo-mod}
	Let $x\in \mathbb{R}^{E_n}$ be a $\frac{1}{2}$-cyclic point where $H_x$ is a union of vertex-disjoint 4-cycles. Define vector $y$ as
	follows: $y_e = \frac{5}{4}$ for $e \in
	W_x$, $y_e = \frac{3}{2}x_e$ for $e\in H_x$ and $y_e = 0$ for
	$e \notin E_x$.  We have $y\in \tsp(K_n)$.
\end{thm}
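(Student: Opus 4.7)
The plan is to establish $y \in \tsp(K_n)$ by exhibiting $y$ as the expected edge-multiplicity vector of a random tour of $G_x$. Observe that $y = x + \frac{1}{4}\chi^{E_x}$, so I would construct tours of the form $F = H + D$, where $H$ is a Hamilton cycle of $G_x$ and $D$ is an Eulerian subgraph of $G_x$, drawn independently from suitable distributions; then by linearity, $\Ex[\chi^F]=y$ as desired.

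First, I would write $x$ as a convex combination of Hamilton cycles of $G_x$. Since $x_e=1$ on every $1$-edge, every tree in any $v$-tree decomposition of $x$ must contain all of $W_x$, contributing degree one at each vertex. By applying Theorem~\ref{boyd-sebo-rainbow} with a partition that pairs the two fractional edges at each vertex (formalized via the half-edge pairing used in the proof of Lemma~\ref{lambda=0}), the rainbow constraint forces exactly one fractional edge at each vertex to lie in the tree. Combined with the $1$-edge, every vertex then has degree exactly two in the tree, so the $v$-tree is $2$-regular and connected --- that is, a Hamilton cycle of $G_x$. This produces $x = \sum_H \lambda_H \chi^{H}$ with each $H$ a Hamilton cycle of $G_x$.

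Next, I would show that $\frac{1}{4}\chi^{E_x}$ lies in the $\emptyset$-join polytope of $G_x$, hence is a convex combination of Eulerian subgraphs of $G_x$. To verify the constraints~\eqref{o-join-exact}, for any $U\subseteq V_n$ and $A\subseteq \delta(U)$ with $|A|$ odd, the inequality $\tfrac{1}{4}\chi^{E_x}(\delta(U)\setminus A) - \tfrac{1}{4}\chi^{E_x}(A)\geq 1-|A|$ reduces to $|\delta(U)|\geq 4-2|A|$. For $|A|=1$ this asks $|\delta(U)|\geq 2$, which holds since $G_x$ is $2$-edge-connected as the support of a point in $\subt(K_n)$; for $|A|\geq 3$ the right-hand side is non-positive and the constraint is trivial. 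Thus $\frac{1}{4}\chi^{E_x} = \sum_D \mu_D \chi^D$ where each $D$ is an Eulerian subgraph of $G_x$. Finally, I would combine the two decompositions: sampling $H$ according to $\{\lambda_H\}$ and, independently, $D$ according to $\{\mu_D\}$, each tour $F:=H+D$ is Eulerian (both summands are) and connected (because $H$ is), so $F$ is a tour of $G_x$; and $\sum_{H,D}\lambda_H\mu_D\chi^{H+D} = x + \tfrac{1}{4}\chi^{E_x} = y$, exhibiting $y$ as a convex combination of tours, so $y\in \tsp(K_n)$.

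The main obstacle lies in the first step: the natural pairs $\{f_u,g_u\}$ of fractional edges at each vertex are not disjoint, since each fractional edge of a $4$-cycle belongs to two such pairs (one at each endpoint), so Theorem~\ref{boyd-sebo-rainbow} does not apply verbatim. Making this rigorous requires either the half-edge formalism implicit in the proof of Lemma~\ref{lambda=0}---which yields disjoint pairs only when the matching $M$ has endpoints in distinct $4$-cycles---or an independent polyhedral argument, patterned on the construction of Boyd and Seb\H{o}~\cite{boydsebo}, that shows $x$ lies in the convex hull of Hamilton cycles of $G_x$.
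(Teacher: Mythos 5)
Your reduction $y = x + \tfrac{1}{4}\chi^{E_x}$ and the second half of the argument (that $\tfrac{1}{4}\chi^{E_x}\in\emptyset\join(G_x)$, and that a Hamilton cycle plus an $\emptyset$-join is a tour) are fine. The fatal problem is Step 1. Writing $x$ as a convex combination of Hamilton cycles of $G_x$ is exactly the statement $x\in\tsp(K_n)$: since $x(E_n)=n$ and every tour has at least $n$ edges, any convex combination of tours summing to $x$ must consist of Hamilton cycles of $G_x$. But $x\in\tsp(K_n)$ is false for this class: as recalled in Section 1.2, there is a family of $\frac12$-cyclic points with $H_x$ a union of $4$-cycles realizing the $\frac43$ lower bound on $g(\tsp)$ \cite{boydsebo}, i.e., points for which $\alpha x\notin\tsp(K_n)$ for every $\alpha<\frac43$, and in particular $x\notin\tsp(K_n)$. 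So no amount of care with the rainbow machinery can rescue the first step. Concretely, the obstruction you flagged is real and unavoidable: the pairs $\{f_u,g_u\}$ overlap on every $4$-cycle, and the only disjoint repair (pairing opposite edges of each $4$-cycle, which is what Boyd and Seb\H{o} actually do) forces two edges per square but not one per vertex, so it yields connectors with odd-degree vertices, not Hamilton cycles. Boyd and Seb\H{o} prove the existence of a single Hamilton cycle through $W_x$ meeting each square in opposite edges, not membership of $x$ in the Hamilton-cycle polytope.

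The paper's proof is structured differently precisely to avoid this. It takes the one Boyd--Seb\H{o} Hamilton cycle $H$ with weight $\frac14$, and for the remaining weight $\frac34$ uses Boyd--Seb\H{o} connectors $T$ (with $|T\cap C|=2$ and $|T\cap(C\cap H)|=1$ for each square $C$) corrected by $O_T$-joins drawn from a non-uniform vector $z^T$ ($\tfrac13$ on $W_x$, $\tfrac16$ on $H_x\cap H$, $\tfrac12$ on $H_x\setminus H$); the saving on the $1$-edges comes from showing $z^T\in O_T\join(G_x)$, not from avoiding parity correction altogether. If you want to complete a proof along your lines, you would have to replace your Step 1 with this connector-plus-join construction, at which point you are reproducing the paper's argument rather than simplifying it.
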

\begin{proof}[Proof Sketch]
In this case, Boyd and Seb\H{o} \cite{boydsebo} showed that $G_x$ has a Hamilton cycle $H$ such that $H\subset W_x$ and $H$ intersects each 4-cycle of $H_x$ at opposite edges.
	
	They also show that $x$ can be written as convex combination
        of connectors of $G_x$ denoted by $\{\lambda,\mathcal{T}\}$
        such that for $T\in\mathcal{T}$ we have $|T\cap C|=2$ and $|T\cap (C\cap H)|=1$ for each 4-cycle $C\in H_x$~\cite{boydsebo}.
	
	For a $T\in \mathcal{T}$ define $O_T$ be the odd degree
        vertices of $T$ and define vector $z^T$ as follows: $z^T_e=
        \frac{1}{3}$ for $e\in W_x$, $z^T_e = \frac{1}{6}$ for $e\in
        H_x\cap H$ and $z^T_e = \frac{1}{2}$ for $e\in H_x\setminus
        H$. For $T\in \mathcal{T}$, we can show $z^T\in O_T\join(G_x)$ by
        following essentially the same arguments as Boyd and Seb\H{o}. This implies that $\chi^T+z^T\in \tsp(K_n)$. Therefore, $p=\sum_{T\in \mathcal{T}} \lambda_T(\chi^T+z^T)\in \tsp(K_n)$. In addition, we have $\chi^H\in \tsp(K_n)$. We conclude that $y=\frac{1}{4}\chi^H+ \frac{3}{4}p \in \tsp(K_n)$.
\end{proof}

Since the proof of Theorem \ref{boydsebo-mod} is essentially the same
as that in \cite{boydsebo}, it does not seem to extend to
$\theta$-cyclic points in which the fractional edges form $4$-cycles.


For $\frac{2}{4}$-uniform points it would be interesting to find a way
to apply the gluing approach, perhaps yielding further improvements to
the factors presented in \cite{karlin2019improved} and
\cite{gupta2021matroid}.  For $\frac{2}{k}$-uniform points with $k\geq
5$ nothing is known for TSP or 2EC beyond Christofides
$\frac{3}{2}$-approximation and the tiny step below $\frac{3}{2}$
implied by the recent work of Karlin et al.

\paragraph{Acknowledgements} We thank R. Ravi for his comments on the
presentation of this paper. The work of A. Haddadan is supported by
the U.S. Office of Naval Research under award number N00014-18-1-2099 and the U. S. National Science Foundation under award number CCF-1527032. The work of A. Newman is supported in part by IDEX-IRS SACRE.

\bibliographystyle{abbrv}
\bibliography{tspjournal-ref}

\end{document}